\documentclass[11pt]{article}
\usepackage{default}

\usepackage[margin=1in]{geometry}
\usepackage{amsmath,amssymb}
\usepackage{array}
\usepackage{hyperref}

\mathtoolsset{showonlyrefs}

\title{Resolving the Edge of a Quantum Pyramid}
\author{Alvan Arulandu\\Harvard University\\\href{mailto:aarulandu@college.harvard.edu}{\texttt{aarulandu@college.harvard.edu}}}

\begin{document}
\maketitle
\begin{abstract}
   Standing on the shoulders of giants, we resolve the quantum pyramids conjecture, confirming the globally information-optimal measurement for an ensemble of equiangular equiprobable pure states \citep{englert2010well}. We do so by proving the remaining entropy inequalities from \citep{holevo2025quantum} which certify optimality for obtuse and flat pyramids.  For obtuse pyramids, our key contribution is a rigorous proof that local minimizers of the corresponding entropy inequality can not have three distinct coordinate values. We show that eliminating this family can be reduced to a neat algebraic reciprocal inequality relating branches of the Lambert $W$ function, which may be of independent interest.
   For flat pyramids, we specifically prove a tight $\ell^p$ inequality for zero-sum vectors which was recently conjectured, analytically proved for dimension $d=3$, and computationally verified for $d\leq 200$ by \citep{holevo2026conjecturetightnorminequality}. We prove this bound for all $d\geq 2$ via a technique in symmetric inequalities known as the equal variables method.
\end{abstract}

\thispagestyle{empty}


\tableofcontents
\setcounter{page}{0}
\thispagestyle{empty}

\newpage

\section{Introduction}
The story of the quantum pyramid begins with a fundamental question in quantum information:
\begin{center}
    \textit{What is the maximal information that a measurement can extract \\from an ensemble of quantum states? }
\end{center}
This basic quantity, called the \textit{accessible information}, characterizes the ability of two parties to communicate over a perfect quantum channel using a particular ensemble of quantum states. Namely, suppose Alice chooses a message $j\in[m]$, encodes it as the quantum state $\rho_j $, and sends the state to Bob, who applies some measurement $\c M=\{M_k\}_{k\in [n]}$ to arrive at a decoded message $k\in [n]$. If the probability distribution $\{\pi_j\}_{j\in [m]}$ over messages as well as the collection of states $\{\rho_j\}_{j\in [m]}$ is known by both Alice and Bob, the accessible information of the \textit{quantum ensemble} $\c E =\{\pi_j,\rho_j\}_{j\in [m]}$ is simply the maximum Shannon mutual information between $j$ and $k$:
\begin{equation}\label{eq:A-sup}
    A({\c E})=\sup_{\c M }I({\c E},{\c M})
\end{equation}
Unfortunately, the accessible information is notoriously hard to calculate for an arbitrary ensemble, and progress has historically been limited to special symmetric ensembles. Even in these cases, the optimality of the hypothesized measurements is difficult to prove, with positive solutions only for commuting ensembles \citep{suzuki2008accessible}, binary ensembles \citep{levitin1995optimal, keil2024proof}, and highly symmetric ensembles such as the \textit{trine} and its siblings \citep{sasaki1999accessible}. 

However, breakthrough recent work considering suitable duals of \eqref{eq:A-sup} showed that the optimality criterion can be sufficiently expressed by certain entropy inequalities which tightly lower bound the classical Shannon information \citep{holevo2022optimization, holevo2025quantum}. For specific ensembles, this leads to non-trivial conjectured entropy inequalities that are ``distant relatives" of the famous log-Sobolev inequality which certify optimality. Referring the reader to \citep{holevo2025quantum} for a fuller discussion of this approach and its motivation, in this work, we prove the entropy inequalities arising from the accessible information problem for a particular class of symmetric pure states proposed by \citep{englert2010well}:
\begin{equation}\label{eq:pyramid-r01}
    \ket{\psi_j}\equiv \sqrt{mr_1}\ket{e_j}+(\sqrt{r_0}-\sqrt{r_1})\ket{e_0}\quad (j\in[m],r_0,r_1\geq 0,r_0+(m-1)r_1=1)
\end{equation}
where $\ket{e_0}\equiv \frac{1}{\sqrt m}\sum_{j=1}^m \ket{e_j}$. Equivalently,
\begin{equation}\label{eq:pyramid-ab}
    \ket{\psi_j}\equiv a\ket{e_j}+b\sum_{i\neq j}^m\ket{e_i}\quad (j\in[m], a\geq m^{-1/2}, a^2+(m-1)b^2=1)
\end{equation}
such that $r_0=(a+(m-1)b)^2/m$ and $r_1=(a-b)^2/m$ connect the two parameterizations. 
These equiangular equiprobable ensembles, called \textit{quantum pyramids}, are an important family related to orthogonal signals in communication theory \citep{helstrom1976quantum} and in fact contain ensembles such as the \textit{lifted trine} which was used by \citep{shor2002number} to argue that the number of POVM elements in the optimal measurement may exceed the number of states in the ensemble. Identifying each state as an edge of the quantum pyramid, we can classify the quantum pyramid ${\c E}=\left\{\frac{1}{m},\ketbra{\psi_j}\right\}_{j\in [m]}$ via the cosine of the angle between any two edges: $\xi\equiv \braket{\psi_j}{\psi_k}$ for $j\neq k$.
\begin{table}[H]
\centering
{
\renewcommand{\arraystretch}{1.75}
\begin{tabular}{c|c|c|c|c|c}
\textbf{Type} & \({\xi}\) & \({b}\) & \({a}\) & \(A({\c E})\)& \({\c M}^*\)\\ 
\hline
\text{ort} 
    & \(1\) 
    & \(\frac{1}{\sqrt m}\) 
    & \(\frac{1}{\sqrt m}\)
    & 0 & 
    \\
\text{orthogonal} 
    & \(0\) 
    & \(0\) 
    & $1$
    & \(\log_2 m\) & $\{\ketbra{e_k}\}_{k\in [m]}$\\
    
    \hline
\text{acute} 
    & \((0,1)\) 
    & \(\left(0,\frac{1}{\sqrt m}\right)\) 
    & \(\left(\frac 1 {\sqrt m},1\right)\)
    & \multicolumn{2}{c}{Corollary 1 of \citep{holevo2025quantum} }\\
\text{obtuse} 
    & \(\left(-\frac{1}{m-1},0\right)\) 
    & \(\left(-\frac{1}{\sqrt{m(m-1)}},0\right)\) 
    & \(\left(\sqrt{\frac{m-1}{m}},1\right)\)
    & \multicolumn{2}{c}{Theorem~\ref{thm:obtuse-59-67}}\\
\text{flat} 
    & \(-\frac{1}{m-1}\) 
    & \(-\frac{1}{\sqrt{m(m-1)}}\)
    & \(\sqrt{\frac{m-1}{m}}\)
    & \multicolumn{2}{c}{Theorem~\ref{thm:flat-main}}\\
\end{tabular}
}
\caption{Types of Quantum Pyramids}
\label{tab:pyramid-types}
\end{table}
We immediately see that ort pyramids have a single state and thus zero accessible information, while orthogonal pyramids are perfectly distinguishable with a computational basis measurement. This leaves three non-trivial regimes: acute, obtuse, and flat. 

Compressing the constraint in \eqref{eq:pyramid-ab} and the conditions in Table~\ref{tab:pyramid-types}, we introduce a single parameter $p$ such that:
\begin{equation}\label{eq:ab-p}
    a^2=p,\quad b^2=q\equiv \frac{1-p}{m-1}
\end{equation}
where we take $b=+\sqrt q$ and $\frac 1 m \leq p\leq 1$ for the acute regime and $b=-\sqrt q $ and $ \frac{m-1}{m}\leq p \leq 1 $ for the obtuse and flat regimes. Using this parameterization and their dual machinery, \citep{holevo2025quantum} prove the following entropy inequality to confirm the optimal measurement conjectured by \citep{englert2010well} for acute pyramids:
\begin{theorem}[Theorem 2 \citep{holevo2025quantum}]\label{thm:acute-ineq}
   For $m\geq 3$, $p\in \left(\frac{m-1}{m},1\right)$, and any $z\in \C^m$ with $\|z\|_2=1$,
   \begin{equation}
       -\sum_{j=1}^m|z_j|^2 \log_2 |z_j|^2 \geq \mu_0(p)\left|\sum_{j=1}^m z_j\right|^2-\mu_1(p)
   \end{equation}
   where 
   \begin{equation}
   \begin{aligned}
       \mu_0(p)&=\frac{\sqrt{pq}(\log_2 p-\log_2 q)}{(\sqrt p+(m-1)\sqrt q)(\sqrt p-\sqrt q)}\\
       \mu_1(p)&=\frac{\sqrt p\log_2 p-\sqrt q \log_2 q }{\sqrt p-\sqrt q }
   \end{aligned}
   \end{equation}
\end{theorem}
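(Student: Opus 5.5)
We first reduce to nonnegative real vectors. The left-hand side depends only on the moduli $|z_j|$, and by the triangle inequality $\left|\sum_j z_j\right|^2\le \left(\sum_j |z_j|\right)^2$. Since $\mu_0(p)>0$ for $p\in\left(\frac{m-1}{m},1\right)$, it suffices to prove the inequality for $x_j=|z_j|\ge 0$ with $\sum_j x_j^2=1$. Define
\begin{equation}
F(x)=-\sum_{j=1}^m x_j^2\log_2 x_j^2-\mu_0(p)\Big(\sum_{j=1}^m x_j\Big)^2 .
\end{equation}
The claim becomes $F\ge -\mu_1(p)$ on the compact set $\{x\ge 0,\ \|x\|_2=1\}$.

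The guiding observation is that equality should hold at the pyramid-type vector $x^*=(\sqrt p,\sqrt q,\dots,\sqrt q)$ and its permutations. Writing $S^*=\sqrt p+(m-1)\sqrt q$, a direct substitution $x^*$ into $F$ should give exactly $-\mu_1(p)$. Indeed, $\mu_0$ and $\mu_1$ are precisely the two constants forced by requiring $F(x^*)=-\mu_1$ together with the first-order (Lagrange) condition at $x^*$. I would verify both identities first as a sanity check.

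Next I would analyze critical points of $F$ restricted to the sphere. At an interior point (all $x_j>0$), the Lagrange condition reads
\begin{equation}
x_j\big(-\log_2 x_j^2-\lambda'\big)=\mu_0(p)\,S\qquad\text{for all }j,
\end{equation}
with $S=\sum_j x_j$ and $\lambda'$ absorbing the multiplier and the constant $1/\ln 2$. The function $g(t)=t(-\log_2 t^2-\lambda')$ is strictly concave on $(0,\infty)$, since $g''(t)=-2/(t\ln 2)<0$. Hence $g(t)=c$ has at most two positive roots, so every interior critical point takes at most two distinct values: $k$ coordinates equal to $u$ and $m-k$ equal to $v$, with $ku^2+(m-k)v^2=1$.

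Boundary points, where some coordinates vanish, are handled by the same argument in lower dimension. The derivative of $-t^2\log t^2$ vanishes at $0$, so zeros are genuinely possible. For these cases I would treat them as an $m'$-dimensional problem with $m'<m$, using the same $\mu_0(p),\mu_1(p)$. Alternatively, I would show directly that moving a little mass into a zero coordinate decreases $F$: the gain $-\mu_0\cdot 2S\,\varepsilon$ is linear in $\varepsilon$, while the entropy term changes only at order $\varepsilon^2\log\varepsilon$.

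This reduces the theorem to a family of one-variable inequalities, one for each $k\in\{1,\dots,m-1\}$ (plus the trivial case $u=v=1/\sqrt m$). We parametrize by $s=ku^2\in[0,1]$ and must show
\begin{equation}
h_k(s)=-s\log_2\frac{s}{k}-(1-s)\log_2\frac{1-s}{m-k}-\mu_0\Big(\sqrt{ks}+\sqrt{(m-k)(1-s)}\Big)^2+\mu_1\ \ge 0 .
\end{equation}
The case $k=1$ contains the equality point $s=p$, where $h_1$ has a double zero. I would prove $h_1\ge 0$ by showing that $h_1''$ changes sign at most a controlled number of times, so that $h_1$ has no other zeros. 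For $k\ge 2$, I would compare against $k=1$: either show $h_k$ is monotone in $k$ for fixed per-coordinate values, or bound $h_k$ from below using $\mu_0\le \mu_0^{\max}$ and the hypothesis $p>\frac{m-1}{m}$, which makes $\mu_0$ small relative to $\log_2 m$.

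The main obstacle I expect is this last step: ruling out two-valued configurations with $k\ge 2$ uniformly in $m\ge 3$ and $p$. Near $p\to 1$ the constants degenerate and the margins become thin. There I would first try an explicit inequality of the form $\mu_0(p)(\text{gain in }S^2)\le$ entropy gain, using concavity of $\log$ and the estimate $S^2\le k+\dots$ valid on each branch, and fall back on an asymptotic expansion in $1-p$ if the crude bound fails.
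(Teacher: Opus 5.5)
Up to the scalar reduction your argument is correct, and it follows the route the paper describes and attributes to Holevo et al. You pass to moduli using $\mu_0(p)>0$ and the triangle inequality. You use strict concavity of $g(t)=t(-\log_2 t^2-\lambda')$ to get at most two values. You then land on one-variable inequalities indexed by a multiplicity. Your first-order argument excluding zero coordinates is also fine.

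\textbf{The gap.} The paper calls the resulting one-variable problem soluble and defers it to the cited work. Your proposal leaves it open, and the fallback you sketch for $k\ge 2$ cannot work as stated, for two reasons.

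First, $(k,s)$ and $(m-k,1-s)$ describe the same vectors, so $h_{m-1}(s)=h_1(1-s)$ vanishes at $s=1-p$. The families you hope to dismiss by crude bounds therefore contain the equality case.

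Second, the uniform vector lies in every family (at $s=k/m$), and it is not a trivial case. There $h_k(k/m)=\log_2 m-m\mu_0(p)+\mu_1(p)$. At $p=\frac{m-1}{m}$ one has $\mu_0=\frac{\log_2(m-1)}{m-2}$ and $\mu_1=\frac{m}{m-2}\log_2(m-1)-\log_2 m$, so this quantity tends to $0$ as $p\downarrow\frac{m-1}{m}$. On this range both $\mu_0$ and $\mu_1$ decrease in $p$. Hence replacing $\mu_0(p)$ by $\sup\mu_0$ turns the uniform value into $\mu_1(p)-\mu_1(\frac{m-1}{m})<0$, so that estimate fails for every $p$. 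The thin margin sits at $p\downarrow\frac{m-1}{m}$, not only near $p\to 1$.

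\textbf{The missing idea.} What removes the multiplicity parameter entirely is the second-order condition, exactly as the paper uses it in the obtuse case (Lemmas~\ref{lemma:obtuse-shape} and~\ref{lemma:obtuse-shape-final}).
\begin{enumerate}
\item Take $\mathcal L=F-\lambda(\|x\|_2^2-1)$ and $\lambda'=\lambda+1/\ln 2$. Then $\nabla^2\mathcal L=\mathrm{diag}(2g'(x_j))-2\mu_0\mathbf 1\mathbf 1^\top$.
\item If $x_i=x_k=w$, then $y=e_i-e_k$ is tangent to the sphere and has zero sum. A local minimizer therefore needs $y^\top\nabla^2\mathcal L\,y=4g'(w)\ge 0$.
\item With two distinct values $v<u$, the larger root of the strictly concave equation $g(t)=\mu_0 S$ satisfies $g'(u)<0$. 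So the larger value occurs exactly once.
\end{enumerate}
Only the curve $(u,v,\dots,v)$ with $u\ge v$ survives, i.e.\ $h_1$ on $s\in[1/m,1)$.

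Your sign-change idea then closes this case:
\begin{itemize}
\item $h_1'(1/m)=0$, because the uniform vector is critical by symmetry.
\item $h_1(p)=h_1'(p)=0$.
\item $h_1''$ has the sign of $\frac{\ln 2}{2}\mu_0\sqrt{m-1}-\sqrt{s(1-s)}$, so $h_1$ is at most convex--concave--convex on $[1/m,1)$.
\end{itemize}
If $h_1(1/m)>0$, this shape forces $p$ into the last convex piece and gives $h_1\ge 0$ throughout.

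Finally, $h_1(1/m)>0$ follows from the envelope identity $\mu_1'(p)=\mu_0'(p)\,S(x^*)^2$, the acute analogue of \eqref{eq:thm-obtuse-59-c1}. It comes from differentiating $F(x^*(p))=-\mu_1(p)$ and using criticality of $x^*$. It yields $\frac{d}{dp}h_1(1/m)=\mu_0'(p)\,(S(x^*)^2-m)>0$. Together with $h_1(1/m)=0$ at $p=\frac{m-1}{m}$, this gives $h_1(1/m)>0$ on the whole range.
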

\begin{proof}
    Since $\mu_0(p)\geq 0$, $\left|\sum_{j=1}^mz_j\right|^2\leq \left(\sum_{j=1}^m \sqrt{t_j}\right)^2 $ for $t_j\equiv |z_j|^2$. Thus, it suffices to prove the following bound on the probabilities $\{t_j\}$:
    \begin{equation}\label{eq:acute-ineq-t}
        -\sum_{j=1}^m t_j \log_2 t_j -\mu_0(p)\left(\sum_{j=1}^m \sqrt{t_j}\right)^2\geq -\mu_1(p)
    \end{equation}
    Theorem 2 of \citep{holevo2025quantum} does exactly this. Considering the minimizers of the left hand side of \eqref{eq:acute-ineq-t}, \citep{holevo2025quantum} use Lagrange multipliers to show that every extrema must have at most two coordinate values, reducing \eqref{eq:acute-ineq-t} to a soluble single variable inequality in $q$ with one additional discrete multiplicity parameter. 
\end{proof}
By the dual arguments in \citep{holevo2025quantum}, Theorem~\ref{thm:acute-ineq} implies that the following measurement is optimal. 
\begin{corollary}
   Consider the observable family $\c M =\{M_k\}_{k=0}^m$ parameterized by $t\in [0,1]$ such that $M_k=\ketbra{\Tilde e_k}$ where:
   \begin{equation}\label{eq:acute-M}
   \ket{\Tilde e_0}\equiv \sqrt{1-t^2}\ket{e_0},\quad 
       \ket{\Tilde e_k}\equiv \ket{e_k}+\frac{t-1}{\sqrt m}\ket{e_0}\quad (k\in [m])
   \end{equation}
   Then, for acute pyramids, the global information-optimal measurement is in this family with parameter $t=t^*_{A}$:
   \begin{equation}
       t^*_A(m)\equiv \begin{cases}
           1 & p\in \left[\frac{m-1}{m},1\right)\\
           \frac{2(m-1)}{m-2}\cdot \frac{\sqrt p-\sqrt q}{\sqrt p+(m-1)\sqrt q} & p\in \left(\frac 1 m ,\frac{m-1}{m}\right]
       \end{cases}
   \end{equation}
   This yields the accessible information:
   \begin{equation}
       A({\c E})=\begin{cases}
           I_{\rm SRM} & p\in \left[\frac{m-1}{m},1\right)\\
           \frac{m-1}{m-2}\log_2(m-1)(\sqrt p-\sqrt q)^2 &  p\in \left(\frac 1 m ,\frac{m-1}{m}\right]
       \end{cases}
   \end{equation}
   where 
   \begin{equation}
       I_{\rm SRM}\equiv \log_2 m  +p\log_2 p+(m-1)q\log_2 q 
   \end{equation}
\end{corollary}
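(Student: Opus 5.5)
The plan is to apply the dual optimality criterion of \citep{holevo2025quantum}: a candidate measurement $\c M$ is globally optimal exactly when the associated ``entropy inequality'' holds for every unit vector $z\in\C^m$. The corollary then splits according to which $p$-regime we are in.

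\textbf{Step 1: compute the information of the candidate family.} Fix $t\in[0,1]$ and compute the outcome probabilities
\[
P(k\mid j)=\frac{1}{m}\cdot m\,|\braket{\Tilde e_k}{\psi_j}|^2 .
\]
Using \eqref{eq:pyramid-r01}, we have $\braket{e_0}{\psi_j}=\sqrt{r_0}$ and $\braket{e_k}{\psi_j}=\sqrt{r_1}\,\sqrt m\,\delta_{jk}+(\sqrt{r_0}-\sqrt{r_1})/\sqrt m$. From these, the outcome-$0$ probability is $(1-t^2)r_0$, and each outcome $k\geq1$ is a two-valued distribution over $j$. This gives a closed form for $I({\c E},{\c M}_t)$ as a function of $t$.

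At $t=1$ the element $\ket{\Tilde e_0}$ vanishes and $\ket{\Tilde e_k}=\ket{e_k}$, so the measurement is the square-root (pretty good) measurement. The resulting $P(k\mid j)$ equals $p$ on the diagonal and $q$ off it, which yields $I_{\rm SRM}$ directly.

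For $p<\frac{m-1}{m}$, differentiate $I({\c E},{\c M}_t)$ in $t$ and solve the stationarity equation. We expect this to give the stated $t^*_A$ and, after simplification with $r_0=(\sqrt p+(m-1)\sqrt q)^2/m$ and $r_1=(\sqrt p-\sqrt q)^2/m$, the value $\frac{m-1}{m-2}\log_2(m-1)(\sqrt p-\sqrt q)^2$. This gives the lower bound $A({\c E})\geq I({\c E},{\c M}_{t^*_A})$.

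\textbf{Step 2: the matching upper bound.} Here we invoke the dual machinery of \citep{holevo2025quantum}. For a pure-state ensemble, $A({\c E})\le C$ holds if and only if an inequality of the form
\[
-\sum_j |\langle \psi_j|\phi\rangle|^2\log_2|\langle\psi_j|\phi\rangle|^2 \ \ge\ (\text{affine in } |\phi|^2)
\]
holds for all unit $\phi$. We substitute $\phi$ in terms of the Gram-matrix square root, with $z_j$ its coordinates; the $\ell^2$ normalization becomes $\|z\|_2=1$. The rank-one part along $e_0$ then appears as $|\sum_j z_j|^2$, with coefficients $\mu_0,\mu_1$ determined by $p$ and $q$.

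On the regime $p\in\left[\frac{m-1}{m},1\right)$ this inequality is exactly Theorem~\ref{thm:acute-ineq}. Combined with the dual theorem of \citep{holevo2025quantum}, it certifies that the SRM value $I_{\rm SRM}$ is optimal. On the regime $p\in\left(\frac1m,\frac{m-1}{m}\right]$ we would cite the analogous certified inequality from the same paper, namely Corollary 1 of \citep{holevo2025quantum}, as the table indicates.

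\textbf{Step 3: continuity at the boundary.} Finally, check that at $p=\frac{m-1}{m}$ the two formulas for $t^*_A$ agree. There $q=\frac{1}{m^2}$, so $\sqrt p-\sqrt q$ and $\sqrt p+(m-1)\sqrt q$ give ratio $\frac{m-2}{2(m-1)}$, hence $t^*_A=1$, and the two expressions for $A$ coincide.

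The main obstacle is Step 2: correctly matching the dual criterion's entropy inequality to the form in Theorem~\ref{thm:acute-ineq}, including the signs and the identification of $\mu_0,\mu_1$. Everything else reduces to routine algebra.
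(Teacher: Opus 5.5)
Your proposal is correct and is essentially the paper's argument. The paper also derives this corollary solely by feeding Theorem~\ref{thm:acute-ineq} into the dual criterion of \citep{holevo2025quantum}; the $t^*_A<1$ branch is Corollary 1 there, certified by the $p\to\frac{m-1}{m}$ endpoint of that inequality, which is tight both at the uniform vector and at $(\sqrt p,\sqrt q,\dots,\sqrt q)$. Your anticipated Step 1 stationarity computation also checks out. The only slip is in Step 3: at $p=\frac{m-1}{m}$ one has $q=\frac{1}{m(m-1)}$ rather than $\frac{1}{m^2}$, though the ratio $\frac{m-2}{2(m-1)}$, and hence $t^*_A=1$, is correct.
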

When $t^*(m)=1$, $\c M =\{\ketbra{e_k}\}_{k\in[m]}$ is called the square-root measurement (SRM) since one can write:
\begin{equation}
    \ket{e_j}=\frac 1 {\sqrt m}\bar \rho ^{-1/2}\ket{\psi_j }=\sum_{k=1}^{m}(G^{-1/2})_{kj}\ket{\psi_k}
\end{equation}
where $\bar \rho \equiv \frac 1 m \sum_{j=1}^m \ketbra{\psi_j}$ is the \textit{average state} and $G_{ij}\equiv \braket{\psi_i}{\psi_j}$ is the Gram matrix. This measurement is also optimal for the Bayes discrimination problem for $\c E $ \citep{helstrom1976quantum}. However, in the strongly acute regime, the square-root measurement ceases to be optimal. 

Unfortunately, the obtuse and flat regimes are much more difficult. While we would like to prove the necessary entropy inequality in a similar manner to the proof of Theorem~\ref{thm:acute-ineq} in the acute case, we are unable to pass to probabilities $\{t_j\}$ since $b<0$. This makes analyzing the family of minimizers much more challenging. In fact, \citep{holevo2025quantum} remark that proving these inequalities via direct optimization is challenging and instead give proof suggestions that enable numerical simulations of their claims for $m\leq 50$. 
\subsection{Main Results}
Nevertheless, we show that direct optimization is possible with some persistence. In particular, we prove the following entropy inequality for the obtuse regime. 

\begin{theorem}\label{thm:obtuse-59-67}
   Let $m\geq 3$ and $z\in \C^m $ satisfy $\|z\|_2=1$. If $m\geq 7$ and $p\in \left(\frac{m-1}{m},1\right)$,
   \begin{equation}\label{eq:thm-obtuse-59}
       -\sum_{j=1}^m |z_j|^2 \log_2 |z_j|^2 \geq \Tilde \mu_0(p)\left|\sum_{j=1}^m z_j\right|^2 -\Tilde \mu_1(p)
   \end{equation}
   where 
   \begin{equation}
   \begin{aligned}
       \Tilde \mu_0(p)&=\frac{\sqrt{pq}\left(\log_2 p-\log_2 q\right)}{\left(-\sqrt p+(m-1)\sqrt{q}\right)\left(\sqrt p+\sqrt{q}\right)}\\
       \Tilde \mu_1(p)&=\frac{\sqrt p\log_2 p+\sqrt{q}\log_2 q}{\sqrt p+\sqrt{q}}
   \end{aligned}
   \end{equation}
   where equality is attained by $(-\sqrt{q},\cdots,-\sqrt{q},\sqrt p)$, up to permutation and global phase. If $m\leq 6$, let $\tau_o(m)$ be the unique root in $[0,1]$ of the function:
   \begin{equation}
       g(\tau)\equiv \log_2 \frac{m(m-1+\tau^2)}{2}-\frac{m-1+\tau}{m}\log_2(m-1+\tau)^2-\frac{1-\tau}{m}\log_2(1-\tau)^2
   \end{equation}
   and define:
   \begin{equation}
       p(m)\equiv \frac{(m-1+\tau_o(m))^2}{m(m-1+\tau_o(m)^2)}
   \end{equation}
   Then, \eqref{eq:thm-obtuse-59} also holds for $p\in [p(m),1)$ with the same equality family for $p\in (p(m),1)$ and the additional family $(2^{-1/2},-2^{-1/2},0,\dots,0)$, up to permutation and global phase, for $p=p(m)$. In particular, we have the following fixed inequality:
   \begin{equation}\label{eq:thm-obtuse-67}
       -\sum_{j=1}^m |z_j|^2 \log_2 |z_j|^2 \geq \frac{(\tau_o(m)^{-1}-1)}{m}\log_2 \frac{2(\tau_o(m)-1)^2}{m(m-1+\tau_o(m)^2)}\left|\sum_{j=1}^mz_j\right|^2+1
   \end{equation}
   where equality is attained by both $(2^{-1/2},-2^{-1/2},0,\dots,0)$ and $(-\sqrt{q'},\cdots,-\sqrt{q'},\sqrt{p'})$ with $p'\equiv p(m)$ and $q'\equiv \frac{1-p(m)}{m-1}$, up to permutation and global phase.
\end{theorem}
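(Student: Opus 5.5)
We minimize $F(z)\equiv -\sum_j |z_j|^2\log_2|z_j|^2-\Tilde\mu_0(p)\left|\sum_j z_j\right|^2$ over the unit sphere and show the minimum equals $-\Tilde\mu_1(p)$. Since $\Tilde\mu_0(p)\geq 0$ in the obtuse regime ($-\sqrt p+(m-1)\sqrt q>0$ there), a global phase rotation together with the triangle inequality lets us assume $\sum_j z_j$ is real and nonnegative. Replacing each $z_j$ by its real part, rescaled, does not increase $F$, so we reduce to real $z$. Unlike the acute case, the signs cannot be discarded, so the domain is the real unit sphere $\|x\|_2=1$.

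At an interior critical point the Lagrange condition reads
\[
-2x_j(\log_2 x_j^2+\log_2 e)-2\Tilde\mu_0 S=2\lambda x_j,\qquad S=\sum_j x_j .
\]
Every nonzero coordinate therefore solves $x\log x^2+cx+d=0$ for common constants $c,d$. The function $x\mapsto x\log x^2+cx$ is odd-like with a single inflection point at $0$, so this equation has at most three real roots. Hence a minimizer takes at most three distinct nonzero values, plus possibly zeros. Zero coordinates are handled by noting that $-t\log t$ has infinite slope at $0$: moving mass onto a zero coordinate decreases $F$ unless the configuration is degenerate. I expect the only surviving degenerate configuration to be the antipodal pair $(2^{-1/2},-2^{-1/2},0,\dots,0)$, which has $S=0$ and $F=1$. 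Comparing this value with $-\Tilde\mu_1(p)$ gives the threshold $p(m)$ through the function $g(\tau)$; the parametrization $\tau$ is chosen so that the two-value candidate of ratio $-\tau$ has the stated $p$. For $m\ge7$ the inequality $1\ge-\Tilde\mu_1$ should hold on the whole range, while for $m\le6$ it holds only for $p\ge p(m)$. At $p=p(m)$ this yields the two equality families and inequality \eqref{eq:thm-obtuse-67}.

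\textbf{Two-value step.} Configurations with $k$ coordinates equal to $u$ and $m-k$ equal to $v$ reduce to a one-parameter inequality with a discrete multiplicity $k$. I would treat these as in Theorem~\ref{thm:acute-ineq}: verify directly that $k=1$, $u=\sqrt p$, $v=-\sqrt q$ is the minimum and that the constants $\Tilde\mu_0,\Tilde\mu_1$ are exactly those making this point both a tangency and an equality. Other values of $k$ are excluded by convexity or monotonicity in the ratio $u/v$.

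\textbf{Main obstacle: three distinct values.} The hard step is ruling out minimizers with three distinct values. Substituting $x=\pm e^{w}$, the roots of $x\log x^2+cx+d=0$ are expressible through the branches $W_0$ and $W_{-1}$ of the Lambert $W$ function: two roots share a sign and come from the two branches, and the third has the opposite sign. I would then use second-order conditions. At a local minimum the Hessian of the Lagrangian must be positive semidefinite on the tangent space of the sphere. Its diagonal entries involve $-\log x_j^2$ plus constants, and the rank-one term $-\Tilde\mu_0\mathbf 1\mathbf 1^T$ contributes at most one negative direction. Testing vectors supported on the coordinates carrying the two same-sign roots gives a necessary condition that should contradict an inequality between $W_0(y)$ and $W_{-1}(y)$, plausibly a reciprocal-type bound such as $W_0(y)W_{-1}(y)$ compared with an expression in $W_0(y)+W_{-1}(y)$. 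Proving that Lambert-$W$ inequality over $y\in(-1/e,0)$, by parametrizing both branches through a common variable $s$ with $y=se^{s}$, is where I expect most of the difficulty to lie.
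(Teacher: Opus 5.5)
\textbf{The sign of $\tilde\mu_0$.} Your opening claim is false. For $p\in(\frac{m-1}{m},1)$ one has $\sqrt p>(m-1)\sqrt q$, with equality exactly at $p=\frac{m-1}{m}$. So $-\sqrt p+(m-1)\sqrt q<0$ while $\log_2 p-\log_2 q>0$, hence $\tilde\mu_0(p)<0$.

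The functional you are minimizing is therefore $H(z)+|\tilde\mu_0|\,|S|^2$, which penalizes $|S|$. That is exactly why sign cancellations matter. If $\tilde\mu_0\ge 0$ were true, $|S|\le\sum_j|z_j|$ would let you pass to $t_j=|z_j|^2$ as in the acute case, contradicting your own next sentence.

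The sign also flips your Hessian heuristic. With $\tilde\nu_0=-\tilde\mu_0\ln 2>0$, the rank-one part is $+2\tilde\nu_0\mathbf 1\mathbf 1^\top$, which is positive semidefinite. It can only remove a negative direction, never supply one.

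Your zero-coordinate mechanism is backwards too: placing mass on a zero coordinate raises the entropy. The working argument is the first-order condition along $e_k$, which forces $\tilde\nu_0S(x)=0$. The Lagrange equation then makes all nonzero $|x_j|$ equal, and $F=\log_2(2r)$ is minimized at $r=1$.

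For the real reduction, concavity gives $F(z)\ge\lambda F(\hat x)+(1-\lambda)F(\hat y)$ for the rescaled real and imaginary parts. So one of them is at least as good, but not necessarily the real part.

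\textbf{The three-value step fails as proposed.} Normalize $S>0$. The values are then $-s_0$ (multiplicity $k$), $-s_1$ and $s_2>0$. With $r=s_1/s_0>1$ and $s=s_2/s_0$, subtracting the Lagrange equations gives $c:=\tilde\nu_0S/s_0=2r\ln r/(r-1)>2$.

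The diagonal Hessian entries are $2(c-2)>0$ on the $-s_0$ block, $-2(2-c/r)<0$ at $-s_1$, and $-2(2+c/s)<0$ at $s_2$. So the two negatively curved coordinates carry \emph{opposite} signs, while the other same-sign block is positively curved.

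On tangent vectors supported on the two same-sign groups, the form splits into within-block differences, which are positive, and the single direction $y=(u,\dots,u,-ku/r,0)$. Its value is $u^2[2k(c-2)-2k^2(2-c/r)/r^2+2\tilde\nu_0k^2(1-1/r)^2]$. For $m=3$ (so $k=1$) this is $\approx(14/3+2\tilde\nu_0)(r-1)^2u^2>0$ as $r\to1^+$. It is also $\sim4k\ln r\,u^2>0$ as $r\to\infty$. So your test vectors give no contradiction in these regimes.

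Two ingredients are needed instead.
\begin{itemize}
\item First, the multiplicity reductions by second-order tests with $S(y)=0$ (e.g.\ $y=e_i-e_j$). These exclude one-signed points and show that the positive value and the larger negative value each occur once. This, not ``monotonicity in the ratio'', is also what disposes of other $k$ in your two-value step. After that, your tangency idea works via convexity of the entropy as a function of $S^2$ along the curve $(-\sqrt{q'},\dots,-\sqrt{q'},\sqrt{p'})$.
\item Second, the two-dimensional space of group-constant tangent vectors $(u,\dots,u,v,w)$ with $w=(ku+rv)/s$, which touches all three groups. On it one proves the $2\times2$ form has negative determinant.
\end{itemize}

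With $\kappa=\ln r/(r-1)$, $\alpha=\kappa r/s$ and $\beta=\kappa r$, that determinant condition becomes $\frac{1}{\kappa(1-\kappa)}<\frac{1}{\alpha(1+\alpha)}+\frac{1}{\beta(\beta-1)}$ under $\alpha e^{\alpha}=\kappa e^{-\kappa}=\beta e^{-\beta}$. For $y=-\kappa e^{-\kappa}$ this involves $W_0(-y)$, coming from the opposite-sign root, alongside $W_0(y)$ and $W_{-1}(y)$. It is not a relation between the two negative-argument branches alone.

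Finally, the comparison $1\ge-\tilde\mu_1(p)$ is asserted rather than proved. You need three things:
\begin{itemize}
\item monotonicity of $\tilde\mu_1$ in $p$, e.g.\ from the envelope identity $\frac{d\tilde\nu_1}{dp}=S_p^2\frac{d\tilde\nu_0}{dp}$ with $S_p=\sqrt p-(m-1)\sqrt q$ and $\tilde\nu_0$ decreasing;
\item the identification $\tilde\mu_1(p(m))=-1$;
\item for $m\ge7$, the endpoint bound $\ln m-\frac{m-2}{m}\ln(m-1)<\ln 2$.
\end{itemize}
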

Note that the sign differences in the definition of $\Tilde \mu_0,\Tilde \mu_1$ for the obtuse case versus $\mu_0,\mu_1$ in the acute case are due to the change from $b=-\sqrt q $ to $b=+\sqrt q$ respectively. As before, the dual argument of \citep{holevo2025quantum} then implies the following optimal measurement.
\begin{corollary}\label{corr:obtuse-info}
   For $m\geq 3$, consider the observable family $\c M=\{M_k\}_{k\in [m]}\cup \{M_{ij}\}_{1\leq i<j\leq m}$ parameterized by $t\geq 1$ such that:
   \begin{equation}\label{eq:M-k-ij}
       M_k\equiv \frac 1 {t^2}\ketbra{\Tilde e_k},\quad M_{ij}\equiv \frac 2 m \left(1-\frac 1 {t^2}\right)\ketbra{ij}\quad (k\in[m],1\leq i<j\leq m)
   \end{equation}
   where $ \ket{ij}\equiv \frac 1 {\sqrt 2}(\ket{e_i}-\ket{e_j})$ and $\ket{\Tilde e_k}$ are given in \eqref{eq:acute-M}. Then, for obtuse pyramids, the global information-optimal measurement is in this family with parameter $t=t^*_{O}$:
   \begin{equation}
       t^*_O(m)\equiv \begin{cases}
          \tau_o(m)\cdot \frac{\sqrt p+\sqrt q}{\sqrt p-(m-1)\sqrt q} & 3\leq m \leq 6\text{ and }p\in \left(\frac{m-1}{m},p(m)\right]\\
          1 & \text{o.w.}
       \end{cases}
   \end{equation}
   where $\tau_o(m)$ is defined in Theorem~\ref{thm:obtuse-59-67}. This yields the accessible information:
   \begin{equation}
       A(\mathcal E )=\begin{cases}
       \log_2 (m/2)-(\sqrt p-(m-1)\sqrt q)^2\Tilde \mu_0(p(m)) & 3\leq m \leq 6 \text{ and } p\in \left(\frac{m-1}{m},p(m)\right]\\
       I_{\rm SRM} & \text{o.w.}
       \end{cases}
   \end{equation}
\end{corollary}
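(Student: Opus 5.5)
The plan is to deduce the corollary from Theorem~\ref{thm:obtuse-59-67} through the duality criterion of \citep{holevo2025quantum}, in the same way the acute corollary follows from Theorem~\ref{thm:acute-ineq}. There are three steps: check that the proposed family is a POVM, identify its elements with the equality cases of the relevant entropy inequality, and then evaluate the mutual information.

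\textbf{Step 1: the family is a POVM for every $t\ge 1$.} Write $P_0=\ketbra{e_0}$. Since $\ket{e_0}$ is proportional to $\sum_k \ket{e_k}$, we have
\begin{equation}
\sum_{k=1}^m \ketbra{\Tilde e_k} = I + (t^2-1)P_0 .
\end{equation}
We also have
\begin{equation}
\sum_{i<j}\ketbra{ij} = \tfrac12\,(mI-J) = \tfrac m2\,(I-P_0),
\end{equation}
where $J$ is the all-ones matrix. With the weights $1/t^2$ and $\frac 2m(1-1/t^2)$, the two sums add to $I/t^2+(1-1/t^2)P_0+(1-1/t^2)(I-P_0)=I$. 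Positivity of the weights requires $t\ge 1$. Note that no $\ket{\Tilde e_0}$ element appears, because it would carry the vanishing weight $1-t^2$ rescaled.

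\textbf{Step 2: match the POVM elements to equality cases.} Holevo's dual criterion says the following. For an equiprobable pure ensemble with Gram matrix $G$, a rank-one POVM $\{\ketbra{\phi_\ell}\}$ is optimal if the entropy inequality associated with the ensemble holds for all unit $z$, with equality at the vectors $z^{(\ell)}\propto(\braket{\psi_j}{\phi_\ell})_j$. Writing $\ket{\psi_j}=G^{1/2}$-transforms of $\ket{e_j}$, this inequality becomes exactly \eqref{eq:thm-obtuse-59}, with multipliers $\Tilde\mu_0,\Tilde\mu_1$ determined by $p$ (this parallels the acute case with $b\to-b$).

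The overlaps are as follows. The overlap $\braket{\psi_j}{ij}$ is $\pm(a-b)/\sqrt2$ for the two indices $i,j$ and $0$ otherwise, so after normalization it is the family $(2^{-1/2},-2^{-1/2},0,\dots,0)$. The overlaps $\braket{\psi_j}{\Tilde e_k}$ take one value at $j=k$ and a common value elsewhere. For $t=1$ they are proportional to $(-\sqrt q,\dots,-\sqrt q,\sqrt p)$.

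\textbf{The two regimes.}
\begin{itemize}
\item In the regime $p\ge p(m)$, or $m\ge7$, take $t=1$. The $M_{ij}$ then vanish, and Theorem~\ref{thm:obtuse-59-67} provides the inequality with equality at the SRM vectors. This gives $A=I_{\rm SRM}$, computed directly from $|\braket{e_k}{\psi_j}|^2\in\{p,q\}$.
\item In the regime $3\le m\le 6$ and $p\in(\frac{m-1}{m},p(m)]$, use the fixed inequality \eqref{eq:thm-obtuse-67}. The obstacle is to show that, after pulling it back through the linear map relating the $p(m)$-pyramid to the $p$-pyramid, it is the dual inequality for the current $p$, with equality at both families simultaneously. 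Since only $|\sum_j z_j|^2$, i.e.\ the $\ket{e_0}$ component, changes under this map, I would rescale the $\ket{e_0}$-direction by the ratio $\frac{\sqrt p-(m-1)\sqrt q}{\sqrt p+\sqrt q}$ divided by the same ratio at $p(m)$. I would then choose $t$ so that the normalized overlaps of $\ket{\Tilde e_k}$ become $(-\sqrt{q'},\dots,\sqrt{p'})$. Solving this condition should give precisely $t^*_O=\tau_o(m)\frac{\sqrt p+\sqrt q}{\sqrt p-(m-1)\sqrt q}$, and I must verify that $t^*_O\ge1$ on this interval.
\end{itemize}

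\textbf{Step 3: compute the accessible information.} With the optimal $t$ fixed, compute $I(\mathcal E,\mathcal M)$ as $\log_2 m$ minus the average output entropy. The $M_{ij}$ outcomes contribute the $\log_2(m/2)$ term. Equality in \eqref{eq:thm-obtuse-67} lets one collapse the entropy terms, via the multiplier, to $-(\sqrt p-(m-1)\sqrt q)^2\Tilde\mu_0(p(m))$.

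I expect the main difficulty to be the reparametrization bookkeeping in the small-$m$ regime, namely confirming that one fixed inequality certifies every $p$ in the interval. I would first verify that bookkeeping at the endpoint $p=p(m)$, where $t^*_O=1$ and both formulas must agree.
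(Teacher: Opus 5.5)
Your route is the paper's. The paper gives no argument beyond applying the dual criterion of \citep{holevo2025quantum} to Theorem~\ref{thm:obtuse-59-67}. Your completeness check, the overlap computations, and the condition that the normalized overlaps $(\braket{\psi_j}{\Tilde e_k})_j$ be proportional to $(-\sqrt{q'},\dots,-\sqrt{q'},\sqrt{p'})$ are all correct. That condition does give $t=\tau_o(m)\frac{\sqrt p+\sqrt q}{\sqrt p-(m-1)\sqrt q}$. Moreover $t\geq 1$ on the interval, because $\frac{\sqrt p+\sqrt q}{\sqrt p-(m-1)\sqrt q}$ decreases in $p$ and equals $1/\tau_o(m)$ at $p=p(m)$.

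What is wrong is the step you single out as the main obstacle, together with the Step~2 description that the certificate ``becomes exactly \eqref{eq:thm-obtuse-59}, with multipliers determined by $p$.''

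\emph{The certificate does not depend on $p$.} It is any inequality $-\sum_j|z_j|^2\log_2|z_j|^2\geq \nu_0|S(z)|^2-\nu_1$ valid for all unit $z$. The ensemble enters in only two places:
\begin{itemize}
\item the normalized overlap vectors of the POVM elements must be equality cases (completeness then says $\bar\rho$ decomposes over them);
\item the resulting value is $A=\log_2 m-\nu_0\,(\sqrt p-(m-1)\sqrt q)^2+\nu_1$, which uses $\sum_\ell w_\ell|S(z^{(\ell)})|^2=m r_0$, with $w_\ell$ the outcome probabilities.
\end{itemize}
Hence \eqref{eq:thm-obtuse-67}, unchanged, certifies every $p\in(\frac{m-1}{m},p(m)]$, and nothing has to be pulled back.

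\emph{Taken literally, your obstacle cannot be cleared.} The ``dual inequality for the current $p$,'' namely \eqref{eq:thm-obtuse-59} at $p$, is false for $p<p(m)$. At the pair vector it reads $1\geq-\Tilde\mu_1(p)$. But $-\Tilde\mu_1(p)=\Tilde\nu_1(p)/\ln 2>1$ there, since $\Tilde\nu_1$ is decreasing with $\Tilde\nu_1(p(m))=\ln 2$. Also, rescaling the $\ket{e_0}$-component of $z$ shifts every $z_j$ by the same amount, so it does not change only $|S(z)|^2$. The rescaling belongs on the measurement vectors, which is exactly your $t$-matching.

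\emph{A smaller point in Step~3.} The $-1$ in $\log_2(m/2)=\log_2 m-1$ comes from the constant of \eqref{eq:thm-obtuse-67} summed over all outcomes, the $M_k$ included. It does not come from the $M_{ij}$ alone.
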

Indeed, the lifted trine of \citep{shor2002number}, for which the square-root measurement is sub-optimal, is a nearly-flat obtuse pyramid with $m=3$ and $0<r_0<\bar r_0\approx 0.0614$. By \eqref{eq:pyramid-r01}, \eqref{eq:pyramid-ab}, \eqref{eq:ab-p}, it follows that $p=\frac 1 m \left(\sqrt r_0+(m-1)\sqrt{\frac{1-r_0}{m-1}}\right)^2$, meaning the corresponding threshold for $p$ is $\approx 0.8725$, which is exactly the value of $p(3)$ computed numerically in \citep{holevo2025quantum}.

While flat pyramids are a limiting case of the obtuse regime, the standard dual arguments of \citep{holevo2025quantum} degenerate at this limit. Fortunately, \citep{holevo2025quantum} showed that it suffices to establish the following tight entropy inequality to certify optimality. 
\begin{theorem}\label{thm:flat-ineq}
Let $m\geq 3$ and $z\in \C^m $ with $\|z\|_2=1$ and $S(z)=0$, where $S(z)\equiv \sum_{j=1}^m z_j$. Then, 
\begin{equation}\label{eq:flat-ineq}
    -\sum_{j=1}^m |z_j|^2 \log_2 |z_j|^2 \geq \begin{cases}
        1 & m \leq 6\\
        \log_2 m -\frac{m-2}{m}\log_2(m-1) & m \geq 7
    \end{cases}
\end{equation}    
\end{theorem}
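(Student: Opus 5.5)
The plan is to derive the entropy bound \eqref{eq:flat-ineq} from a tight power-sum ($\ell^{p}$-type) inequality for zero-sum unit vectors, differentiated at the exponent where it is an equality. Write $t_j=|z_j|^2$ and $F_z(s)\equiv\sum_{j}t_j^{s}$ for $s\geq 1$. Then $F_z(1)=1$ and $-\frac{d}{ds}F_z(s)\big|_{s=1}=-\sum_j t_j\ln t_j$, which is the entropy in nats. Two natural candidates for the extremizer exist. One is the pair $(2^{-1/2},-2^{-1/2},0,\dots,0)$, with $F(s)=2^{1-s}$. The other is the flat pyramid vector $(-\sqrt{q},\dots,-\sqrt{q},\sqrt{p})$ with $p=\frac{m-1}{m}$ and $q=\frac{1}{m(m-1)}$, with $F(s)=p^{s}+(m-1)q^{s}$. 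A direct computation gives its entropy as $\log_2 m-\frac{m-2}{m}\log_2(m-1)$. So I would aim to prove, for all $s\geq1$, all zero-sum $z$ with $\|z\|_2=1$, and all $m\geq 3$,
\begin{equation}
\sum_{j=1}^m |z_j|^{2s}\leq \Phi_m(s)\equiv\max\Bigl\{2^{1-s},\ \bigl(\tfrac{m-1}{m}\bigr)^{s}+(m-1)\bigl(\tfrac{1}{m(m-1)}\bigr)^{s}\Bigr\},
\end{equation}
which is the conjectured tight inequality of \citep{holevo2026conjecturetightnorminequality}.

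Granting this, the theorem follows quickly. Both branches of $\Phi_m$ equal $1$ at $s=1$, so $\bigl(F_z(s)-F_z(1)\bigr)/(s-1)\leq \bigl(\Phi_m(s)-\Phi_m(1)\bigr)/(s-1)$ for $s>1$. Letting $s\downarrow1$ gives $-H(z)\leq \partial_s^+\Phi_m(1)$. The right derivative of a maximum of two functions agreeing at $s=1$ is the maximum of their derivatives. Hence $H(z)\geq\min\{1,\ \log_2 m-\frac{m-2}{m}\log_2(m-1)\}$, after converting nats to bits. A one-variable check shows this minimum is $1$ exactly when $m\leq 6$, giving the case split in \eqref{eq:flat-ineq}.

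To prove the $\ell^{p}$ inequality, I would first reduce to real vectors, which is where I expect some care is needed. Write $z=x+iy$ with $x,y\in\mathbb R^m$, both zero-sum. One option is to use convexity of $(a,b)\mapsto(a^2+b^2)^{s}$ together with a rotation argument: replace $z$ by $e^{i\theta}z$ and average or choose $\theta$ so that the real part carries the mass. Another option is to run the optimization directly on moduli, with the constraint that $(|z_j|)$ are the side lengths of a closed polygon, i.e. $\max_j|z_j|\leq\sum_{k\neq j}|z_k|$. I would try the polygon formulation first. In that formulation, complex zero-sum vectors with moduli $r_j$ exist iff the polygon inequality holds. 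So one must maximize $\sum r_j^{2s}$ over $\sum r_j^2=1$, $r_j\geq0$, and $2\max r_j\leq\sum r_j$.

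The main obstacle is this constrained maximization, and here I would apply the equal variables method. Fix the largest coordinate $r_1$; the remaining $r_2,\dots,r_m$ then satisfy $\sum_{j\geq2} r_j^2=1-r_1^2$ and $\sum_{j\geq2} r_j\geq r_1$. For fixed $\sum r_j$ and $\sum r_j^2$, the function $\sum r_j^{2s}$ with $2s\geq2$ is maximized by the equal-variables principle. Its maximizers have at most one distinct value among the nonzero small coordinates together with a single exceptional coordinate, or some coordinates equal to zero. One also argues that the polygon constraint must be tight at a maximizer, otherwise coordinates could be spread further to increase $\sum r_j^{2s}$. This reduces the problem to configurations $(r_1,\underbrace{u,\dots,u}_{k},0,\dots,0)$ with $r_1=ku$, possibly with one extra coordinate. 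These are finitely many one-parameter families indexed by $k\in\{1,\dots,m-1\}$. I would then show that $k\mapsto$ value is quasi-convex in $k$, so the maximum occurs at $k=1$ (the antipodal pair) or at $k=m-1$ (the flat pyramid), which is exactly $\Phi_m$. Handling the possible extra coordinate and proving this endpoint property uniformly in $s\geq1$ is the step I expect to require the most delicate calculus.
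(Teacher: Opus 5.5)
You have the outer reduction right, and it is slightly cleaner than the paper's. Both branches of $\Phi_m$ equal $1$ at $s=1$, so taking the right derivative of $\sum_j|z_j|^{2s}\le\Phi_m(s)$ at $s=1$ gives $H(z)\ge\min\{1,\log_2 m-\frac{m-2}{m}\log_2(m-1)\}$ without knowing which branch is active near $s=1$. The polygon encoding of complex zero-sum vectors is also correct. A shorter way to reach real vectors is convexity of $u\mapsto u^s$ applied to $|z_j|^2=\lambda\frac{x_j^2}{\lambda}+(1-\lambda)\frac{y_j^2}{1-\lambda}$, with $x,y$ each zero-sum, as in the paper. Tightness $2\max_j r_j=\sum_j r_j$ at a maximizer also holds. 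Otherwise, moving mass at fixed $r_i^2+r_k^2$ from the smaller of two nonzero coordinates into the larger one strictly increases $\sum r_j^{2s}$ when $s>1$. A single nonzero coordinate violates the polygon inequality.

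The gap is in what comes next. With $r_1=\sum_{j\ge2}r_j$ and $\sum_{j\ge2}r_j^2=1-r_1^2$ fixed, the maximization case of the equal variables theorem ($f(x)=x^{2s}$, $k=2$, $f$ continuous at $0$) forces $r_2=\dots=r_{m-1}=u\le r_m=v$. Zeros occur only when $u=0$, which is the antipodal pair; the ``some coordinates equal to zero'' alternative belongs to the minimization case. So there are not finitely many families indexed by $k$. There is a single one-parameter family $(r_1,u,\dots,u,v)$ with $r_1=(m-2)u+v$, which is exactly the paper's $k_0=1$ family $(-s_0,s_1,\dots,s_1,s_2)$. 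Your quasi-convexity in $k$ is true (it is Lemma~\ref{lemma:m2-left-min}), but it only compares configurations without the extra coordinate. It says nothing about the interior of this family. Everything hinges on showing that, with $t=v/u\ge1$,
\[
f(t)=\frac{(m-2+t)^{2s}+(m-2)+t^{2s}}{\bigl((m-2+t)^2+m-2+t^2\bigr)^{s}}
\]
has no interior local maximum on $(1,\infty)$ for every $s>1$ and $m\ge3$. That gives $\sup f=\max\{f(1),f(\infty)\}=\Phi_m(s)$. You defer exactly this step, but it is the heart of the theorem. In the paper it is Lemma~\ref{lemma:m1-min}, which requires:
\begin{itemize}
\item reducing $\mathrm{sign}\,f'$ to the sign of $h(t;m-2,2s-1)$;
\item a double Rolle argument bounding the zeros of $h$ by those of an auxiliary $k(t)$;
\item monotonicity of a ratio $\ell(t)$, controlled by a quadratic with a factored discriminant;
\item separate endgames for $m-2\ge3$, $m-2=2$, and $m-2=1$, the last split by $2s-1\in(1,2]$, $(2,3)$, $\{3\}$, $(3,\infty)$, with $(m,s)=(3,2)$ the case where $f$ is constant.
\end{itemize}
As written, the proposal does not establish the $\ell^{2s}$ bound, and hence not the entropy inequality.

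If you supply that one-variable analysis, your route is leaner than the paper's. Complex moduli are constrained only by the polygon inequality, so the paper's entire $k_0\ge2$ casework becomes unnecessary. That includes Lemmas~\ref{lemma:alpha>1_k2=1_k0>=2} and~\ref{lemma:m2-right-min}, the $\ell^p$-monotonicity bound and the $u=k_0s_0^2$ substitution, as well as the three-value Lagrange analysis.
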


In particular, we prove a generalized tight $\ell^p$ inequality on zero-sum vectors that was subsequently conjectured by \citep{holevo2026conjecturetightnorminequality}. 
\begin{theorem}\label{thm:flat-main}
    For $\alpha\in (0,1)\cup(1,\infty)$, let $d(\alpha)$ be the largest root of $2^{1-\alpha}=d^{-\alpha}((d-1)^\alpha+(d-1)^{1-\alpha})$, and define constant 
    $$M(d,\alpha)=\begin{cases}
        2^{1-\alpha}&\alpha \leq 1/2\\
        2^{1-\alpha}& \alpha > 1/2\text{ and }d\leq d(\alpha)\\
        d^{-\alpha}((d-1)^\alpha+(d-1)^{1-\alpha}) & \alpha > 1/2 \text{ and }d\geq d(\alpha) 
    \end{cases}$$
    Then, letting $\c F_d =\{x\in \R^d: \|x\|_2=1\text{ and } \sum_{j=1}^{d}x_j=0\}$,
    $$M(d,\alpha)=\begin{cases}
        \min_{x\in \c F_d}\|x\|_{2\alpha}^{2\alpha}& \alpha < 1\\
        \max_{x\in \c F_d}\|x\|_{2\alpha}^{2\alpha}& \alpha > 1
    \end{cases}$$
    where the branch $M(d,\alpha)=2^{1-\alpha}$ is saturated by the family $(2^{-1/2},-2^{-1/2},0,\dots,0)$ up to permutation and negation, while the branch $M(d,\alpha)=d^{-\alpha}((d-1)^{\alpha}+(d-1)^{1-\alpha})$ is saturated by the family 
    $(((d-1)/d)^{1/2},-(d(d-1))^{-1/2},\dots,-(d(d-1))^{-1/2})$ up to permutation and negation. These are the only families which achieve the optimum value of $\|x\|_{2\alpha}^{2\alpha}$, unless $(d,\alpha)=(3,2)$ in which case any $x\in \c F $ gives $\|x\|_4^4=1/2$.
\end{theorem}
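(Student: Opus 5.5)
We optimize $F(x)=\sum_j |x_j|^{2\alpha}$ over $\c F_d$ using the equal variables method. Write $s=2\alpha$. The function $\phi(t)=|t|^{s}$ is even, so we may flip the sign of $x$ and assume throughout that at least as many coordinates are negative as positive (or the reverse, whichever is convenient).

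\textbf{Step 1: reduction to two levels per sign class.} At a critical point, the Lagrange conditions give $s\,\mathrm{sgn}(x_j)|x_j|^{s-1}=\lambda+2\nu x_j$ for every $j$. On each half-line $\{t>0\}$ and $\{t<0\}$, the map $t\mapsto s|t|^{s-1}-2\nu t$ is either convex or concave (for $s\neq 2$). Hence each equation has at most two roots per sign, and a priori the coordinates take at most four values.

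To cut this down we use the equal variables theorem: fixing $\sum x_j$ and $\sum x_j^2$, a function $\sum f(x_j)$ with $f'''$ of constant sign on a half-line is extremized when all but one of the variables lying in that half-line are equal. Apply this separately to the positive block and to the negative block (with the block sums and block sums of squares held fixed). Within each sign class this leaves at most two levels, one of them of multiplicity one. For $s<2$ the minimizing direction is the relevant one, and for $s>2$ the maximizing direction. Zero coordinates must be handled as the boundary case of this argument.

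\textbf{Step 2: finite-dimensional reduction.} After Step 1, a candidate extremizer is described by the integer counts of each level and at most two or three continuous parameters, subject to the two constraints $\sum x_j=0$ and $\sum x_j^2=1$. The next task is to show that the optimum is a \emph{one-vs-rest} vector $(a,-b,\dots,-b)$ on a support of size $k$, padded with zeros. The zero-sum and unit-norm constraints then force
\begin{equation}
a^2=\frac{k-1}{k},\qquad b^2=\frac{1}{k(k-1)},
\end{equation}
which gives
\begin{equation}
F=k^{-\alpha}\bigl((k-1)^\alpha+(k-1)^{1-\alpha}\bigr)\equiv h(k).
\end{equation}
The case $k=2$ recovers $2^{1-\alpha}$.

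The step I expect to be the main obstacle is eliminating the mixed configurations, i.e. those with two distinct levels on both the positive and the negative side. My plan is a smoothing argument: merge the equal-magnitude opposite pairs, or, treating the constraints in the $(\sum x,\sum x^2)$ picture, show that moving mass monotonically toward one of the two extreme families improves $F$. The monotonicity should follow from the sign of a derivative in the continuous parameter. I would first try to verify this sign directly, using convexity of $t\mapsto t^{\alpha}$ applied to the squared magnitudes.

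\textbf{Step 3: compare $h(k)$ over $k\in\{2,\dots,d\}$.} Treat $k$ as a real variable. Show that $h(k)-2^{1-\alpha}$ changes sign at most once on $[2,\infty)$ by studying the sign of $\frac{d}{dk}\log h$, or equivalently by substituting $u=1/(k-1)$, which gives
\begin{equation}
h=\left(\tfrac{k-1}{k}\right)^{\alpha}\bigl(1+u^{2\alpha-1}\bigr).
\end{equation}
Consequently the optimum over $k$ is attained either at $k=2$ or at $k=d$, with the crossover at $d(\alpha)$.

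For $\alpha\le 1/2$, the factor $u^{2\alpha-1}\ge 1$ grows as $k$ grows, and one checks that $k=2$ always wins. This gives the first branch of $M(d,\alpha)$.

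Finally, uniqueness follows by tracking where the inequalities in Steps 1–2 are strict. The exceptional case $(d,\alpha)=(3,2)$ comes from a direct identity: for $d=3$ with $\sum x_j=0$ we have $\sum x_j^4=\tfrac12\bigl(\sum x_j^2\bigr)^2$, so $\|x\|_4^4=1/2$ on all of $\c F_3$.
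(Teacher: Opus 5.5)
Your Step 2 is where essentially all of the difficulty of this theorem lives, and you only describe a plan (``smoothing'', ``should follow from the sign of a derivative''). So the reduction to one-vs-rest vectors is not established, and neither is the value of the optimum or the uniqueness claim.

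You have also misidentified the obstacle. The four-level ``mixed'' configurations are excluded for free by a sharper root count. Put $\phi(u)=su^{s-1}-2\nu u$ on $u>0$. Magnitudes of positive coordinates solve $\phi(u)=\lambda$, and magnitudes of negative coordinates solve $\phi(u)=-\lambda$. Since $\phi(0^+)=0$ and $\phi$ is monotone or unimodal, whichever of $\pm\lambda$ has the ``wrong'' sign admits at most one root. Hence $x$ has at most three distinct values, and after negation and your equal-variables step, $x=(-s_0,\dots,-s_0,s_1,\dots,s_1,s_2)$.

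This family is still far from one-vs-rest, and it genuinely interpolates between the two extremal families. When $-s_0$ appears once ($k_0=1$), the ratio $t=s_2/s_1\in[1,\infty)$ runs from the full one-vs-rest vector at $t=1$ to the pair vector $(2^{-1/2},-2^{-1/2},0,\dots,0)$ as $t\to\infty$. Showing that no interior configuration beats both ends \emph{is} the theorem. The paper needs two separate arguments here:
\begin{itemize}
\item For $k_0\ge 2$, it relaxes the positive block via $\sum_+x_i^{2\alpha}\ge(\sum_+x_i^2)^\alpha$ (reversed for $\alpha>1$) and sets $u=k_0s_0^2$. It then uses concavity (convexity) of $k_0^{1-\alpha}u^\alpha+(1-u)^\alpha$ on the feasible interval $[1/(k_0+1),(d-k_0)/d]$ to push to the endpoints, and analyzes the two resulting functions of $k_0$; one of them is your $h(k_0+1)$.
\item For $k_0=1$, it writes $\|x\|_{2\alpha}^{2\alpha}$ as an explicit function of $t$. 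A long sign analysis of its derivative shows it has no interior local minimum (maximum for $\alpha>1$). This is also where the $(d,\alpha)=(3,2)$ exception appears, as that function becoming constant.
\end{itemize}
Your suggestion, using concavity/convexity of $y\mapsto y^\alpha$ in the squared magnitudes $y_i=x_i^2$, cannot close this directly. The feasible set in the $y$ variables, cut out by $\sum_+\sqrt{y_i}=\sum_-\sqrt{y_i}$, is not convex. What is missing is an argument of the above kind: a relaxation that turns the feasible region into an interval, plus a genuine one-variable analysis of the $k_0=1$ family.

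Three smaller points.
\begin{itemize}
\item Step 1 does not cover $\alpha\le 1/2$. For $\alpha<1/2$ the third derivative of $t^{2\alpha}$ has the opposite sign to the case $\alpha\in(1/2,1)$, and it vanishes at $\alpha=1/2$. So the equal variables theorem no longer gives the ``all but the largest equal'' form for the minimizer. Also, $|t|^{2\alpha}$ is not differentiable at $0$, so the Lagrange conditions say nothing at zero coordinates. This range needs its own argument. The paper uses subadditivity of $t\mapsto t^{2\alpha}$ to get $\|x\|_{2\alpha}^{2\alpha}\ge 2\sigma^{2\alpha}\ge 2^{1-\alpha}$, where $\sigma$ is the sum of the positive coordinates and $1\le 2\sigma^2$.
\item For $\alpha>1/2$, zero coordinates are easier than you suggest. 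A zero coordinate forces $\lambda=0$, so all nonzero magnitudes coincide and the value is $(2r)^{1-\alpha}$, with $r$ the number of positive coordinates. The only candidate is then the pair vector; zero-padded one-vs-rest vectors with $3\le k<d$ are never critical.
\item In Step 3, knowing that $h(k)-2^{1-\alpha}$ changes sign at most once does not by itself place the optimum of $h$ over $\{2,\dots,d\}$ at an endpoint. You need unimodality of $h$: increasing then decreasing for $\alpha\in(1/2,1)$, the reverse for $\alpha>1$. That is what your sign analysis of $\frac{d}{dk}\log h$ has to deliver.
\end{itemize}
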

While \citep{holevo2026conjecturetightnorminequality} prove Theorem~\ref{thm:flat-main} for $d=3$ via trigonometric series expansion and careful analysis, an analytic proof for $d\geq 4$ is elusive, though the claim can be numerically verified for $d\leq 200$. By giving an analytic proof for $d\geq 4$, we consequently confirm the entropy inequality of Theorem~\ref{thm:flat-ineq} via the following reduction.
\begin{proposition}
    Theorem~\ref{thm:flat-main} implies Theorem~\ref{thm:flat-ineq}.
\end{proposition}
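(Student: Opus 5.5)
The plan is to obtain the entropy inequality as the $\alpha\to1$ derivative of the $\ell^{2\alpha}$ bound in Theorem~\ref{thm:flat-main}. Before that, the complex vectors in Theorem~\ref{thm:flat-ineq} have to be reduced to real zero-sum vectors, since Theorem~\ref{thm:flat-main} only concerns $\c F_d\subset\R^d$.

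\textbf{Step 1 (complex to real).} The left side of \eqref{eq:flat-ineq} depends only on the moduli $s_j=|z_j|$. A vector of moduli $(s_j)$ is realized by some $z$ with $S(z)=0$ iff the polygon inequalities $s_k\le\sum_{j\ne k}s_j$ hold for all $k$ (closing polygon). Let $t_j=s_j^2$ and fix the support of $t$.
\begin{itemize}
\item The feasible set is then a closed region $R$ of the relative interior of a face of the probability simplex.
\item $H(t)=-\sum t_j\log_2 t_j$ is concave in $t$ and not constant on any open set.
\item Hence a minimizer of $H$ over $R$ cannot lie in the relative interior of $R$: there a concave function has a local minimum only if it is locally constant.
\item So a minimizer either has smaller support, and we recurse on that face, or lies on the boundary where some $s_k=\sum_{j\ne k}s_j$.
\end{itemize}
A degenerate polygon is collinear. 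In that case $z$ equals, up to a global phase, a real vector with $x_k=s_k$ and $x_j=-s_j$ for $j\ne k$, which lies in $\c F_m$ after deleting zero coordinates. Equivalently, one can embed it in $\c F_m$ directly with the zeros kept. The support-reduction branch ends at support size $2$, namely $(2^{-1/2},-2^{-1/2})$, which is also real. It therefore suffices to prove \eqref{eq:flat-ineq} for $x\in\c F_m$.

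\textbf{Step 2 (differentiate at $\alpha=1$).} For $x\in\c F_m$ set $f(\alpha)=\sum_j|x_j|^{2\alpha}$, so $f(1)=1$ and
\[
-f'(1)=-\sum_j x_j^2\ln x_j^2=\ln 2\cdot H(x).
\]
For $\alpha>1$, Theorem~\ref{thm:flat-main} gives $f(\alpha)\le M(m,\alpha)$. This maximum is the larger of
\[
A(\alpha)=2^{1-\alpha}\quad\text{and}\quad B(\alpha)=m^{-\alpha}\big((m-1)^\alpha+(m-1)^{1-\alpha}\big),
\]
because both are attained values of $f$ on $\c F_m$. Both satisfy $A(1)=B(1)=1$. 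Dividing $f(\alpha)-1\le\max(A,B)-1$ by $\alpha-1>0$ and letting $\alpha\downarrow1$ gives
\[
f'(1)\le\max\big(A'(1),B'(1)\big).
\]
This route avoids having to locate the threshold $d(\alpha)$. We compute
\[
A'(1)=-\ln2,\qquad B'(1)=-\ln m+\tfrac{m-2}{m}\ln(m-1).
\]
Thus
\[
H(x)\ge\min\Big(1,\ \log_2 m-\tfrac{m-2}{m}\log_2(m-1)\Big).
\]

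\textbf{Step 3 (compare the two terms).} Put $\phi(m)=\log_2 m-\frac{m-2}{m}\log_2(m-1)$. We have $\phi(6)\approx1.037>1$ and $\phi(7)\approx0.961<1$. It remains to show $\phi$ is decreasing in $m$, or at least that it crosses $1$ exactly once. This is an elementary one-variable check; for instance, $\phi(m)\to0$ and a derivative estimate suffices, and small $m$ can be checked directly. That yields exactly the two cases of \eqref{eq:flat-ineq}.

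I expect the main obstacle to be Step 1. The concavity/boundary argument must be made rigorous on the curved feasible region in $t$-coordinates. One must also check that the recursion on supports terminates in real configurations and that zero coordinates cause no trouble. Step 2 is a routine limit, since the inequality is one-sided and tight at $\alpha=1$.
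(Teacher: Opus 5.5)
Your proof is correct, but it differs from the paper's in both steps.

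\textbf{Complex-to-real reduction.} The paper does not analyze minimizers at all. It writes $z=x+iy$ and notes that $S(z)=0$ forces $S(x)=S(y)=0$. With $\lambda=\|x\|_2^2$ it expresses $|z_j|^2=\lambda\,(x_j^2/\lambda)+(1-\lambda)\,(y_j^2/(1-\lambda))$, and then applies concavity of Shannon entropy to the two normalized real vectors in $\mathcal F_m$. That takes two lines, uses only concavity (not strict concavity), and needs no existence or boundary analysis.

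Your polygon argument is heavier but sound, and less delicate than you fear. Take a global minimizer $t^*$ of $H$ over the compact set of admissible $t$. If no polygon inequality were tight on its support, a relative neighbourhood of $t^*$ in its face would be admissible, and strict concavity would contradict minimality. So no recursion on supports is needed. As a bonus, your argument shows that every complex minimizer is real up to a global phase.

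\textbf{Limit step.} The paper sends $\alpha\to1^-$ in the R\'enyi bound. It therefore must know which branch of $M(m,\alpha)$ is active near $\alpha=1$, and for this it cites the external fact $d(\alpha)\in(6,7)$. You instead observe that $M(m,\alpha)=\max(A,B)$ because both values are attained on $\mathcal F_m$. This gives the one-sided derivative bound $\max(A'(1),B'(1))$ directly. Branch selection then reduces to the elementary comparison of $\phi(m)$ with $1$, so your argument is self-contained.

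\textbf{A small correction in Step 3.} $\phi$ is not decreasing on all of $[3,\infty)$. Indeed $\phi'(m)\ln 2=\frac{m-2(m-1)\ln(m-1)}{m^2(m-1)}$, which is positive at $m=3$. It is decreasing for $m\ge 4$, and together with the values $\phi(3),\dots,\phi(7)$ this gives exactly the required split. Your hedged formulation (``crosses $1$ exactly once'') is the right one.
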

\begin{proof}
    Suppose $z \in \R^m$. Since $\|z\|_2=1$, $\{t_j\}_{j\in[m]}$ is a probability distribution for $t_j\equiv |z_j|^2$ as before. Now, consider its $\alpha$-R\'enyi entropy:
    \begin{equation}
        H_\alpha(t)=\frac{1}{1-\alpha}\log_2 \sum_{j=1}^mt_j^\alpha
    \end{equation}
    For $\alpha \in (0,1)$, Theorem~\ref{thm:flat-main} gives that $\sum_{j=1}^{m}t_j^\alpha=\|x\|_{2\alpha}^{2\alpha}\geq M(m,\alpha)$. Then, 
    \begin{equation}
        H_\alpha(t)\geq \frac 1 {1-\alpha}\log_2 M(m,\alpha)\xrightarrow{\alpha\to1^-} -\sum_{j=1}^m t_j \log_2 t_j \geq \lim_{\alpha \to 1^-}\frac{\log_2 M(m,\alpha)}{1-\alpha}
    \end{equation}
    By \citep{holevo2026conjecturetightnorminequality}, $d(\alpha)\in (6,7)$ for $\alpha$ sufficiently close to $1$. Thus, for $m\leq 6$, $M(m,\alpha)= 2^{1-\alpha}$ which proves this branch of Theorem~\ref{thm:flat-ineq}. For $m\geq 7$, by L'H\^{o}pital's rule,
    \begin{equation}
    \begin{aligned}
        \lim_{\alpha \to 1^-}\frac{\log_2 M(m,\alpha)}{1-\alpha}&=-\frac d {d\alpha}\log_2 (m^{-\alpha}((m-1)^\alpha+(m-1)^{1-\alpha}))\bigg|_{\alpha=1}\\
        &=\log_2 m-\frac{(m-1)^\alpha-(m-1)^{1-\alpha}}{(m-1)^\alpha+(m-1)^{1-\alpha}}\log_2(m-1)\bigg|_{=1}\\
        &=\log_2 m -\frac{m-2}{m}\log_2(m-1)
    \end{aligned}
    \end{equation}
    as desired, proving Theorem~\ref{thm:flat-ineq} for $z\in \R^m$. We now argue that this gives the claim for $z\in \C^m$. Let $z=x+iy$ for $x,y\in \R^m$, and let $C_m$ denote the right hand side of \eqref{eq:flat-ineq}. Letting $\lambda\equiv \|x\|_2^2\in (0,1)$, we can write:
    \begin{equation}
        |z_j|^2=\lambda \frac{x_j^2}{\lambda}+(1-\lambda)\frac{y_j^2}{1-\lambda}
    \end{equation}
    where $\{x_j^2/\lambda\}_{j\in [m]}$ and $\{y_j^2/(1-\lambda)\}_{j\in[m]}$ are probability distributions. Then, by the concavity of Shannon entropy, 
    \begin{equation}
        -\sum_{j=1}^m|z_j|^2 \log_2 |z_j|^2 \geq -\lambda \sum_{j=1}^m \frac{x_j^2}{\lambda^2}\log_2 \frac{x_j^2}{\lambda^2}-(1-\lambda) \sum_{j=1}^m  \frac{y_j^2}{\lambda^2}\log_2 \frac{y_j^2}{\lambda^2}\geq \lambda C_m+(1-\lambda) C_m=C_m
    \end{equation}
    where we use the fact that $\frac{x}{\sqrt \lambda},\frac{y}{\sqrt{1-\lambda}}\in \R^m $, have unit $\ell^2$ norm, and have zero sum, meaning Theorem~\ref{thm:flat-ineq} for real variables applies. Of course, if $\lambda \in \{0,1\}$, this is exactly the real case. This argument yields the proof for complex variables and is a simple specialization of a broader real reduction trick given in Lemma 1 of \citep{holevo2025quantum}, which we state as Lemma~\ref{lemma:C-to-R} in our later proof of the obtuse case.  
\end{proof}
By \citep{holevo2025quantum}, Theorem~\ref{thm:flat-ineq} yields the following optimal measurement.
\begin{corollary}
    For flat pyramids, the global information-optimal measurement is in the family defined in Corollary~\ref{corr:obtuse-info} with parameter $t=t^*_F$:
    \begin{equation}
        t^*_F(m)\equiv \begin{cases}
            \infty & 3\leq m \leq 6\\
            1 & m \geq 7
        \end{cases}
    \end{equation}
    This yields the accessible information:
    \begin{equation}
        A({\c E })=\begin{cases}
            0 & m=2\\
            \log_2(m/2) & 3\leq m \leq 6\\
            \frac{m-2}{m}\log_2(m-1) & m \geq 7
        \end{cases}
    \end{equation}
    where we note that ${\c E}$ degenerates to a single state up to phase when $m=2$.
\end{corollary}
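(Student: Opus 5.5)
The plan is to prove a matching upper and lower bound. The upper bound comes from a Holevo-type dual bound that reduces to Theorem~\ref{thm:flat-ineq}. The lower bound comes from evaluating the mutual information of the two extreme members of the family in Corollary~\ref{corr:obtuse-info}.

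\textbf{Structure of the flat ensemble.} In the flat case $a=\sqrt{(m-1)/m}$ and $b=-1/\sqrt{m(m-1)}$. Hence
\[
\ket{\psi_j}=\sqrt{\tfrac{m}{m-1}}\Bigl(\ket{e_j}-\tfrac{1}{\sqrt m}\ket{e_0}\Bigr).
\]
Every $\ket{\psi_j}$ is therefore orthogonal to $\ket{e_0}$, and the states span the $(m-1)$-dimensional zero-sum subspace $V$. The average state is $\bar\rho=\frac{1}{m-1}P_V$. For $m=2$ the two states coincide up to sign, so $A(\c E)=0$ trivially. From now on we take $m\geq 3$.

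\textbf{Upper bound.} It suffices to consider measurements with rank-one elements $M_k=c_k\ketbra{\phi_k}$, where $\phi_k\in V$ is a unit vector; rank-one refinement never decreases mutual information, and components outside $V$ are irrelevant. For $\phi\in V$ with coordinates $\phi_j=\braket{e_j}{\phi}$ we have $\sum_j\phi_j=0$, so $\braket{\psi_j}{\phi}=\sqrt{m/(m-1)}\,\phi_j$. The joint distribution is then $P(j,k)=\frac{1}{m}\cdot\frac{m}{m-1}c_k|\phi_{k,j}|^2$. The output marginal is $w_k=\frac{c_k}{m-1}$, and completeness on $V$ forces $\sum_k w_k=1$. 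The posterior is $P(j\mid k)=|\phi_{k,j}|^2$. Consequently
\[
I(\c E,\c M)=H(J)-\sum_k w_k H(J\mid k)=\log_2 m-\sum_k w_k\Bigl(-\sum_j|\phi_{k,j}|^2\log_2|\phi_{k,j}|^2\Bigr).
\]
Each $\phi_k$ is a unit zero-sum vector, so Theorem~\ref{thm:flat-ineq} bounds each inner entropy below by $C_m$, the right-hand side of \eqref{eq:flat-ineq}. This gives $A(\c E)\leq \log_2 m - C_m$, which equals $\log_2(m/2)$ for $m\leq 6$ and $\frac{m-2}{m}\log_2(m-1)$ for $m\geq 7$.

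\textbf{Achievability.} We show the bound is attained by the stated members of the family.
\begin{itemize}
\item For $t=\infty$, the measurement is $\{\frac{2}{m}\ketbra{ij}\}$. These vectors lie in $V$ and $\sum_{i<j}\ketbra{ij}=\frac{m}{2}P_V$, so they form a POVM on $V$. Each posterior is uniform on two indices, with entropy exactly $1$, so $I=\log_2(m/2)$.
\item For $t=1$, the measurement is $\ketbra{\tilde e_k}$ with $\tilde e_k=e_k-\frac{1}{\sqrt m}e_0\in V$, and $\sum_k\ketbra{\tilde e_k}=P_V$. The normalized vector $\tilde e_k/\|\tilde e_k\|$ has one coordinate $\sqrt{(m-1)/m}$ and $m-1$ coordinates $-1/\sqrt{m(m-1)}$. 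A short computation gives posterior entropy $\log_2 m-\frac{m-2}{m}\log_2(m-1)$, hence $I=\frac{m-2}{m}\log_2(m-1)$.
\end{itemize}
Both values meet the upper bound in their respective ranges of $m$, which proves optimality.

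\textbf{Main obstacle.} The genuine difficulty is entirely Theorem~\ref{thm:flat-ineq}, which is reduced to Theorem~\ref{thm:flat-main}. What remains here is a careful bookkeeping step: the dual reduction must be justified even though the standard multiplier argument of \citep{holevo2025quantum} degenerates at flatness. This requires the rank-one reduction and the restriction of all POVM vectors to $V$. I would handle it by working explicitly on $V$ as above rather than invoking the generic obtuse-case dual.
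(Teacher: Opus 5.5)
Your proof is correct, and it takes a more elementary, self-contained route than the paper. The paper gives no argument of its own for this corollary. As it remarks just before Theorem~\ref{thm:flat-ineq}, the standard dual argument of \citep{holevo2025quantum} degenerates at flatness, and the paper simply cites \citep{holevo2025quantum} for the fact that Theorem~\ref{thm:flat-ineq} still certifies optimality. You prove that certification directly from $\bar\rho=\frac{1}{m-1}P_V$. After compressing to $V$ and refining to rank one, every posterior is exactly $|\phi_{k,j}|^2$ for a unit zero-sum vector $\phi_k$. Hence $I=\log_2 m-\sum_k w_kH(|\phi_k|^2)$, and the upper bound follows at once from the complex form of Theorem~\ref{thm:flat-ineq}, which the paper's real-to-complex reduction supplies. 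Your route buys transparency. It needs no dual program, and it explains why the flat case calls for a pure zero-sum entropy bound with no $|\sum_j z_j|^2$ term. It also characterizes the optimal rank-one measurements on $V$: they are exactly those whose positive-weight outcomes all have posteriors attaining equality in Theorem~\ref{thm:flat-ineq}. The paper's route buys uniformity with the obtuse case, where $\bar\rho=r_1P_V+r_0\ketbra{e_0}$ has full rank. There the same computation gives posteriors $|\braket{e_j}{\chi_k}|^2$ with $\chi_k\propto\bar\rho^{1/2}\phi_k$, tied together only by the coupling $\sum_k w_k\ketbra{\chi_k}=\bar\rho$. Relaxing that coupling yields only the trivial bound $\log_2 m$, so the dual's multiplier term $|\sum_j z_j|^2$ in Theorem~\ref{thm:obtuse-59-67} is genuinely needed.

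There is one labeling slip in your achievability step. At $t=1$, \eqref{eq:acute-M} gives $\tilde e_k=e_k$ and $M_{ij}=0$, so the family member is the square-root measurement $\{\ketbra{e_k}\}_{k\in[m]}$. The vectors $e_k-\frac{1}{\sqrt m}e_0$ you use are its compressions $P_Ve_k$; in \eqref{eq:acute-M} they would correspond to $t=0$. Since every $\psi_j$ lies in $V$, $|\braket{\psi_j}{e_k}|^2=|\braket{\psi_j}{P_Ve_k}|^2$. So you computed the statistics of the right measurement, and the value $\frac{m-2}{m}\log_2(m-1)$ stands. Similarly, as $t\to\infty$ the elements $M_k$ tend to $\frac{1}{m}\ketbra{e_0}$ rather than to zero. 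These outcomes have probability zero on the flat ensemble, so your description of the effective measurement on $V$ is still correct.
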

From Table~\ref{tab:pyramid-types} and \eqref{eq:pyramid-ab}, flat pyramids lie in the hyperplane $\{\psi:\braket{e_0}{\psi}=0\}$. Then, \eqref{eq:acute-M} and \eqref{eq:M-k-ij} for $m\leq 6$ effectively give the optimal pair-difference measurement ${\c M}=\{\ketbra{e_0},M_{ij}\}_{1\leq i<j\leq m}$ where $M_{ij}=\frac 2 m \ketbra{ij}$. For $m=3$, this is exactly the \textit{anti-trine} measurement from \citep{weir2018optimal}. 

\subsection{Organization}
Our proofs of Theorem~\ref{thm:obtuse-59-67} and Theorem~\ref{thm:flat-main} employ a similar strategy to the numerical verification algorithms of \citep{holevo2025quantum,holevo2026conjecturetightnorminequality}. We analyze the minimizers of the entropy inequality, progressively refining the minimizer family until we can compress it into a tractable inequality of a single continuous variable and at most one discrete parameter. We then patiently solve these optimization problems. Section~\ref{sec:obtuse} proves Theorem~\ref{thm:obtuse-59-67} for the obtuse case and importantly shows that the elimination of the hard family, that is minimizers with three distinct values, reduces to a neat algebraic reciprocal inequality involving the branches of the Lambert $W$ function which we state and prove in Appendix~\ref{x:obtuse}. Section~\ref{sec:flat} proves Theorem~\ref{thm:flat-main} related to the flat case, taking advantage of a technique in symmetric inequalities known as the equal variables method \citep{cirtoaje2006algebraic}.

\section{Obtuse Pyramids}\label{sec:obtuse}
We roughly follow the proof strategy suggested in Section 5.2 of \citep{holevo2025quantum}. First, we employ Lemma 1~\citep{holevo2025quantum} to reduce from complex to real variables. Then, we leverage Lagrange multipliers to argue that the relevant minimizers have at most three distinct non-zero values, which can be further assumed to take a particular shape. Our key contribution is a proof that no minimizers with three distinct values exist. With this fact, analyzing the remaining minimizer families proves Theorem~\ref{thm:obtuse-59-67} or equivalently Equations (59) and (67) of \citep{holevo2025quantum}.

\subsection{Preliminary Reduction}
We begin by reducing to real variables via \citep{holevo2025quantum}'s proof using the concavity of Shannon entropy.  
\begin{lemma}[Real Reduction, Lemma 1 of \citep{holevo2025quantum}]\label{lemma:C-to-R}
If $\lambda_{jk}\in \R $ and 
\begin{equation}
    \sum_{j=1}^m |z_j|^2 \log_2 |z_j|^2 +\sum_{j,k=1}^m\lambda_{jk}\bar{z}_j z_k \leq 0
\end{equation}
holds for all $z\in \R^m  $ such that $\|z\|_2=1$, then it also holds for all $z\in \C^m $ such that $\|z\|_2=1$. 
\end{lemma}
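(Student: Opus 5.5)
Write $z=x+iy$ with $x,y\in\R^m$. I will show that the left-hand side $F(z)\equiv\sum_j |z_j|^2\log_2|z_j|^2+\sum_{j,k}\lambda_{jk}\bar z_j z_k$ is bounded above by a convex combination of $F$ evaluated at two real unit vectors. Set $\lambda\equiv\|x\|_2^2\in[0,1]$. If $\lambda\in\{0,1\}$, then $z$ equals a real unit vector times a global phase ($1$ or $i$). Since $|z_j|$ is unchanged and $\bar z_jz_k$ is invariant under a global phase, the hypothesis applies directly.

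The first step handles the quadratic form. Expanding $\bar z_j z_k=(x_j-iy_j)(x_k+iy_k)$ and summing against the real coefficients $\lambda_{jk}$ gives
\begin{equation}
\sum_{j,k}\lambda_{jk}\bar z_j z_k=\sum_{j,k}\lambda_{jk}(x_jx_k+y_jy_k)+i\sum_{j,k}\lambda_{jk}(x_jy_k-y_jx_k).
\end{equation}
I would first reduce to symmetric $\lambda$. The inequality is a statement about a real quantity, so I interpret it via the real part; if $\lambda$ is symmetric the imaginary part vanishes identically, and otherwise the hypothesis on real vectors only ever sees the symmetric part $(\lambda+\lambda^T)/2$. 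I will flag this as a convention point and assume $\lambda$ is symmetric, which is the case for the Hermitian forms used in the paper. With that assumption the quadratic term equals exactly $Q(x)+Q(y)$, where $Q(w)=w^T\lambda w$.

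The second step handles the entropy term using concavity, exactly as in the flat-case reduction above. For $0<\lambda<1$, write
\begin{equation}
|z_j|^2=\lambda\frac{x_j^2}{\lambda}+(1-\lambda)\frac{y_j^2}{1-\lambda}.
\end{equation}
The function $\phi(t)=t\log_2 t$ is convex, so $\phi(|z_j|^2)\le\lambda\phi(x_j^2/\lambda)+(1-\lambda)\phi(y_j^2/(1-\lambda))$. Summing over $j$ and setting $u=x/\sqrt\lambda$, $v=y/\sqrt{1-\lambda}$, both real unit vectors, gives
\begin{equation}
\sum_j|z_j|^2\log_2|z_j|^2\le\lambda\sum_j u_j^2\log_2u_j^2+(1-\lambda)\sum_jv_j^2\log_2v_j^2.
\end{equation}
The quadratic part is homogeneous of degree two, so $Q(x)+Q(y)=\lambda Q(u)+(1-\lambda)Q(v)$. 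Adding the two pieces yields $F(z)\le\lambda F(u)+(1-\lambda)F(v)\le0$, where the last inequality is the hypothesis applied to $u$ and $v$.

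The only delicate point is the imaginary/antisymmetric part of the quadratic form, which I resolve via the symmetry or real-part convention above. Everything else is routine.
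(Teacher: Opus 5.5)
Your proposal is correct and matches the paper's approach, which cites the lemma and reproduces the same concavity argument in its flat-case reduction. That argument writes $z=x+iy$, applies convexity of $t\log_2 t$ to $|z_j|^2=\lambda\,(x_j^2/\lambda)+(1-\lambda)\,(y_j^2/(1-\lambda))$, and uses that for real symmetric $\lambda_{jk}$ the quadratic form equals $Q(x)+Q(y)=\lambda Q(u)+(1-\lambda)Q(v)$. Your caveat about symmetry (or taking the real part) is appropriate and costs nothing, since the paper applies the lemma only with the symmetric choice $\lambda_{jk}=\tilde\nu_1\delta_{jk}-\tilde\nu_0$.
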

Taking $\lambda_{jk}={\Tilde \nu_1}\delta_{jk}-{\Tilde \nu_0}$, it suffices to show that for $x\in \S^{m-1}$, 
\begin{equation}\label{eq:obtuse-fc0}
   f_{{\Tilde \nu_0}}(x)\equiv -\sum_{j=1}^m x_j^2 \ln x_j^2 +{\Tilde \nu_0}S(x)^2 \geq {\Tilde \nu_1} 
\end{equation}
where $S(x)\equiv \sum_{j=1}^m x_j$ and $\Tilde\nu_i(p)\equiv -\Tilde \mu_i(p)\ln 2 $ such that ${\Tilde \nu_0}>0$ in the obtuse regime for $p$. Note that for the remainder of the proof, we will work with natural logarithms. We now begin our quest to construct and handle the relevant families of minimizers.
\begin{lemma}\label{lemma:obtuse-lagrange}
    Define the Lagrange function of \eqref{eq:obtuse-fc0} along with its respective gradient and Hessian:
    \begin{equation}\label{eq:obtuse-lagrange}
    \begin{aligned}
         \mathcal L_{{\Tilde \nu_0}} (x;\lambda)&\equiv f_{{\Tilde \nu_0}} (x)-\lambda \left(\sum_{j=1}^m x_j^2 -1\right)\\
       \p_j \mathcal{L}_{{\Tilde \nu_0}}(x;\lambda)&=-2x_j(\ln x_j^2 + 1+\lambda)+2{\Tilde \nu_0}S(x)\\
        \p_i\p_j  \mathcal L (x;\lambda)&= -2(\ln x_j^2+3+\lambda)\delta_{ij}+2{\Tilde \nu_0} 
    \end{aligned}
    \end{equation}
    If $x\in \S^{m-1}$ is a local minimizer of $f_{{\Tilde \nu_0}}(x)$, there exists a $\lambda \in \R$ such that $\nabla_x\mathcal{L}_{{\Tilde \nu_0}}(x;\lambda)=0$ and $y^\top \nabla^2_x \mathcal{L}_{{\Tilde \nu_0}}(x;\lambda)y\geq 0$ for any $y\in \R^m$ with $\langle x,y\rangle=0$. 
\end{lemma}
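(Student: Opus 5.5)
The plan is to treat this as the standard first- and second-order necessary conditions for a constrained minimum on the sphere $\S^{m-1}=\{x:h(x)=0\}$, where $h(x)\equiv\sum_j x_j^2-1$. The only non-routine issue is the regularity of $\phi(s)\equiv -s^2\ln s^2$ at $s=0$. We use the convention $0\ln 0=0$. Then $\phi$ is $C^1$ on $\R$ with $\phi'(s)=-2s(\ln s^2+1)$ and $\phi'(0)=0$. Hence $f_{\Tilde\nu_0}$ is $C^1$ on all of $\R^m$, and its gradient is exactly the displayed $\p_j\mathcal L$ with the $\lambda$-term removed. However, $\phi''(s)=-2(\ln s^2+3)$ blows up at $s=0$, so the Hessian formula in \eqref{eq:obtuse-lagrange} only makes sense on the support. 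I will therefore read the diagonal entry at a zero coordinate as $+\infty$, which is its limit. The second-order statement then splits into a genuine condition on the support and a trivially satisfied condition off it.

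\textbf{First-order step.} Since $\nabla h(x)=2x\neq 0$ on the sphere, the linear independence constraint qualification holds. A $C^1$ local minimizer of $f_{\Tilde\nu_0}$ on the sphere therefore admits a multiplier $\lambda$ with $\nabla f_{\Tilde\nu_0}(x)=2\lambda x$, i.e. $\nabla_x\mathcal L_{\Tilde\nu_0}(x;\lambda)=0$. To keep this self-contained, I would argue directly. Take any $y\perp x$ and the curve $\gamma(t)=(x+ty)/\|x+ty\|_2$, which lies on the sphere and has $\gamma'(0)=y$. Then $\frac{d}{dt}f(\gamma(t))|_{t=0}=\langle\nabla f(x),y\rangle$ must vanish, so $\nabla f(x)\parallel x$.

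\textbf{Second-order step on the support.} Let $P=\{j:x_j\neq 0\}$. Near $x$, the function $f$ is $C^2$ in the coordinates indexed by $P$. Take $y\perp x$ with $\operatorname{supp}y\subseteq P$. Along $\gamma$ we have $\mathcal L(\gamma(t);\lambda)=f(\gamma(t))$, so
\[
\tfrac{d^2}{dt^2}f(\gamma(t))\big|_{0}=\gamma'(0)^\top\nabla^2\mathcal L\,\gamma'(0)+\nabla\mathcal L\cdot\gamma''(0)=y^\top\nabla^2_x\mathcal L(x;\lambda)\,y .
\]
The term involving $\gamma''(0)$ vanishes because $\nabla\mathcal L=0$. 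Local minimality forces this second derivative to be $\geq 0$, which is the claimed inequality for such $y$.

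\textbf{Directions touching zero coordinates (the main obstacle).} Now suppose $y_j\neq 0$ for some $j\notin P$. Expanding $f(\gamma(t))$, that coordinate contributes $-t^2y_j^2\ln(t^2y_j^2)+O(t^2)$, which is positive of order $t^2\ln(1/t)$. This dominates every $O(t^2)$ term coming from the smooth part. So this case imposes no constraint on minimality, and it is consistent with the $+\infty$ diagonal entry in the Hessian formula. I expect the main care to be needed here, in stating precisely in what sense the Hessian condition is asserted when $x$ has zeros. I would handle it by the convention just described, equivalently by restricting $y$ to the support. This suffices for the later uses, where the lemma is applied to the nonzero coordinates of a minimizer.
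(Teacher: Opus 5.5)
Your proposal is correct and follows the same route as the paper, which proves this lemma only by appeal to ``direct computation and standard calculus,'' i.e.\ the standard first- and second-order Lagrange necessary conditions on the sphere. The regularity care you add at zero coordinates is a point the paper passes over, and it is sound: $f$ is only $C^1$ there, you prove the Hessian condition for $y$ supported on $\{j: x_j\neq 0\}$, and you show that directions touching a zero coordinate gain order $t^2\ln(1/t)$, so the $+\infty$ diagonal convention is consistent. This suffices for the paper, since the Hessian condition is later applied only at points with no zero coordinate, and the zero-coordinate case uses only the gradient condition, which needs $\phi'(0)=0$ as you note.
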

\begin{proof}
    This follows from direct computation and standard calculus.
\end{proof}
We now use these Lagrangian conditions to narrow the search space for minimizers. 

\begin{lemma}\label{lemma:obtuse-zero}
   If $m\geq 3$, any global minimizer $x\in \S^{m-1}$ of $f_{{\Tilde \nu_0}}(x)$ such that at least one coordinate of $x$ is zero must have the form $x=(2^{-1/2},-2^{-1/2},0,\dots,0)$ up to permutation and negation, achieving $f_{{\Tilde \nu_0}}(x)=\ln 2 $. 
\end{lemma}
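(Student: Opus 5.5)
The plan is to use only the first-order conditions of Lemma~\ref{lemma:obtuse-lagrange}, read carefully at the zero coordinate. After that, a direct comparison with the explicit candidate $(2^{-1/2},-2^{-1/2},0,\dots,0)$ finishes the argument.

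\textbf{Step 1: the gradient condition applies at zeros.} The function $\phi(t)=-t^2\ln t^2$ (with $\phi(0)=0$) is $C^1$ on $\R$, and $\phi'(0)=0$, since $\phi'(t)=-2t(\ln t^2+1)\to 0$ as $t\to 0$. Hence $f_{\Tilde\nu_0}$ is $C^1$ on all of $\R^m$, and the gradient formula in \eqref{eq:obtuse-lagrange} holds everywhere, with the convention $x_j\ln x_j^2=0$ when $x_j=0$. A global minimizer on $\S^{m-1}$ is in particular a local minimizer. So Lemma~\ref{lemma:obtuse-lagrange} gives some $\lambda\in\R$ with $\nabla_x\mathcal L_{\Tilde\nu_0}(x;\lambda)=0$. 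We only need the first-order part, not the Hessian condition, which is the one that would degenerate at zero coordinates.

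\textbf{Step 2: the zero coordinate forces $S(x)=0$.} Let $x_k=0$. The $k$-th stationarity equation reduces to $2\Tilde\nu_0 S(x)=0$. Since $\Tilde\nu_0>0$ in the obtuse regime, this gives $S(x)=0$.

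\textbf{Step 3: all nonzero coordinates have equal modulus.} With $S(x)=0$, every coordinate with $x_j\neq 0$ satisfies $x_j(\ln x_j^2+1+\lambda)=0$. This forces $x_j^2=e^{-1-\lambda}$, a value independent of $j$.

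Let $r$ be the number of nonzero coordinates. Then $\|x\|_2=1$ gives $|x_j|=r^{-1/2}$ on the support. The condition $S(x)=0$ says the support splits into equally many entries $+r^{-1/2}$ and $-r^{-1/2}$, so $r$ is even and $2\le r\le m-1$. At such a point
\[
f_{\Tilde\nu_0}(x)=-\sum_j x_j^2\ln x_j^2=\ln r .
\]

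\textbf{Step 4: compare with the explicit candidate.} Because $m\ge 3$, the point $x^\circ=(2^{-1/2},-2^{-1/2},0,\dots,0)$ lies on $\S^{m-1}$, has a zero coordinate, and has $S(x^\circ)=0$. Its value is $f_{\Tilde\nu_0}(x^\circ)=\ln 2$.

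A global minimizer $x$ therefore satisfies $\ln r=f_{\Tilde\nu_0}(x)\le \ln 2$, and combined with $r\ge 2$ this forces $r=2$. So $x$ is $(2^{-1/2},-2^{-1/2},0,\dots,0)$ up to permutation and negation, with value $\ln 2$.

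\textbf{Main obstacle.} The only delicate point is Step~1: justifying the Lagrange condition at coordinates where $\ln x_j^2$ is singular. This is resolved by the $C^1$ extension of $\phi$ noted above, so the rest of the argument is elementary.
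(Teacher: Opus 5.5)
Your proof is correct and follows essentially the paper's route. Stationarity at the zero coordinate forces $S(x)=0$ (using $\tilde\nu_0>0$), the remaining stationarity equations force all nonzero entries to have a common modulus, so $f$ equals the logarithm of the support size, and global minimality pins the support size to $2$. Your explicit $C^1$ justification at zero coordinates and your direct comparison with $(2^{-1/2},-2^{-1/2},0,\dots,0)$ make precise what the paper handles with the $0\ln 0=0$ convention and the phrase ``since we seek minimizers.''
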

\begin{proof}
    Let the zero coordinate be $x_k=0$. Then, $\langle x,e_k\rangle=0$ meaning the unit vector $e_k$ is tangent to $\S^{m-1}$ and by first-order optimality, 
    \begin{equation}
        0=\nabla_{e_k}f_{{\Tilde \nu_0}}(x)=-\cancel{2x_k (\ln x_k^2 +1)}+2{\Tilde \nu_0} S(x)
    \end{equation}
    where we use the convention $0\ln 0\equiv 0$. Then, Lemma~\ref{lemma:obtuse-lagrange} gives:
    \begin{equation}
        0={\Tilde \nu_0} S(x)=x_j(\ln x_j^2 + 1+\lambda)
    \end{equation}
    meaning every $x_j\in \{0,\pm s\}$ for some $s>0$. This along with the fact that $S(x)=0$ implies that there are the same number of positive and negative coordinates, say $r$ of each. Then, $1=\|x\|_2^2=2rs^2$, meaning:
    \begin{equation}
        f_{{\Tilde \nu_0}}(x)=-2r\cdot s^2\ln s^2 +{\Tilde \nu_0} \cdot 0^2 =\ln(2r)
    \end{equation}
    Since we seek minimizers, we have that $r=1$, forcing $s=2^{-1/2}$ and $m\geq 3 $ as well. 
\end{proof}

We now rigorously prove \citep{holevo2025quantum}'s claim that minimizers have at most three distinct values. 
\begin{lemma}\label{lemma:obtuse-three}
   Let $x\in \S^{m-1}$ be an extrema of $f_{{\Tilde \nu_0}}(x)$ with no zero coordinate. Then, we have that $|\{x_1,\dots,x_m\}|\leq 3$. Further, if $S(x)\geq 0$, there is at most one distinct positive value, and if $S(x)\leq 0$, there is at most one distinct negative value. 
\end{lemma}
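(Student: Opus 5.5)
We work directly from the first-order condition in Lemma~\ref{lemma:obtuse-lagrange}. At an extremum $x$ with no zero coordinate there is a $\lambda\in\R$ such that every coordinate satisfies
\begin{equation}
    x_j\left(\ln x_j^2+1+\lambda\right)=\Tilde\nu_0 S(x),\qquad j\in[m].
\end{equation}
Set $c\equiv \Tilde\nu_0 S(x)$. Then every coordinate is a nonzero root of the single function $h(u)\equiv u(\ln u^2+1+\lambda)$, so it suffices to count the nonzero roots of $h(u)=c$. The function $h$ is odd, with $h(u)\to 0$ as $u\to 0$ and $h'(u)=\ln u^2+3+\lambda$. On $(0,\infty)$, $h'$ is strictly increasing and changes sign exactly once, at $u_0=e^{-(3+\lambda)/2}$. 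Hence $h$ restricted to $(0,\infty)$ is strictly convex: it decreases from $0$ to a negative minimum at $u_0$ and then increases to $+\infty$. Oddness gives the mirror picture on $(-\infty,0)$: $h$ is strictly concave there, rises from $-\infty$ to a positive maximum at $-u_0$, and then decreases to $0$.

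The counting then splits on the sign of $c$, which equals the sign of $S(x)$ because $\Tilde\nu_0>0$.

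If $c>0$, the equation $h(u)=c$ has no root in $(0,u_0]$, since $h\leq 0$ there. On $(u_0,\infty)$ the function $h$ is strictly increasing, so there is exactly one positive root. On $(-\infty,0)$, the strict concavity of $h$ gives at most two roots. So there are at most three distinct values, and only one of them is positive.

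If $c<0$, apply the symmetry $u\mapsto -u$, $c\mapsto -c$. This gives exactly one negative root and at most two positive roots, so again at most three values, with only one negative.

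If $c=0$, the nonzero roots satisfy $\ln u^2=-1-\lambda$, so $u=\pm s$. That is two values: one positive and one negative. This case falls under both halves of the claim, the one for $S(x)\geq0$ and the one for $S(x)\leq 0$.

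Together these give $|\{x_1,\dots,x_m\}|\leq 3$ and the sign-dependent refinements. The only delicate step is making the convexity and concavity argument precise at the origin, where $h$ extends continuously by $h(0)=0$ but is not differentiable. This is handled by working separately on the open half-lines, where $h$ is smooth, and noting that $0$ is excluded because $x$ has no zero coordinate. No second-order information is needed. The statement therefore holds for every critical point, not only for minimizers, which is consistent with the lemma's wording "extrema".
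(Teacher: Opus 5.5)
Your proof is correct and takes essentially the same route as the paper. Both count the nonzero roots of $u\ln u^2+(1+\lambda)u=\Tilde\nu_0 S(x)$ using strict convexity on $(0,\infty)$, strict concavity on $(-\infty,0)$, and the limiting value $0$ at the origin to exclude a second root on the side matching the sign of $S(x)$; the only cosmetic difference is that the paper excludes that root with Rolle's theorem, while you read it off from the explicit monotone shape of $h$ obtained via oddness.
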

\begin{proof}
    By Lemma~\ref{lemma:obtuse-lagrange}, every $x_j$ is a non-zero root of $g(t)\equiv t\ln t^2 + (1+\lambda )t +b=0$ where $b\equiv -{\Tilde \nu_0}S(x)$. Taking derivatives, $g'(t)=\ln t^2 +\lambda+3$ and $g''(t)=2/t$, meaning $g$ is strictly convex on $(0,\infty)$, strictly concave on $(-\infty,0)$, with $g'$ having at most one root on each domain, meaning $g$ has at most two roots on each domain. We now refine this using the limiting behavior $\lim_{t\to \pm \infty}g(t)=\pm \infty$ and $\lim_{t\to 0^\pm}=b$. 

    Suppose that $b\leq 0$ and that there are two positive roots $0<s<t$. By Rolle's theorem, there is a $c\in (s,t)$ with $g'(c)=0$, and since $g''>0$ on $(0,\infty)$, $g'<0$ on $(0,c)$, meaning $g(s)<0$ since $b\leq 0$. This contradicts the fact that $s$ is a root, meaning there is at most one positive root if $b\leq 0$. By a similar argument, if $b\geq 0$, there is at most one negative root. 
\end{proof}

We continue refining this three value family by considering the signs of multiplicities of each value. 
\begin{lemma}\label{lemma:obtuse-shape}
    For $m\geq 2 $ and ${\Tilde \nu_0}>0$, any local minimizer $x\in \S^{m-1}$ of $f_{{\Tilde \nu_0}}(x)$ with no zero coordinate must have the form $x=(-s_0,\dots,-s_0,-s_1,\cdots,-s_1,s_2)$ for $0<s_0\leq s_1$ and $s_2>0$ with $S(x)\geq 0$, up to permutation and negation. 
\end{lemma}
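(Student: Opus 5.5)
Since $f_{\Tilde \nu_0}$ is invariant under $x\mapsto -x$ and $S(-x)=-S(x)$, I will first negate $x$ if necessary so that $S(x)\geq 0$. The proof then has three steps: count the distinct values using Lemma~\ref{lemma:obtuse-three}, show at least one coordinate is positive, and show the positive value occurs exactly once.

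\textbf{Counting values.} With $S(x)\geq 0$, Lemma~\ref{lemma:obtuse-three} gives at most one distinct positive value; call it $s_2$. Every negative coordinate is a root on $(-\infty,0)$ of
\begin{equation}
g(t)=t\ln t^2+(1+\lambda)t+b, \qquad b=-\Tilde\nu_0 S(x)\leq 0 .
\end{equation}
Since $g$ is strictly concave on $(-\infty,0)$, it has at most two negative roots. I label these $-s_1\leq -s_0<0$, with $s_0=s_1$ allowed when only one negative value occurs.

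\textbf{At least one positive coordinate.} If every coordinate were negative, then, because no coordinate is zero, we would have $S(x)<0$. This contradicts $S(x)\geq 0$.

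\textbf{The positive value occurs exactly once.} This uses the second-order condition of Lemma~\ref{lemma:obtuse-lagrange}, and I expect it to be the only real obstacle. Suppose $x_i=x_j=s_2$ for some $i\neq j$, and take $y=e_i-e_j$. Then $\langle x,y\rangle=0$ and $\sum_k y_k=0$, so the rank-one term $2\Tilde\nu_0$ of the Hessian contributes nothing. The diagonal part gives
\begin{equation}
y^\top\nabla_x^2\mathcal L_{\Tilde\nu_0}(x;\lambda)\,y=-4\left(\ln s_2^2+3+\lambda\right)=-4g'(s_2).
\end{equation}
Local minimality therefore forces $g'(s_2)\leq 0$.

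I will rule this out with the convexity argument from Lemma~\ref{lemma:obtuse-three}. Since $g''(t)=2/t>0$ on $(0,\infty)$, $g'$ is strictly increasing there. So $g'(s_2)\leq 0$ gives $g'<0$ on $(0,s_2)$, and $g$ is strictly decreasing on $(0,s_2]$. Using $\lim_{t\to0^+}g(t)=b\leq 0$, this yields $g(s_2)<b\leq 0$, contradicting $g(s_2)=0$. Hence $g'(s_2)>0$, the quadratic form is strictly negative on a tangent direction, and $x$ cannot be a local minimizer. So $s_2$ appears exactly once.

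\textbf{Conclusion.} Up to permutation, $x=(-s_0,\dots,-s_0,-s_1,\dots,-s_1,s_2)$ with $0<s_0\leq s_1$, $s_2>0$ and $S(x)\geq 0$, after the initial negation. This is the claimed form.
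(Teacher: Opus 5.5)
Your proof is correct and is essentially the paper's argument: Lemma~\ref{lemma:obtuse-three} for the value count, plus the second-order test along $e_i-e_j$, on which the rank-one Hessian term vanishes. The differences are only simplifications. Normalizing to $S(x)\geq 0$ first makes the paper's separate Hessian argument against an all-negative $x$ unnecessary, and an all-positive $x$ is then excluded by your multiplicity step since $m\geq 2$. Also, $g'(s_2)>0$ can be read off directly from stationarity, since $\ln s_2^2+1+\lambda=\Tilde\nu_0 S(x)/s_2\geq 0$ gives $g'(s_2)\geq 2$ without the convexity detour.
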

\begin{proof}
We first show that $x$ has at least one positive coordinate. Suppose every $x_j<0$. Then, since $S(x)<0$, by Lemma~\ref{lemma:obtuse-three}, there is at most one negative root, meaning $x_j=t=-m^{-1/2}$ since $\|x\|_2=1$ as well. Since $m\geq 2$, there exists a $y\in \R^m$ such that $\langle x,y\rangle =0$ and $S(y)=0$. Then, by Lemma~\ref{lemma:obtuse-lagrange}, $\ln t^2 + 1+\lambda ={\Tilde \nu_0}S(x)/t={\Tilde \nu_0}m$ meaning:
\begin{equation}
        \frac 1 2 y^\top \nabla^2_x \mathcal L _{{\Tilde \nu_0}}(x;\lambda)y=-(\ln t^2+3+\lambda)\|y\|_2^2 =-({\Tilde \nu_0}m+2)\|y\|_2^2<0
\end{equation}
which contradicts the Hessian condition in Lemma~\ref{lemma:obtuse-lagrange}. Thus, $x$ has at least one positive coordinate, and by negation, this proves it must have at least one negative coordinate as well. 

WLOG, by negation, let $S(x)\geq 0$. We now show that no positive value can occur in two or more coordinates of such $x$. For example, suppose $x_i=x_k=s>0$ for indices $i\neq k$. Then, defining $y\equiv e_i-e_k $, $\langle x,y\rangle=0$ and $S(y)=0$. Then, by Lemma~\ref{lemma:obtuse-lagrange}, $\ln s^2 +1+\lambda ={\Tilde \nu_0} S(x)/s$, meaning:
\begin{equation}
     \frac 1 2 y^\top \nabla^2_x \mathcal L _{{\Tilde \nu_0}}(x;\lambda)y=-2(\ln s^2+3+\lambda)=-2\left(\frac{{\Tilde \nu_0} S(x)}{s}+2\right)<0
\end{equation}
which contradicts the Hessian condition in Lemma~\ref{lemma:obtuse-lagrange}. 

Together with Lemma~\ref{lemma:obtuse-three}, these facts prove the claim. 
\end{proof}

Towards the hypothesis by \citep{holevo2025quantum} that minimizers have at most two distinct values, we focus on families of three distinct values. 
\begin{lemma}\label{lemma:obtuse-shape-final}
   Suppose $x\in \S^{m-1}$ is a local minima of $f_{{\Tilde \nu_0}}(x)$ with no zero coordinate. Then, if $x$ has three distinct values, $x=(-s_0,\dots,-s_0,-s_1,s_2)$ up to permutation and negation, and if $x$ instead has two distinct values, $x=(-s_0,\dots,-s_0,s_2)$ up to permutation and negation, where $0<s_0<s_1$ and $s_2>0$.
\end{lemma}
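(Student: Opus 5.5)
The plan is to start from the shape given by Lemma~\ref{lemma:obtuse-shape}, namely $x=(-s_0,\dots,-s_0,-s_1,\dots,-s_1,s_2)$ with $0<s_0\leq s_1$, $s_2>0$ and $S(x)\geq 0$, up to permutation and negation. That lemma already says the positive value $s_2$ occurs exactly once.

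\emph{Two distinct values.} Here the negative coordinates all coincide, so $s_0=s_1$. Relabelling gives $x=(-s_0,\dots,-s_0,s_2)$ directly, and nothing more is needed.

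\emph{Three distinct values.} Here $s_0<s_1$, and the only remaining claim is that the value $-s_1$ occurs in exactly one coordinate. I will argue by contradiction with the second-order condition of Lemma~\ref{lemma:obtuse-lagrange}, following the proof of Lemma~\ref{lemma:obtuse-shape}. Suppose $x_i=x_k=-s_1$ for some $i\neq k$, and set $y\equiv e_i-e_k$. Then $\langle x,y\rangle=0$ and $S(y)=0$. The rank-one term $2\Tilde\nu_0$ of the Hessian contributes $2\Tilde\nu_0 S(y)^2=0$, so
\begin{equation}
\tfrac12\, y^\top\nabla_x^2\mathcal L_{\Tilde\nu_0}(x;\lambda)\,y=-2\left(\ln s_1^2+3+\lambda\right)=-2\,g'(-s_1),
\end{equation}
where $g(t)=t\ln t^2+(1+\lambda)t+b$ with $b=-\Tilde\nu_0S(x)$ is the function from the proof of Lemma~\ref{lemma:obtuse-three}.

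The key step, which I expect to be the only real point, is showing $g'(-s_1)>0$ by a sign argument on $(-\infty,0)$. On this half-line $g''(t)=2/t<0$, so $g$ is strictly concave, and $g(t)\to-\infty$ as $t\to-\infty$. By the stationarity condition, both $-s_1<-s_0<0$ are roots of $g$. By strict concavity, the slope of $g$ at the left root of two distinct roots is strictly larger than the secant slope between them. That secant slope is $0$, so $g'(-s_1)>0$; symmetrically $g'(-s_0)<0$, which is why $-s_0$ may repeat while $-s_1$ may not.

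Hence the quadratic form above is strictly negative on a tangent direction $y$, contradicting Lemma~\ref{lemma:obtuse-lagrange}. Therefore $-s_1$ has multiplicity one and $x=(-s_0,\dots,-s_0,-s_1,s_2)$, as claimed. No use of the sign of $b$ is needed in this step, since concavity alone fixes the sign of $g'$ at the two negative roots.
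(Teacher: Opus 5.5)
Your proposal is correct and follows the paper's proof. The two-value case is read off from Lemma~\ref{lemma:obtuse-shape}, and in the three-value case a repeated $-s_1$ is ruled out with the same test direction $y=e_i-e_k$, giving the same contradiction with the second-order condition of Lemma~\ref{lemma:obtuse-lagrange}. The only difference is cosmetic: the paper gets $\ln s_1^2+3+\lambda>0$ by subtracting the two stationarity equations to obtain $c=2r\ln r/(r-1)$ and then invoking $\ln r<r-1$, while you get it from the strict concavity of $g$ on $(-\infty,0)$ and the zero secant slope between the roots $-s_1<-s_0$, which unwinds to exactly the same inequality.
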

\begin{proof}
   It suffices to show that if $x$ has three distinct values, $v$ in Lemma~\ref{lemma:obtuse-shape} has multiplicity $1$.  
   Define $r\equiv s_1/s_0$ and $c\equiv {\Tilde \nu_0} S(x)/s_0$. Then, Lemma~\ref{lemma:obtuse-lagrange} gives $\ln s_0^2 +1 +\lambda =-c$ and $\ln s_1^2 +1+\lambda =-c/r$. Subtracting and re-arranging, 
   \begin{equation}\label{eq:obtuse-c(r)}
       c=\frac{2r\ln r }{r-1}
   \end{equation}
   Now, suppose $s_1$ has multiplicity at least two, meaning there exist distinct indices $i,k\in [m]$ such that $x_i=x_k=-s_1$. Then, let $y=e_i-e_k$ which satisfies $\langle x,y\rangle=0$ and $S(y)=0$. Then, 
   \begin{equation}
        \frac 1 2 y^\top \nabla^2_x \mathcal L _{{\Tilde \nu_0}}(x;\lambda)y=-2(\ln s_1^2 + 3+\lambda)=2(c/r-2)=4\left(\frac{\ln r }{r-1}-1\right)<0
   \end{equation}
   since $\ln r < r-1$ for $r>1$. This contradicts the Hessian condition in Lemma~\ref{lemma:obtuse-lagrange}, meaning $v$ can not have multiplicity at least two. 
\end{proof}

\subsection{Elimination of Three-Value Minimizers}
We now show that the three distinct value family in Lemma~\ref{lemma:obtuse-shape-final} does not admit any minimizers. 
\begin{lemma}\label{lemma:obtuse-no-three}
    There does not exist an $x\in \S^{m-1}$ with no zero coordinate and three distinct coordinate values such that $x$ is a local minimizer of $f_{{\Tilde \nu_0}}(x)$.
\end{lemma}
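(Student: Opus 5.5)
By Lemma~\ref{lemma:obtuse-shape-final} we may assume, up to permutation and negation, that $x=(-s_0,\dots,-s_0,-s_1,s_2)$ with $k\geq 1$ copies of $-s_0$, $0<s_0<s_1$, $s_2>0$ and $S(x)\geq 0$. The aim is to contradict the second-order condition of Lemma~\ref{lemma:obtuse-lagrange}. I will use a tangent direction $y$ with $S(y)=0$, so that the rank-one term $2\tilde\nu_0 S(y)^2$ drops out of the Hessian form. Restricting to $y=(\alpha,\dots,\alpha,\beta,\gamma)$, I impose
\[
k\alpha+\beta+\gamma=0,\qquad -ks_0\alpha-s_1\beta+s_2\gamma=0 .
\]
This leaves a one-dimensional family of directions when $k\geq1$.

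To evaluate the Hessian, write $r=s_1/s_0>1$, $c=\tilde\nu_0S(x)/s_0=\frac{2r\ln r}{r-1}$ as in \eqref{eq:obtuse-c(r)}, and $d\equiv\tilde\nu_0S(x)/s_2=cs_0/s_2\geq0$. The stationarity equations give the diagonal Hessian entries
\[
2(c-2),\qquad 2(c/r-2),\qquad -2(d+2)
\]
on the $s_0$-block, the $s_1$-coordinate and the $s_2$-coordinate respectively. Hence
\[
\tfrac12 y^\top\nabla^2\mathcal L\, y = k\alpha^2(c-2)+\beta^2(c/r-2)-\gamma^2(d+2).
\]
The terms have mixed signs: $c>2$ but $c/r<2$ for $r>1$. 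So the favorable terms are the $\beta$ and $\gamma$ ones, and the $s_0$-block is the obstacle. Solving the two linear constraints for $(\alpha,\beta)$ in terms of $\gamma$ makes the form a multiple of $\gamma^2$. The task is then to show that this coefficient, which is explicit in $r,k,s_0,s_2$, is negative.

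To eliminate the remaining free quantities I will use the remaining constraints. The first is that $-s_0$ and $-s_1$ are both negative roots of $g(t)=t\ln t^2+(1+\lambda)t+b$. Scaling $u=-t/s_0$ turns this into $2u\ln u = c(u-1)+\text{const}$, whose two solutions $u=1$ and $u=r$ lie on the two real branches $W_0$ and $W_{-1}$ of the Lambert $W$ function. The second is the positive-root equation for $s_2$, together with $ks_0^2+s_1^2+s_2^2=1$ and $S(x)=s_2-ks_0-s_1\geq0$. Since $S(x)\geq 0$ gives $s_2\geq ks_0+s_1$, I expect to bound $s_2/s_0$ from below in terms of $k$ and $r$. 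The sign of the Hessian coefficient should then reduce to a single-variable inequality in $r$, uniform in $k$. It should be comparable to a statement about $h(r)=\frac{\ln r}{r-1}$ evaluated at $r$ and at $1/r$: note $h(1/r)=r\,h(r)$ and $c/r=2h(r)$. I also expect the $W_0$ and $W_{-1}$ branches to pair as reciprocals, which is the "algebraic reciprocal inequality" of Appendix~\ref{x:obtuse}.

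The main obstacle is this last step. The choice of $y$ might need tuning, for instance weighting $\gamma$ more heavily to exploit the $-(d+2)$ term. The resulting inequality also has to hold for every multiplicity $k$ and every admissible $r>1$, which requires controlling $d$. I will first try the case $k=1$ and $d\to0$, which I believe is the extremal one because $d\geq0$ only helps. I will then show monotonicity in $k$, so that the problem reduces to a single one-variable inequality in $r$ that can be attacked with elementary bounds on $\ln r$.
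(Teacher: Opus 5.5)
\paragraph{Verdict.} There is a genuine gap, and it comes from your first move: imposing $S(y)=0$ so that the rank-one term drops out.

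\paragraph{The $m=3$ case rules out your direction.} For $m=3$, which is your $k=1$ and the case you treat as extremal, the space $\{y:\langle x,y\rangle=0,\ S(y)=0\}$ is one-dimensional, so there is nothing to tune. With $s_0=1$ and $s\equiv s_2/s_0$, it is spanned by $y=\bigl(-\frac{s+r}{r-1},\frac{1+s}{r-1},1\bigr)$, and
\[
\tfrac12\,y^\top\nabla^2\mathcal L\,y=\frac{(1+s)^2\bigl[(c-2)+(\frac{c}{r}-2)\bigr]}{(r-1)^2}+\frac{2(1+s)(c-2)}{r-1}+(c-2)-\frac{c}{s}-2 .
\]
Here $s$ is not a free parameter. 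It is fixed by $\frac{s\ln s}{1+s}=\frac{r\ln r}{r-1}$, which forces $s>r$ and also makes your $d=c/s$ a function of $r$. Each term can now be bounded:
\begin{itemize}
\item $(c-2)+(\frac{c}{r}-2)=\frac{2(r+1)\ln r}{r-1}-4>0$;
\item using $s>r$, the middle term exceeds $\frac{4(1+r)(r\ln r-r+1)}{(r-1)^2}\ge 4$;
\item $c-2-\frac{c}{s}-2>-4$.
\end{itemize}
So the form is strictly positive for every $r>1$; it tends to about $13.65$ as $r\to1^+$. These configurations are genuine critical points: for instance $r=3/2$ with $\tilde\nu_0\approx1.25$.

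\paragraph{Consequences for your plan.} Your test direction never detects instability for $m=3$, and the single-variable inequality you hope to reach is false. The same failure occurs for larger $k$ on part of the admissible range. For example, at $k=2$, $r=5$ one has $s\approx 9.29>k+r$, and the analogous form is about $+2.6$. Non-uniform $y$ with $S(y)=0$ cannot help, because the $s_0$-block carries positive weight $c-2$, so uniform is already optimal there.

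\paragraph{The missing idea.} The negative direction must have $S(y)\neq 0$. Showing that it exists requires the exact value $\tilde\nu_0=\frac{2r\ln r}{(r-1)(s-k-r)}$, which is forced by stationarity together with $S(x)>0$. Your plan never uses this relation.

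\paragraph{How the paper proceeds.}
\begin{itemize}
\item It keeps the two-parameter family $y=(\alpha,\dots,\alpha,\beta,\gamma)$ subject only to $\langle x,y\rangle=0$.
\item It substitutes this $\tilde\nu_0$ into the term $2\tilde\nu_0S(y)^2$.
\item It shows the resulting $2\times2$ form is indefinite via $\det M<0$, equivalently $P_k(r,s,c)>0$.
\item The $k$-dependence is the positive summand $(k-1)(2r-c)(c+2s)$, so only $P_1>0$ is needed.
\item After $\sigma=1/r$, $t=s/r$, $\kappa=\frac{\ln r}{r-1}$, this becomes the Lambert-$W$ reciprocal inequality of Lemma~\ref{lemma:obtuse-reciprocal}.
\end{itemize}
Since $P_1\to 0$ as $r\to1^+$, that certificate has essentially no slack. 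Discarding the directions with $S(y)\neq0$ is therefore fatal rather than a harmless simplification. At $m=3$, $r=3/2$, the negative direction is roughly $y\approx(1,-1.17,-0.17)$, which has $S(y)\approx-0.35$.
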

\begin{proof}
    
We do this by first parameterizing this family as $x=s_0\cdot (-1,\dots,-1,-r,s)$ where $r\equiv s_1/s_0>1$ as before, $s\equiv s_2/s_0>0$, and $k\equiv m-2\geq 1$. 

Since $\|x\|_2=1$, we have that $s_0=(k+r^2+s^2)^{-1/2}$, and $S(x)=s_0(s-k-r)$ by summation. Taking $c\equiv {\Tilde \nu_0} S(x)/s_0$ as in the proof of Lemma~\ref{lemma:obtuse-shape-final}, \eqref{eq:obtuse-c(r)} gives:
\begin{equation}\label{eq:obtuse-c0(r,s)}
    {\Tilde \nu_0}=\frac{c}{s-k-r}=\frac{2r\ln r }{(r-1)(s-k-r)}
\end{equation}
Now, \eqref{eq:obtuse-c(r)} for $r>1$ gives $c>0$, meaning $s>k+r$ since $c,{\Tilde \nu_0},s_0>0$ implies $S(x)>0$. Applying Lemma~\ref{lemma:obtuse-lagrange}, we have $\ln s_0^2 + 1+\lambda=-c$ and $\ln s_2^2+1+\lambda=c/s$. Subtracting, 
\begin{equation}\label{eq:obtuse-s(r)}
    \frac{s\ln s}{1+s}=\frac c 2 =\frac{r\ln r }{r-1}
\end{equation}
where we equate with \eqref{eq:obtuse-c(r)}. Thus, our family is defined by a single parameter $r>1$ where $s$ is given by \eqref{eq:obtuse-s(r)}, ${\Tilde \nu_0}$ is forced by \eqref{eq:obtuse-c0(r,s)}, and $s>k+r$ is necessary so that $S(x),{\Tilde \nu_0}>0$. 

At this point, we seek to construct a $y\in \R^m $ such that $\langle x,y\rangle=0$ with $y^\top \nabla^2_x \mathcal L_{{\Tilde \nu_0}}(x;\lambda)y<0$ and $\nabla \mathcal L_{{\Tilde \nu_0}}(x;\lambda)=0$, the existence of which would exclude all minimizers from the three value family via Lemma~\ref{lemma:obtuse-lagrange}. In particular, we show that there exists such a $y$ that is uniform on each coordinate group. Let $y=(\alpha,\dots,\alpha,\beta,\gamma)\in \R^m $ such that:
\begin{equation}\label{eq:obtuse-gamma-elim}
    0=\langle x,y\rangle =-ks_0\alpha-rs_0\beta+ss_0\gamma \Rightarrow \gamma=\frac{k\alpha+r\beta}{s}
\end{equation}
Recall that $\nabla \mathcal L_{{\Tilde \nu_0}}(x;\lambda)=0$ gives:
\begin{equation}\label{eq:obtuse-three-gradient-zero}
    \ln s_0^2 + 1+\lambda=-c,\quad \ln s_1^2 +1 +\lambda =-c/r,\quad \ln s_2^2+1+\lambda=c/s
\end{equation}
Noting that the diagonal term in the Hessian expression in \eqref{eq:obtuse-lagrange} is the left hand side of \eqref{eq:obtuse-three-gradient-zero} up to an additive constant of $+2$, 
\begin{equation}\label{eq:obtuse-hessian-M}
\begin{aligned}
    y^\top \nabla_x^2 \mathcal L_{{\Tilde \nu_0}}(x;\lambda)y&=-2(k\alpha^2 (2-c)+\beta^2 (2-c/r)+\gamma^2 (2+c/s))+2{\Tilde \nu_0}(k\alpha+\beta +\gamma)^2\\
    &=M_{11}\alpha^2+2M_{12}\alpha\beta +M_{22}\beta^2 
\end{aligned}
\end{equation}
where substituting \eqref{eq:obtuse-gamma-elim} and grouping terms gives:
\begin{align}
M_{11}
&\equiv k h_0+\frac{k^2}{s^2}h_2+2{\Tilde \nu_0} k^2\left(1+1/s\right)^2\\
M_{12}
&\equiv \frac{kr}{s^2}h_2+2{\Tilde \nu_0} k\left(1+1/s\right)\left(1+r/s\right)\\
M_{22}
&\equiv h_1+\frac{r^2}{s^2}h_2+2{\Tilde \nu_0}\left(1+r/s\right)^2
\end{align}
with $h_0\equiv -2(2-c)$, $h_1\equiv -2(2-c/r)$, and $h_2\equiv -2(2+c/s)$.
Since \eqref{eq:obtuse-hessian-M} is exactly a quadratic form of the matrix $M$, it suffices to show that $\det M<0$. Substituting \eqref{eq:obtuse-c0(r,s)}, expanding, and factoring, we arrive at the following expression which one can verify by direct expansion:
\begin{equation}
    \det M =M_{11}M_{22}-M_{12}^2 =-\frac{4k(k+r^2+s^2)}{rs^3(s-k-r)}P_{k}(r,s,c)
\end{equation}
where 
\begin{equation}
\begin{aligned}
    P_k(r,s,c)&\equiv-c^2 k-c^2 r^2-c^2 s^2+2ckr-2cks-2cr^2s+2cr^2+2crs^2+2cs^2+4krs+4r^2 s-4rs^2\\
    &=P_1(r,s,c)+(k-1)(2r-c)(c+2s)
\end{aligned}
\end{equation}
Since $s>k+r$, $r>1$, and $k\geq 1$, $\det M <0$ is equivalent to $P_k(r,s,c)>0$. By \eqref{eq:obtuse-c(r)}, $r>1$ implies $\ln r<r-1$ meaning $c>0$ and $2r-c=2r(1-\ln (r)/(r-1))>0$, so $(k-1)(2r-c)(c+2s)>0$. Thus, it suffices to show $P_1(r,s,c)>0$. 

We now re-parameterize to derive a scalar inequality involving parameters restricted to bounded domains. Specifically, let $\sigma=1/r\in (0,1)$ and $t=s/r>1$, though one can show $t\in (1,t_0)$ for some unique $t_0$. Then, \eqref{eq:obtuse-s(r)} becomes:
\begin{equation}\label{eq:obtuse-sigma-t}
    \frac{rt(\ln r+\ln t)}{1+rt}=\frac{r\ln r }{r-1}\Rightarrow t(r-1)\ln t =(1+rt-t(r-1))\ln r \Rightarrow  \frac{t\ln t }{1+t}=\frac{\ln r }{r-1}=-\frac{\sigma \ln \sigma}{1-\sigma }\equiv \kappa
\end{equation}
From \eqref{eq:obtuse-c(r)}, we also have that $c=2\kappa/\sigma$. Substituting this parameterization into $P_1(r,s,c)$, 
\begin{equation}\label{eq:obtuse-K}
\begin{aligned}
    P_1(r,s,c)&=-\frac 4{\sigma^4 }K(\sigma,t)\\
    K(\sigma,t)&\equiv \kappa^2(\sigma^2+t^2+1)+\kappa(\sigma^2(t-1)-\sigma(t^2+1)-t(t-1))+\sigma t(t-\sigma-1)
\end{aligned}
\end{equation}
Thus, it suffices to show that $K(\sigma,t)<0$ on the curve \eqref{eq:obtuse-sigma-t}. 

We now use yet another re-parameterization to reduce this to a simple algebraic inequality. Let $\alpha\equiv \kappa/t$ and $\beta \equiv \kappa/\sigma $. We first claim that $0<\alpha<\kappa<1<\beta$ and $\alpha e^\alpha=\kappa e^{-\kappa}=\beta e^{-\beta}$. Manipulating,
    \begin{equation}
        \kappa=\frac{t\ln t}{1+t}\Rightarrow \alpha=\frac{\ln t }{1+t}\Rightarrow \ln (\kappa/\alpha)=\alpha(1+\kappa/\alpha)\Rightarrow \alpha e^\alpha=\kappa e^{-\kappa}
    \end{equation}
    Similarly,
    \begin{equation}
        \kappa=-\frac{\sigma\ln \sigma}{1-\sigma}\Rightarrow \beta=-\frac{\ln \sigma}{1-\sigma}\Rightarrow -\ln(\kappa/\beta)=\beta(1-\kappa/\beta)\Rightarrow \kappa e^{-\kappa}=\beta e^{-\beta}
    \end{equation}
    Since $\sigma \in (0,1)$, $\beta =-\frac{\ln \sigma}{1-\sigma}>1$, and $\alpha < \kappa$ since $t>1$, which concludes our initial claim. Substituting this parameterization into \eqref{eq:obtuse-K}, one can verify by direct expansion that:
    \begin{equation}
       K(\sigma,t)=K(\kappa/\beta,\kappa/\alpha)=\frac{\kappa^2}{\alpha^2\beta^2} \alpha(1+\alpha)\kappa(1-\kappa)\beta(\beta-1)\left(\frac{1}{\kappa(1-\kappa)}-\frac{1}{\alpha(1+\alpha)}-\frac{1}{\beta(\beta-1)}\right)
    \end{equation}
    Noting that the entire pre-factor is positive, it suffices to show that:
    \begin{equation}
        \frac{1}{\kappa(1-\kappa)}<\frac{1}{\alpha(1+\alpha)}+\frac{1}{\beta(\beta-1)}
    \end{equation}
    We prove in Lemma~\ref{lemma:obtuse-reciprocal} that this is indeed a neat algebraic fact for any $0<\alpha<\kappa<1<\beta$ with $\alpha e^\alpha=\kappa e^{-\kappa}=\beta e^{-\beta}$. 
    
\end{proof}

    This shows that there do not exist minimizers in the three distinct value family, thereby proving the hypothesis from \citep{holevo2025quantum} that minimizers have two distinct values. This finishes the case of obtuse pyramids by Lemma 5 and Theorem 3 of \citep{holevo2025quantum}. However, since we also have the guarantee that the positive value has multiplicity one, our handling of the two value family to prove Theorem~\ref{thm:obtuse-59-67} is much simpler than the treatment provided in Lemma 5 of \citep{holevo2025quantum}.

\begin{lemma}\label{lemma:obtuse-two}
   Suppose $p> (m-1)/m$. If $x\in \S^{m-1}$ is a local minimizer of $f_{{\Tilde \nu_0}(p)}(x)$ with no zero coordinate and two distinct values such that $S(x)\geq 0$, then
   \begin{equation}
       f_{{\Tilde \nu_0}(p)}(x)\geq {\Tilde \nu_1}(p)
   \end{equation}
   with equality attained when $x=(-\sqrt q ,\cdots,-\sqrt q,\sqrt p)$ up to permutation and negation.
\end{lemma}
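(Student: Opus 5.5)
By Lemma~\ref{lemma:obtuse-shape-final}, such a minimizer has the form $x=(-s_0,\dots,-s_0,s_2)$ with $m-1$ copies of $-s_0$ and a single positive value $s_2$, where $s_0,s_2>0$. The plan is not to minimize a one-variable function globally. Instead, I will use the first-order conditions to show that this family contains exactly one critical point compatible with the given $\Tilde\nu_0(p)$, namely the claimed equality point, and then evaluate $f$ there.

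\textbf{Step 1: reduce the stationarity conditions to one equation.} Set $s\equiv s_2/s_0>0$. Then $S(x)=s_0(s-(m-1))$, and $S(x)\ge 0$ forces $s\ge m-1$. As in the proof of Lemma~\ref{lemma:obtuse-no-three}, put $c\equiv \Tilde\nu_0 S(x)/s_0$. Lemma~\ref{lemma:obtuse-lagrange} gives $\ln s_0^2+1+\lambda=-c$ and $\ln s_2^2+1+\lambda=c/s$. Subtracting yields $\ln s^2=c(1+1/s)$, that is, $c=2s\ln s/(1+s)$. Since $c=\Tilde\nu_0(s-m+1)$, we get
\begin{equation}
\Tilde\nu_0(p)=h(s)\equiv \frac{2s\ln s}{(1+s)(s-m+1)} .
\end{equation}
The boundary case $s=m-1$ gives $S(x)=0$, hence $c=0$, hence $\ln s=0$, hence $s=1$, which contradicts $m\ge 3$. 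So $s>m-1\ge 2$.

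\textbf{Step 2: monotonicity of $h$ (the main obstacle).} I will show that $h$ is strictly decreasing on $(m-1,\infty)$, so that the equation in Step 1 has at most one solution. Taking the logarithmic derivative, this amounts to
\begin{equation}
\frac1s+\frac{1}{s\ln s}<\frac1{1+s}+\frac1{s-m+1}
\quad\Longleftrightarrow\quad
\frac{1}{s\ln s}<\frac{1}{1+s}+\frac{m-1}{s(s-m+1)} .
\end{equation}
I will split into two cases.
\begin{itemize}
\item If $\ln s\ge 1+1/s$ (which holds for $s\ge 3.6$ or so), then $1/(s\ln s)\le 1/(s+1)$. The inequality follows because the second term on the right is positive.
\item On the remaining bounded window, which can only occur for small $m$ (with $s\in(m-1,3.6)$), the term $(m-1)/(s(s-m+1))$ alone already exceeds $1/(s\ln s)$. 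This holds because $\ln s>\ln 2$ and $s-m+1<1.6$, so $(m-1)/(s-m+1)>2/1.6>1/\ln 2$.
\end{itemize}
This case check is where I expect the fiddly work to lie, but it reduces to elementary estimates.

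\textbf{Step 3: identify the solution and evaluate.} Substituting $s=\sqrt{p/q}$ gives $2s\ln s=\sqrt{p/q}\,\ln(p/q)$ and $(1+s)(s-m+1)=(\sqrt p+\sqrt q)(\sqrt p-(m-1)\sqrt q)/q$. Hence $h(\sqrt{p/q})=\sqrt{pq}\,\ln(p/q)/\big((\sqrt p+\sqrt q)(\sqrt p-(m-1)\sqrt q)\big)=\Tilde\nu_0(p)$. Note that $\sqrt{p/q}>m-1$ precisely because $p>(m-1)/m$.

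By Step 2 this is the only admissible $s$, so the normalization $(m-1)s_0^2+s_2^2=1$ forces $x=(-\sqrt q,\dots,-\sqrt q,\sqrt p)$. Finally, I will compute
\begin{equation}
f_{\Tilde\nu_0}(x)=-(m-1)q\ln q-p\ln p+\Tilde\nu_0\big(\sqrt p-(m-1)\sqrt q\big)^2 .
\end{equation}
Using $(m-1)q=1-p$ and the explicit form of $\Tilde\nu_0$, I will simplify this to $-(\sqrt p\ln p+\sqrt q\ln q)/(\sqrt p+\sqrt q)=\Tilde\nu_1(p)$. This is a direct algebraic check, after bringing everything over the common denominator $\sqrt p+\sqrt q$.

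Hence every such local minimizer attains exactly $\Tilde\nu_1(p)$, which gives the inequality together with the stated equality case, up to permutation and negation.
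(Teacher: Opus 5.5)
Steps 1 and 3 are correct. The stationarity computation gives $\tilde\nu_0=h(s)$ with $s>m-1$, and $h(\sqrt{p/q})=\tilde\nu_0(p)$. The value at $(-\sqrt q,\dots,-\sqrt q,\sqrt p)$ does simplify to $\tilde\nu_1(p)$ using $p+(m-1)q=1$.

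\textbf{The gap is in the second case of Step 2.} There you argue from $\ln s>\ln 2$ and $s-m+1<1.6$ that $(m-1)/(s-m+1)>2/1.6>1/\ln 2$. But $2/1.6=1.25$ while $1/\ln 2\approx 1.443$, so the last inequality is false. For $m=3$ the two crude bounds cannot be combined this way:
\begin{itemize}
\item $\ln s$ is close to $\ln 2$ only near $s=2$, where $s-2$ is close to $0$;
\item $s-2$ is close to $1.6$ only near $s=3.6$, where $\ln s\approx 1.28$.
\end{itemize}
So as written, you have not shown that $h$ is monotone for $m=3$. Your whole argument rests on that monotonicity, since it is what makes the critical point unique.

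\textbf{How to repair it.} The claim itself is true. On that window you need $(m-1)\ln s>s-m+1$. The function $\phi(s)=(m-1)\ln s-(s-m+1)$ has $\phi'(s)=(m-1)/s-1<0$ for $s>m-1$. Hence on the window
\[
\phi(s)>\phi(3.6)=(m-1)(1+\ln 3.6)-3.6\ge 2(1+\ln 3.6)-3.6>0.9 .
\]
Alternatively, drop the case split. Clearing the positive denominator $s\ln s\,(1+s)(s-m+1)$, your condition $h'<0$ is equivalent to
\[
N_m(s)\equiv (s^2+m-1)\ln s-(s+1)(s-m+1)>0 .
\]
This is exactly the inequality the paper proves, using:
\begin{itemize}
\item $N_m(m-1)=m(m-1)\ln(m-1)>0$;
\item $N_m'(m-1)=2(m-1)\ln(m-1)>0$;
\item $N_m''(s)=2\ln s+1-(m-1)/s^2>0$ for $s\ge m-1$.
\end{itemize}
Indeed, your $h$ is precisely $-dH_m/dR_m$ in the paper's notation.

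\textbf{Comparison with the paper.} Once fixed, your route and the paper's rest on the same analytic fact but use it differently. The paper shows $H_m$ is convex as a function of $R_m$ and applies the tangent line at $r'=\sqrt{p/q}$. That bounds $f_{\tilde\nu_0(p)}$ on the entire two-value family with $S(x)\ge 0$, with no appeal to stationarity. Your argument uses the first-order conditions to show this family contains exactly one critical point. That gives the slightly sharper conclusion that the equality point is the only local minimizer there.
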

\begin{proof}
    From Lemma~\ref{lemma:obtuse-shape-final}, $x=(-\sqrt {q'},\cdots,-\sqrt {q'},\sqrt {p'})$ for some $p',q'>0$. Then, $\|x\|_2=1$ implies $q'=(1-p')/(m-1)$, and $S(x)\geq 0$ implies $\sqrt{p'}\geq (m-1)\sqrt{q'}$, meaning $p'\geq (m-1)/m$. Letting $r'\equiv \sqrt{p'/q'}\geq m-1$, we have:
    \begin{equation}\label{eq:obtuse-pqr}
        p'=\frac{r'^2}{r'^2+m-1},\quad  q'=\frac{1}{r'^2+m-1}
    \end{equation}
    meaning:
    \begin{equation}
    \begin{aligned}
        -\sum_{j=1}^m x_j^2 \ln x_j^2 &=-p'\ln p' -(m-1)q'\ln q' =\ln (r'^2+m-1)-\frac{2r'^2\ln r'}{r'^2+m-1}\equiv H_m(r')\\
        S(x)^2&=(\sqrt {p'}-(m-1)\sqrt {q'})^2=\frac{(r'-m+1)^2}{r'^2+m-1}\equiv R_m(r')
    \end{aligned}
    \end{equation}
    We now argue that $H_m$ as a function of $R_m$ is convex. Differentiating, one can verify that:
    \begin{equation}
     H_m'(r')=-\frac{4(m-1)r' \ln r' }{(m-1+r'^2)^2},\quad 
          R_m'(r')=\frac{2(m-1)(r'-m+1)(r'+1)}{(m-1+r'^2)^2}>0
          \end{equation}
          Then, one can also verify that:
          \begin{align}\label{eq:obtuse-two-HmRm-slope}
       \frac{dH_m}{dR_m}&=-\frac{2r' \ln r' }{(r'-m+1)(r'+1)}\\
          \frac{d}{dr'}\left(\frac{dH_m}{dR_m}\right)&=\frac{2N_m(r')}{(r'-m+1)^2(r'+1)^2}
    \end{align}
    where $N_m(r')\equiv (m-1+r'^2)\ln r' -(r'-m+1)(r'+1)$. We now show that $r' \geq m-1$ implies this numerator is non-negative. 
    \begin{equation}
        \begin{aligned}
            N_m(m-1)&=m(m-1)\ln (m-1) \geq 0\\
            N_m'(m-1)&=2r' \ln r' +\frac{m-1+r'^2}{r' }-(r'+1)-(r'-m+1)\bigg|_{r'=m-1}=2(m-1)\ln (m-1)\geq 0\\
            N_m''(r')&=2\ln r' +1-\frac{m-1}{r'^2}\geq 0\quad
        \end{aligned}
    \end{equation}
    Thus, $N_m'(r')\geq 0$ for $r' \geq m-1$, and $N_m(r')\geq 0$ for $r' \geq m-1$ as well. Since $R_m'(r')>0$ and $\frac d {dr'}(dH_m/dR_m)\geq 0$, it follows that $dH_m/dR_m$ is increasing as a function of $R_m$, and is thus convex in $R_m$. 
    
    Applying convexity at the point $r'=r$, by the slope in \eqref{eq:obtuse-two-HmRm-slope}, 
    \begin{equation}\label{eq:obtuse-two-convex-at-r}
        H_m(r')\geq H_m(r)+\frac{dH_m}{d R_m}\bigg|_{r'=r}(R_m(r')-R_m(r))=H_m(r)-\Tilde \nu_1(p)(R_m(r')-R_m(r))
    \end{equation}
    where one can verify the last equality by substituting $r'=r\equiv \sqrt{p/q}$ into \eqref{eq:obtuse-two-HmRm-slope}. Re-arranging \eqref{eq:obtuse-two-convex-at-r}, it follows that:
    \begin{equation}\label{eq:obtuse-two-convex-bound}
    \begin{aligned}
        f_{{\Tilde \nu_0}(p)}(x)&=-\sum_{j=1}^mx_j^2\ln x_j^2 +{\Tilde \nu_0}(p)S(x)^2=H_m(r')+{\Tilde \nu_0}(p)R_m(r')\\
        &\geq H_m(r)+{\Tilde \nu_0}(p)R_m(r)\\
        &=-p\ln p -(m-1)q\ln q +\frac{\sqrt{pq}(\ln p-\ln q)}{(\sqrt p-(m-1)\sqrt q)(\sqrt p+\sqrt q)}(\sqrt p-(m-1)\sqrt q)^2\\
        &={\Tilde \nu_1}(p)
    \end{aligned}
    \end{equation}
    where one can verify the last equality. It is immediate that equality exclusively occurs for $r'=r$. 
\end{proof}

\subsection{Proof of Theorem~\ref{thm:obtuse-59-67}}
We now prove Theorem~\ref{thm:obtuse-59-67} or equivalently Equations (59) and (67) in \citep{holevo2025quantum}. 
\begin{proof}
    From Lemmas~\ref{lemma:obtuse-zero}, \ref{lemma:obtuse-shape-final}, \ref{lemma:obtuse-no-three}, and \ref{lemma:obtuse-two}, we have that the two relevant competing local minima are $(2^{-1/2},-2^{-1/2},0,\cdots,0)$, which attains $f_{{\Tilde \nu_0}(p)}(x)=\ln 2 $, and $(-\sqrt q,\cdots,-\sqrt q,\sqrt p)$, which gives $f_{{\Tilde \nu_0}(p)}(x)= {\Tilde \nu_1}(p)$ for any $p> (m-1)/m$. From \eqref{eq:obtuse-two-convex-at-r} and \eqref{eq:obtuse-two-convex-bound}, 
    \begin{equation}\label{eq:thm-obtuse-59-c1}
    \begin{aligned}
        {\Tilde \nu_1}(r)&=f_{{\Tilde \nu_0}(r)}(x)=H_m(r)+{\Tilde \nu_0}(r) R_m(r)=H_m(r)-\frac{dH_m}{dR_m}(r)R_m(r)\\
        {\Tilde \nu_1}'(r)&={\Tilde \nu_0}'(r)R_m(r)
    \end{aligned}
    \end{equation}
    However, since the proof of Lemma~\ref{lemma:obtuse-two} implies that $R_m(r)>0$ and $\frac{d}{dr}(\frac{dH_m}{dR_m}(r))> 0$ for $p>(m-1)/m$, we have that ${\Tilde \nu_1}$ is decreasing in $r$, meaning it is decreasing in $p$, since the analogous relation of \eqref{eq:obtuse-pqr} for $p$ and $r$ implies that $p$ increases with $r$. Now, from Equation (66) of \citep{holevo2025quantum}, we have that ${\Tilde \nu_1}(p(m))=\ln 2$. This immediately gives the claim about \eqref{eq:thm-obtuse-59} for $m\leq 6$. For $m\geq 7$, taking $p\to^+(m-1)/m$ or equivalently $r\to^+m-1$, from \eqref{eq:thm-obtuse-59-c1},
    \begin{equation}
        \lim_{r\to (m-1)^+}{\Tilde \nu_1}(r)\leq H_m(m-1)=\ln m -\frac{m-2}{m}\ln(m-1)\equiv \ln 2-g(m)
    \end{equation}
    Differentiating,
    \begin{equation}
        g'(m)=\frac{2(m-1)\ln(m-1)-m}{m^2(m-1)}>0
    \end{equation}
    since the numerator is clearly positive and increasing for $m\geq 7$. Since $g(7)>0$ as well by computation, $g(m)>0$. Since ${\Tilde \nu_1}(r)$ is decreasing, it follows that ${\Tilde \nu_1}(p)<\ln 2 $ for all $p\in ((m-1)/m,1]$ with $m\geq 7$, proving the claim about \eqref{eq:thm-obtuse-59} for $m\geq 7$. 

    Simplifying \eqref{eq:thm-obtuse-59} for $p=p(m)$ with $m\leq 6$, we precisely obtain \eqref{eq:thm-obtuse-67} and its claimed minimizers. 
\end{proof}

\section{Flat Pyramids}\label{sec:flat}
Inspired by the numerical proof strategy of \citep{holevo2026conjecturetightnorminequality} for $d\leq 200$, we prove Theorem~\ref{thm:flat-main} by carefully analyzing various families of minimizers, reducing to scalar inequalities when possible. Our key tool is the equal variables method from the mathematical literature on symmetric inequalities \citep{cirtoaje2006algebraic}, but before getting ahead of ourselves, we start with a few simple regimes.

\subsection{Preliminary Reduction}
Verifying that the stated families in Theorem~\ref{thm:flat-main} indeed achieve $\|x\|_{2\alpha}^{2\alpha}=M(d,\alpha)$ by direct computation, it suffices to bound $\|x\|_{2\alpha}^{2\alpha}$ by $M(d,\alpha)$ in each of the regimes for $\alpha$. We first resolve the trivial regimes of $\alpha\in (0,1/2]$ in Lemma~\ref{lemma:alpha<=1/2} and $d\leq 2$ in Lemma~\ref{lemma:d<=2}. 
\begin{lemma}\label{lemma:alpha<=1/2}
    For $\alpha \in (0,1/2]$ and $x\in \c F_d$, $\|x\|_{2\alpha}^{2\alpha}\geq 2^{1-\alpha}$.
\end{lemma}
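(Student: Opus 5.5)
We must show: for $\alpha\in(0,1/2]$ and $x\in\mathcal F_d$, $\sum_j |x_j|^{2\alpha}\ge 2^{1-\alpha}$. The plan uses only the zero-sum constraint and the concavity of $u\mapsto u^{\alpha}$.

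Write $P=\{j:x_j>0\}$ and $N=\{j:x_j<0\}$. Since $S(x)=0$, we have
\[
A\equiv\sum_{j\in P}x_j=\sum_{j\in N}|x_j|,
\]
and both sets are nonempty because $x\neq 0$. The first step is to bound each group's contribution from below by a single term. For $\alpha\le 1/2$ we have $2\alpha\le 1$, so $u\mapsto u^{2\alpha}$ is concave on $[0,\infty)$ and vanishes at $0$, hence subadditive. This gives
\[
\sum_{j\in P}x_j^{2\alpha}\ge A^{2\alpha},\qquad \sum_{j\in N}|x_j|^{2\alpha}\ge A^{2\alpha},
\]
and therefore $\|x\|_{2\alpha}^{2\alpha}\ge 2A^{2\alpha}$.

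The second step is to bound $A$ from below using the norm constraint:
\[
1=\|x\|_2^2=\sum_{j\in P}x_j^2+\sum_{j\in N}x_j^2\le \Big(\sum_{j\in P}x_j\Big)^2+\Big(\sum_{j\in N}|x_j|\Big)^2=2A^2.
\]
So $A\ge 2^{-1/2}$, and hence
\[
\|x\|_{2\alpha}^{2\alpha}\ge 2\cdot 2^{-\alpha}=2^{1-\alpha}.
\]

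For the equality claim, equality in the second step forces each of $P$ and $N$ to be a single index. Then $x=(2^{-1/2},-2^{-1/2},0,\dots,0)$ up to permutation and negation, which matches the stated family. Subadditivity is strict for $\alpha<1/2$ when a group contains two nonzero entries; at $\alpha=1/2$ the first step is an equality, but the second step still forces the single-index structure.

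Neither step is a real obstacle. The only points needing care are confirming that concavity requires exactly $2\alpha\le1$, which is why the lemma stops at $\alpha=1/2$, and handling the equality case at the endpoint $\alpha=1/2$.
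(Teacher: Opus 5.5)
Your proof is correct and is essentially the paper's argument. You split by sign, use subadditivity of $u\mapsto u^{2\alpha}$ to get $\|x\|_{2\alpha}^{2\alpha}\ge 2A^{2\alpha}$, and then use $1=\|x\|_2^2\le 2A^2$ to conclude $A\ge 2^{-1/2}$; your equality discussion goes beyond what the lemma states but is also correct.
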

\begin{proof}
Separating the coordinates by sign, WLOG, we can write 
$$x=(u_1,\dots,u_n,-v_1,\dots,v_m,0,\dots,0)$$
where $n+m\leq d$ and $u_i,v_j\geq 0$ for $i\in [n],j\in [m]$. For convenience, let $s\equiv S(u)=S(v)$. By Fact~\ref{fact:xp-subadd}, $f(x)=x^{2\alpha}$ is sub-additive. Then, 
   \begin{align*}
       \|x\|_{2\alpha}^{2\alpha}=\sum_{i\in [n]}u_i^{2\alpha}+\sum_{i\in [m]}v_i^{2\alpha}\geq S(u)^{2\alpha}+S(v)^{2\alpha}=2s^{2\alpha}
   \end{align*}
   From the $\|x\|_2=1$ constraint,
   \begin{align*}
       1&=\|x\|_2^2=\sum_{i\in [n]}u_i^2 +\sum_{i\in [m]}v_i^2\leq S(u)^2+S(v)^2=2s^2 
   \end{align*}
   Then, $ \|x\|_{2\alpha}^{2\alpha}\geq 2s^{2\alpha}\geq 2(1/2)^{\alpha}=2^{1-\alpha}$. 
\end{proof}


\begin{lemma}\label{lemma:d<=2}
    Theorem~\ref{thm:flat-main} holds for $d= 2$ and any $\alpha >0$.
\end{lemma}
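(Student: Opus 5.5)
For $d=2$ the feasible set $\c F_2$ is tiny, so I plan to enumerate it and compare values directly. The zero-sum condition forces $x_2=-x_1$, and together with $\|x\|_2=1$ this gives $2x_1^2=1$. Hence $\c F_2=\{(2^{-1/2},-2^{-1/2}),(-2^{-1/2},2^{-1/2})\}$, which is exactly the family $(2^{-1/2},-2^{-1/2},0,\dots,0)$ up to negation, with no zero padding. For either point,
\begin{equation}
    \|x\|_{2\alpha}^{2\alpha}=2\cdot (1/2)^{\alpha}=2^{1-\alpha}.
\end{equation}
Since $\c F_2$ is this single family, its minimum and maximum coincide and both equal $2^{1-\alpha}$ for every $\alpha>0$. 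The uniqueness claim is also immediate: these are the only feasible points at all.

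It then remains to check that $M(2,\alpha)=2^{1-\alpha}$ in every branch. For $\alpha\le 1/2$ this holds by definition. For $\alpha>1/2$, evaluate the second branch expression at $d=2$:
\begin{equation}
    d^{-\alpha}\big((d-1)^\alpha+(d-1)^{1-\alpha}\big)\Big|_{d=2}=2^{-\alpha}(1+1)=2^{1-\alpha}.
\end{equation}
So whichever branch applies at $d=2$ (depending on whether $2\le d(\alpha)$ or $2\ge d(\alpha)$), the value is $2^{1-\alpha}$.

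This also shows that both saturating families agree at $d=2$. The vector $(((d-1)/d)^{1/2},-(d(d-1))^{-1/2})$ becomes $(2^{-1/2},-2^{-1/2})$, so there is no conflict with the stated equality cases. The exceptional case $(d,\alpha)=(3,2)$ is not relevant here.

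There is no real obstacle in this argument. The only point needing care is noting that $d=2$ is a root of the defining equation for $d(\alpha)$, which makes the branch definition of $M$ consistent there regardless of where $d(\alpha)$ lies.
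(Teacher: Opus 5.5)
Your proof is correct and matches the paper's argument: both observe that $d^{-\alpha}\big((d-1)^\alpha+(d-1)^{1-\alpha}\big)=2^{1-\alpha}$ at $d=2$, so $M(2,\alpha)=2^{1-\alpha}$ in every branch, and that $\mathcal{F}_2=\{\pm(2^{-1/2},-2^{-1/2})\}$, so $\|x\|_{2\alpha}^{2\alpha}=2^{1-\alpha}$ on all of $\mathcal{F}_2$. Your extra remarks that both saturating families coincide at $d=2$ and that uniqueness is automatic make explicit what the paper's proof leaves implicit.
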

\begin{proof}
   When $d=2$, $d^{-\alpha}((d-1)^\alpha+(d-1)^{1-\alpha})=2^{1-\alpha}$, meaning $M(2,\alpha)=2^{1-\alpha}$. Then, $S(x)=0$ gives $x_2=-x_1$, meaning $\|x\|_2^2=1$ implies $x_1=2^{-1/2}$.
   Thus, $\|x\|_{2\alpha}^{2\alpha}=2x_1^{2\alpha}=2^{1-\alpha}$.
\end{proof}

We are left to resolve Theorem~\ref{thm:flat-main} for $d\geq 3$, with $\alpha \in (1/2,1)$ or $\alpha> 1$. 

\subsection{Three-Value Reduction via the Equal Variables Method}
As before, we begin by handling various families of minimizers. 
\begin{lemma}\label{lemma:flat-lagrange}
   Define the following Lagrange function along with its respective gradient:
   \begin{equation}
       \begin{aligned}
           \c L (x;\lambda_1,\lambda_2)&=\|x\|_{2\alpha}^{2\alpha}-\lambda_1S(x)-\lambda_2(\|x\|_2^2-1)\\
           \p_j \c L(x;\lambda_1,\lambda_2)&= 2\alpha x_j |x_j|^{2(\alpha-1)}-2\lambda_2 x_j-\lambda_1
       \end{aligned}
   \end{equation}
   If $x\in \S^{d-1}$ with $S(x)=0$ is an extrema of $\c L(x;\lambda_1,\lambda_2)$, then there exists a choice of $\lambda_1,\lambda_2\in \R $ such that $\nabla_x \c L(x;\lambda_1,\lambda_2)=0$.
\end{lemma}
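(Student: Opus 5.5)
This is the classical Lagrange multiplier theorem applied to the map $F(x)=\|x\|_{2\alpha}^{2\alpha}=\sum_j |x_j|^{2\alpha}$ under the two equality constraints $g_1(x)=S(x)=0$ and $g_2(x)=\|x\|_2^2-1=0$. The plan has three steps: check that $F$ is $C^1$ at the relevant points, check the constraint qualification, and then read off the gradient formula.

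\textbf{Smoothness and the gradient.} For $\alpha>1/2$ the map $t\mapsto |t|^{2\alpha}$ is $C^1$ on all of $\R$, with derivative $2\alpha\, t|t|^{2(\alpha-1)}$. At $t=0$ this derivative is $0$, since $2\alpha-1>0$. Hence $F$ is $C^1$ and
\[
\p_j F = 2\alpha x_j|x_j|^{2(\alpha-1)}.
\]
Together with $\p_j S=1$ and $\p_j \|x\|_2^2 = 2x_j$, this gives the stated expression for $\p_j\c L$.

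The case $\alpha\le 1/2$ is already settled by Lemma~\ref{lemma:alpha<=1/2}. If one nevertheless wanted the statement there, the natural route is to restrict to extrema with no zero coordinate, or to read ``extrema'' in the sense of critical points of $\c L$; I would note this and restrict to the regime the section actually uses.

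\textbf{Constraint qualification.} The gradients of the constraints are $\nabla S=\mathbf 1$ and $\nabla(\|x\|_2^2-1)=2x$. These are linearly dependent only if $x=c\mathbf 1$ for some $c$. Such an $x$ cannot satisfy both $S(x)=0$ and $\|x\|_2=1$: the sum constraint forces $c=0$, while the norm constraint forces $x\neq 0$. So the two gradients are linearly independent at every feasible point, and the feasible set $\c F_d$ is a smooth $(d-2)$-dimensional manifold.

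\textbf{Conclusion.} Let $x\in\c F_d$ be a local extremum of $F$ restricted to $\c F_d$. The standard Lagrange multiplier theorem then gives $\lambda_1,\lambda_2\in\R$ with
\[
\nabla F(x)=\lambda_1\mathbf 1+2\lambda_2 x,
\]
which is exactly $\nabla_x\c L(x;\lambda_1,\lambda_2)=0$.

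The only step that needs real attention is the differentiability of $|t|^{2\alpha}$ at $0$. It holds precisely because $\alpha>1/2$, which is the regime the rest of the section uses. The remaining steps are routine.
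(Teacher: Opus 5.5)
Your proposal is correct and takes the same route as the paper. The paper's proof just invokes the Lagrange multiplier theorem (``direct computation and standard calculus''). You additionally supply the two details it leaves implicit: first, that $t\mapsto|t|^{2\alpha}$ is $C^1$ with zero derivative at $0$ when $\alpha>1/2$, which is exactly what Lemma~\ref{lemma:flat-zero} relies on when it sets $\p_k\c L=0$ at a zero coordinate to get $\lambda_1=0$; and second, that $\mathbf 1$ and $2x$ are linearly independent on $\c F_d$.

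Your restriction to $\alpha>1/2$ is also appropriate. For $\alpha\le 1/2$ the stated gradient formula is undefined at zero coordinates, and the lemma is only used for $\alpha\in(1/2,1)\cup(1,\infty)$.
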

\begin{proof}
    This follows by direct computation and standard calculus.
\end{proof}

We first analyze minimizers with a zero coordinate via a similar argument to the proof of Lemma~\ref{lemma:obtuse-zero} for the obtuse case. 
\begin{lemma}\label{lemma:flat-zero}
    For $\alpha \in (1/2,1)\cup (1,\infty)$, if $x\in \c F_{d}$ has a zero coordinate and is the global extremizer of $\|x\|_{2\alpha}^{2\alpha}$, it must be of the form $(2^{-1/2},-2^{-1/2},0,\dots,0)$ up to permutation and negation. 
\end{lemma}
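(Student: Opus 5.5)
Mirroring the proof of Lemma~\ref{lemma:obtuse-zero}, the plan is to exploit the zero coordinate through the Lagrange condition of Lemma~\ref{lemma:flat-lagrange}. Suppose $x_k=0$ at a global extremizer. Since $\alpha>1/2$, the map $t\mapsto t|t|^{2(\alpha-1)}$ is continuous at $0$ and vanishes there. The objective $\|x\|_{2\alpha}^{2\alpha}$ is also $C^1$ (we have $2\alpha>1$), so the constrained first-order condition holds.

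Evaluating $\partial_k\mathcal L=0$ at $x_k=0$ gives $\lambda_1=0$. Every other coordinate then satisfies
\[
x_j\bigl(\alpha|x_j|^{2(\alpha-1)}-\lambda_2\bigr)=0 .
\]
Since $\alpha\neq1$, the function $t\mapsto|t|^{2(\alpha-1)}$ is strictly monotone on $(0,\infty)$. Hence every nonzero $|x_j|$ takes a common value $s>0$, and so $x_j\in\{0,\pm s\}$.

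The zero-sum constraint then forces equal numbers $r\ge1$ of coordinates equal to $+s$ and to $-s$. The norm constraint gives $2rs^2=1$, so
\[
\|x\|_{2\alpha}^{2\alpha}=2r\,s^{2\alpha}=(2r)^{1-\alpha}.
\]
For $\alpha\in(1/2,1)$ this quantity increases in $r$, so a global minimizer within this family needs $r=1$. For $\alpha>1$ it decreases in $r$, so a global maximizer needs $r=1$ as well. In both cases the extremizer is $(2^{-1/2},-2^{-1/2},0,\dots,0)$ up to permutation and negation.

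There is one subtlety in this last step. Lemma~\ref{lemma:flat-zero} is about global extremizers over all of $\mathcal F_d$, while the comparison above only ranges over vectors with a zero coordinate. This is enough: if $x$ were a global extremizer of the form $r\ge2$, then the vector with $r=1$ lies in $\mathcal F_d$ (because $d\ge3$) and gives a strictly better value, which contradicts global optimality.

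The main thing to verify carefully is the justification of the Lagrange conditions at a point with a zero coordinate. For $\alpha\in(1/2,1)$ the function $|t|^{2\alpha}$ is differentiable at $0$ but not $C^2$, which is why we use only first-order information here. The two constraint gradients, $\mathbf 1$ and $2x$, are linearly independent on $\mathcal F_d$ because $x\perp\mathbf 1$ and $x\neq0$, so the constraint qualification holds and Lemma~\ref{lemma:flat-lagrange} applies.
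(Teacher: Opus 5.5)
Your proof is correct and follows the paper's argument step for step. The Lagrange condition at the zero coordinate forces $\lambda_1=0$, so all nonzero coordinates share a common modulus $s$, and the value $(2r)^{1-\alpha}$ is then optimized at $r=1$. Your extra remarks only make explicit what the paper leaves implicit: the $C^1$ regularity of $|t|^{2\alpha}$ for $\alpha>1/2$, the linear independence of the constraint gradients, and the comparison with the $r=1$ vector in $\mathcal F_d$.
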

\begin{proof}
    Let $k\in [d]$ such that $x_k=0$. Then, from Lemma~\ref{lemma:flat-lagrange}, $\p_k \c L(x;\lambda_1,\lambda_2)=0$ gives $\lambda_1=0$. Then, 
    \begin{equation}
        0=\p_j \c L(x;\lambda_1,\lambda_2)=2x_j(\alpha |x_j|^{2(\alpha-1)}-\lambda_2)
    \end{equation}
    meaning $x_j\in\{0,\pm s\}$ for some $s>0$. Since $\|x\|_2^2=1$ and $S(x)=0$, there are the same number of positive and negative coordinates, say $r\geq 1$ of each. Then, $1=\|x\|_2^2=2rs^2$, meaning
    \begin{equation}
        \|x\|_{2\alpha}^{2\alpha}=2r s^{2\alpha}=(2r)^{1-\alpha}
    \end{equation}
    Then, $r=1$ minimizes this value for $\alpha \in (0,1)$ and maximizes it for $\alpha >1$. Since $1=2rs^2$, $s= 2^{-1/2}$, proving the claim.
\end{proof}

We now derive the following minimizer family.  

\begin{lemma}[Equation (22) of \citep{holevo2026conjecturetightnorminequality}]\label{lemma:3v-fam}
   For any $\alpha\in (1/2,1)$ or $\alpha > 1$, any extremizing $x$ in Theorem~\ref{thm:flat-main} has the form $x=(-s_0,\dots,-s_0,s_1,\dots,s_1,s_2,\dots,s_2)$, where $s_i$ appears $k_i$ times for $i\in \{0,1,2\}$, with $0\leq  s_1\leq s_2$ and $0\leq s_0$. 
\end{lemma}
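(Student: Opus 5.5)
The plan is to apply the first-order condition of Lemma~\ref{lemma:flat-lagrange} and count the real roots of the resulting scalar equation. At an extremizer $x\in\c F_d$ there exist $\lambda_1,\lambda_2\in\R$ such that every coordinate $x_j$ is a root of
\begin{equation}
    \phi(t)\equiv 2\alpha\,\mathrm{sgn}(t)|t|^{2\alpha-1}-2\lambda_2 t-\lambda_1 .
\end{equation}
It therefore suffices to show that $\phi$ has at most three distinct real roots, and that when there are three, two of them share a sign (possibly with one equal to $0$). After a global negation, this puts $x$ in the form $(-s_0,\dots,-s_0,s_1,\dots,s_1,s_2,\dots,s_2)$ with $0\le s_1\le s_2$ and $s_0\ge 0$. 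Multiplicities $k_i=0$ are allowed, so fewer than three values are covered automatically.

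The count proceeds on each half-line separately. On $(0,\infty)$ we have $\phi(t)=2\alpha t^{2\alpha-1}-2\lambda_2 t-\lambda_1$, and $\phi''(t)=2\alpha(2\alpha-1)(2\alpha-2)t^{2\alpha-3}$. For $\alpha>1/2$ and $\alpha\neq 1$, this has constant sign on $(0,\infty)$: negative for $\alpha\in(1/2,1)$ and positive for $\alpha>1$. So $\phi$ is strictly concave or strictly convex there and has at most two positive roots. By the oddness of the nonlinear term, the same holds on $(-\infty,0)$ with the opposite curvature. This alone gives up to four nonzero roots plus possibly $0$, so a sharper step is needed.

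The sharpening follows the sign-of-$\lambda_1$ argument used in Lemma~\ref{lemma:obtuse-three}, since $\phi(0^\pm)=-\lambda_1$. Negating $x$ sends $\lambda_1\mapsto-\lambda_1$, so I may assume $\lambda_1\ge 0$. Take first $\alpha\in(1/2,1)$. On $(-\infty,0)$, $\phi$ is strictly convex, with $\phi(0^-)=-\lambda_1\le 0$ and $\phi(t)\to\mp\infty$ or $\pm\infty$ according to the sign of $\lambda_2$. Suppose there were two negative roots $u<v<0$. Convexity together with $\phi(0)\le 0$ forces $\phi<0$ on $(u,v)$ and $\phi>0$ beyond each root, that is, on $(v,0)$. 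This contradicts $\phi(0^-)\le 0$ unless $\phi\equiv$ a degenerate case, which strictness rules out. So there is at most one negative root. The case $\alpha>1$ is mirrored: the concave/convex roles swap, and the Rolle-type argument, run on whichever side $\phi(0)$ has the unfavorable sign, again leaves at most one root on one half-line. If $\lambda_1\neq 0$, then $0$ is not a root, and we get at most one value of one sign and at most two of the other. If $\lambda_1=0$, Lemma~\ref{lemma:flat-zero}'s computation gives $x_j\in\{0,\pm s\}$, which also fits the form with $s_1=0$.

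The main obstacle I expect is handling the directions carefully for $\alpha>1$. There the convex side is $(0,\infty)$, and I must check that the one-root conclusion lands on the half-line that gives the stated shape (a single negative value, up to negation) rather than on the wrong one. If it lands on the positive side, a global negation fixes it, which is why the statement is only claimed up to negation.
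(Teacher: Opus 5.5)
Your argument is correct and is essentially the paper's. Both proofs use the first-order condition of Lemma~\ref{lemma:flat-lagrange} and count roots on each half-line using the fixed curvature of the power term together with the pinned value $-\lambda_1$ at the origin, then finish with a global negation. The paper phrases this as solving $g(t)=\pm\lambda_1$ for $g(t)=2\alpha t^{2\alpha-1}-2\lambda_2 t$, with a case split on $\mathrm{sign}(\lambda_2)$. The one step to tighten is the strictness argument: extend $\phi$ continuously to $0$ and apply strict convexity on $[u,0]$ to $\phi(u)=\phi(v)=0$. This gives $\phi(0)>0$, contradicting $\lambda_1\ge 0$, and it covers $\lambda_1=0$ without a separate ``degenerate case.''
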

\begin{proof}
Since the form in Lemma~\ref{lemma:flat-zero} lies within the claimed family, it suffices to prove the claim for $x$ with no zero coordinate.
From Lemma~\ref{lemma:flat-lagrange}, there exists a $\lambda_1,\lambda_2\in \R$ such that each $x_j$ is a root of the function
\begin{equation}
    f(t;\lambda_1,\lambda_2)\equiv 2\alpha t |t|^{2(\alpha-1)}-2\lambda_2 t-\lambda_1
\end{equation}
We now claim that $f(t)$ has at most three non-zero roots, which is equivalent to showing that $g(t)=\pm \lambda_1$ has at most three total solutions where $g(t)=2\alpha t^{2\alpha-1}-2\lambda_2 t$ for $t>0$. Taking a derivative,
\begin{equation}
    g'(t)=2\alpha (2\alpha-1)t^{2\alpha-2}-2\lambda_2
\end{equation}
If $\lambda_2\leq 0$, then $g'(t)>0$ for $t>0$, meaning $g(t)=|\lambda_1|$ and $g(t)=-|\lambda_1|$ each have at most one solution. Supposing $\lambda_2>0$, $g'(t)$ has at most one root for $t>0$. Thus, if $\alpha \in (1/2,1)$, $g$ increases from $0$ to a maximum and then tends to $-\infty$, in which case $g(t)=|\lambda_1|$ has at most two solutions while $g(t)=-|\lambda_1|$ has at most one. If $\alpha >1$, $g$ decreases from $0$ to a minimum and then tends to $+\infty$, in which case $g(t)=|\lambda_1|$ has at most one solution while $g(t)=-|\lambda_1|$ has at most two. 

Thus, there are at most three total solutions in each case, meaning $f(t;\lambda_1,\lambda_2)$ has at most three non-zero roots and thus $x$ has at most three distinct coordinate values. The claimed family follows by taking $-x$ instead of $x$ if $x$ has two distinct negative values.
\end{proof}

While Lemma~\ref{lemma:3v-fam} reduces our $d$-dimensional constrained optimization problem to simply constrained optimization over $(s_0,s_1,s_2,k_0,k_2)$, this solution family is still large. At the heart of our proof is a technique in symmetric inequalities known as the equal variables method \citep{cirtoaje2006algebraic}, which we use to eliminate the $k_2$ degree of freedom. 

\begin{theorem}[Equal Variables Theorem \citep{cirtoaje2006algebraic}]\label{thm:eqv}
For $d\geq 3$, let $a,x\in \R^d$ with $a_i\geq 0$ for all $i\in [d]$ and $0\leq x_1\leq \cdots\leq x_d$ such that
$$\sum_{j=1}^{d}x_j=\sum_{j=1}^{d}a_j,\quad \sum_{j=1}^{d}x_j^k =\sum_{j=1}^{d}a_j^k $$
where $a$ is fixed and $k\in \R$. If $k=0$, assume that $\prod_{i=1}^{d}x_i=\prod_{i=1}^{d}a_i>0$ instead. Let $f:(0,\infty)\to \R$ be a differentiable function such that $g:(0,\infty)\to \R$ defined by $g(x)=f'(x^{\frac 1 {k-1}})$ is strictly convex. Consider the following sum:
$$S_d\equiv \sum_{i=1}^{d}f(x_i)$$
Then, 
\begin{enumerate}
    \item If $k\leq 0$, $S_d$ is maximum for $x_1=\cdots=x_{d-1}\leq x_{d}$ and is minimum for $x_1\leq x_2=\cdots=x_d$.
    \item If $k>0$ and $\lim_{x\to 0^+}f(x)= -\infty$, $S_d$ is maximum for $x_1=\cdots=x_{d-1}\leq x_{d}$.
    \item  If $k>0$ and $\lim_{x\to 0^+}f(x)=\infty$, $S_d$ is minimum for $x_1=\cdots =x_{j-1}=0$ and $x_{j+1}=\cdots =x_d$, where $j\in [d]$. 
\end{enumerate}
If $f:[0,\infty)\to \R$ is continuous at $0$, the limit conditions in the last two statement are waived. 
\end{theorem}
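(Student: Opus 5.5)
The plan is to reduce to a monotonicity statement for three variables and then assemble the $d$-variable result by a local-perturbation argument. I would assume $k\neq 1$: for $k=1$ the two constraints coincide and $x^{1/(k-1)}$ is undefined. I would treat $k=0$ in parallel, with $\sum \ln x_j$ in place of $\sum x_j^k$.

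\emph{Step 1: the case of three variables.} Fix the two constraints and let three coordinates $x_a<x_b<x_c$ move while the others stay fixed. The admissible set near a point with distinct values is a smooth curve. Along it, $dx_a+dx_b+dx_c=0$ and $x_a^{k-1}dx_a+x_b^{k-1}dx_b+x_c^{k-1}dx_c=0$; for $k=0$ the second relation is $x_a^{-1}dx_a+x_b^{-1}dx_b+x_c^{-1}dx_c=0$, which is the same formula. Solving by Cramer's rule gives $dx_a:dx_b:dx_c=(y_b-y_c):(y_c-y_a):(y_a-y_b)$, where $y_i\equiv x_i^{k-1}$. Hence
\begin{equation}
dS_d \propto g(y_a)(y_b-y_c)+g(y_b)(y_c-y_a)+g(y_c)(y_a-y_b),
\end{equation}
where I have used $f'(x_i)=g(y_i)$. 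This expression is a nonzero multiple of the second divided difference $[y_a,y_b,y_c;g]$, so strict convexity of $g$ fixes its sign. The orientation of the map $x\mapsto y$ depends on the sign of $k-1$, so I would pin down the sign by tracking the largest variable $x_c$. The conclusion I am aiming for is that $S_d$ is strictly monotone in $x_c$ along the curve, in a direction fixed by the sign of $k$.

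Along the curve the ordering is preserved until two variables collide, or, when $k>0$, until the smallest variable reaches $0$. So the extreme values over this curve are attained at its endpoints. These are either $x_a=x_b$ or $x_b=x_c$, or, when $k>0$, $x_a=0$.

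\emph{Step 2: the global argument.} The constraint set, restricted to $0\le x_1\le\cdots\le x_d$, is compact when $k\le 0$: the product or power-sum constraint keeps the $x_i$ away from $0$. It is also compact when $k>0$ and $f$ is continuous at $0$. When $f\to\pm\infty$ at $0$, the limit hypotheses in items 2 and 3 guarantee that the relevant extremum is not approached at the boundary, or is approached only in the allowed way.

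Take an extremizer. If it had three distinct positive values at which a variable can still be moved in both directions, Step 1 would let us perturb those three coordinates and strictly improve $S_d$, a contradiction. So every triple must sit at an endpoint of its curve. Applying this to all triples forces one of two patterns, according to the direction of monotonicity from Step 1:
\begin{itemize}
\item all variables except the largest are equal, $x_1=\cdots=x_{d-1}\le x_d$;
\item all variables except the smallest are equal, $x_1\le x_2=\cdots=x_d$.
\end{itemize}
When $k>0$ there is a third possibility, that the smallest variables are pushed to $0$. This yields the stated forms in items 1, 2 and 3.

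\emph{Main obstacle.} The hard step is getting the sign bookkeeping in Step 1 right for each regime: $k<0$, $k=0$, $0<k<1$ and $k>1$, together with the accompanying orientation reversal of $x\mapsto x^{k-1}$. Once that is fixed, I would check that the triples argument correctly rules out mixed configurations. An example is two clusters of equal values, each of multiplicity at least $2$. I would handle this by choosing a triple with two entries from one cluster and one from the other, and showing that the curve through it can be moved in the improving direction, which contradicts extremality.
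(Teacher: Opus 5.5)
Your plan is essentially the argument the paper relies on. The paper gives no proof of its own and defers to Cirtoaje's Chapter 5 proof, which uses a three-variable lemma showing $S_d$ is strictly monotone along the curve cut out by the two constraints (via the sign of a second divided difference of $g$), then derives a contradiction by sliding a triple such as $(x_1,x_{d-1},x_d)$ or $(x_1,x_2,x_d)$ to the better endpoint of its curve. On the sign bookkeeping you flag: the direction does not depend on $k$ at all. Along the curve $dS_d/dx_c=(y_c-y_a)(y_c-y_b)\,[y_a,y_b,y_c;g]>0$ for every $k\neq 1$, and the sign of $k$ only decides whether the decreasing end can terminate at $x_a=0$ rather than at $x_b=x_c$.
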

\begin{proposition}\label{thm:eqv-unique}
   In each case of Theorem~\ref{thm:eqv}, any $x$ which attains the optimum value of $S_d$ must lie within the claimed family.
\end{proposition}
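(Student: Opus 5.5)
We need to prove the uniqueness strengthening of the Equal Variables Theorem: every $x$ attaining the optimum of $S_d$ lies in the claimed family. The plan is to reprove the core of Theorem~\ref{thm:eqv} along the lines of Cirtoaje's argument, checking that every inequality used is strict whenever $x$ is outside the family.

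\textbf{Step 1: reduce to three coordinates.} Suppose $x$ is an optimizer (say a maximizer in case 1) with $x_1=\cdots=x_{d-1}\le x_d$ failing. Then there are indices $i<j<l$ with $x_i<x_j\le x_l$, or $x_i\le x_j<x_l$, that violate the equal-variable pattern on the triple $(x_i,x_j,x_l)$. Fixing the other coordinates, the triple satisfies the same two constraints: $\sum x=\text{const}$ and $\sum x^k=\text{const}$ (or the product, if $k=0$). Since $x$ is optimal globally, the triple is optimal for the three-variable problem, so it suffices to prove strict uniqueness for $d=3$.

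\textbf{Step 2: the three-variable curve.} For $d=3$, parametrize the constraint set by one coordinate. Away from degenerate points, the two constraints cut out a curve on which we can write $x_a=x_a(t)$. Implicit differentiation gives
\[
\dot x_a \propto (x_b^{k-1}-x_c^{k-1}),\qquad \text{cyclically},
\]
and therefore
\[
\frac{dS_3}{dt}\propto \sum_{\text{cyc}} f'(x_a)\,(x_b^{k-1}-x_c^{k-1}).
\]
Substitute $y_a=x_a^{k-1}$, so that $f'(x_a)=g(y_a)$. The sum becomes $-(y_a-y_b)(y_b-y_c)(y_c-y_a)$ times a divided difference of $g$. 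By strict convexity of $g$, this divided difference is strictly positive whenever the $y$'s are distinct. So $S_3$ is strictly monotone along the curve in the interior, where all $x_a$ are distinct.

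\textbf{Step 3: endpoints and uniqueness.} Strict monotonicity means the optimum can only occur at endpoints of the curve: a point where two coordinates coincide, or (for $k>0$) where $x_1=0$. The sign of $k$ and the limit conditions determine which endpoint is the maximum and which is the minimum, as in Theorem~\ref{thm:eqv}. Because the monotonicity is strict, no interior point and no wrong endpoint can tie the optimum. This gives uniqueness for $d=3$, and Step 1 lifts it to all $d$.

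\textbf{Main obstacle.} The hardest part is the careful treatment of the endpoints: distinguishing $x_1=x_2$ from $x_2=x_3$ in each case, and the boundary $x_1=0$ when $k>0$. In case 3, the family allows several zeros, so the lifting argument there must be applied to triples containing at most one zero coordinate. A second delicate point is that when $k\le 0$ or $k=1$ the map $t\mapsto t^{k-1}$ reverses or degenerates. I would handle $k=1$ separately: there the constraints force $x=a$ up to permutation only in special situations, or the theorem is vacuous. For the other values of $k$, I would track the sign reversal explicitly in the divided-difference identity.
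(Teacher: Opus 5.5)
Your proposal is correct and is essentially the paper's argument made explicit. The paper's proof only observes that Cirtoaje's proof of Theorem~\ref{thm:eqv} works by contradiction through a strict improvement: strict monotonicity of $S_3$ along the three-variable constraint arc via the divided-difference identity for $g$, then lifting to $d$ variables through a triple that violates the pattern. That is exactly what you reconstruct.

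Three details should be tightened. First, you should justify that the ordered three-variable constraint set is a single connected arc and a graph over one coordinate, since this is what lets strict monotonicity pin the optimum to one endpoint. Second, $k=1$ is excluded outright because $g(x)=f'(x^{1/(k-1)})$ is undefined there, so it needs no separate treatment. Third, the order reversal of $t\mapsto t^{k-1}$ occurs for every $k<1$, not only for $k\le 0$.
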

\begin{proof}
    This follows from the fact that the proof of the Equal Variables Theorem in Chapter 5 of \citep{cirtoaje2006algebraic} is proved by way of contradiction, showing that any $x$ not of the claimed form can be replaced with an $x$ of the form that achieves a strictly better value of $S_n$.
\end{proof}

Using Theorem~\ref{thm:eqv} with Proposition~\ref{thm:eqv-unique}, we are able to refine the three value family.

\begin{lemma}\label{lemma:k2=1}
For $\alpha \in (1/2,1)\cup(1,\infty)$, any global extremum $x$ can be expressed in the form of Lemma~\ref{lemma:3v-fam} with $k_2= 1$.
\end{lemma}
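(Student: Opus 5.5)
The plan is to apply the Equal Variables Theorem (Theorem~\ref{thm:eqv}) only to the \emph{positive} coordinates of an extremizer, holding the negative block fixed. By Lemma~\ref{lemma:3v-fam}, a global extremum has the form $x=(-s_0,\dots,-s_0,s_1,\dots,s_1,s_2,\dots,s_2)$ with multiplicities $k_0,k_1,k_2$. Let $u\in\R^n$, $n\equiv k_1+k_2$, collect the positive coordinates, sorted increasingly. The constraints $S(x)=0$ and $\|x\|_2=1$ fix two quantities:
\begin{equation}
\sum_{j=1}^n u_j=k_0s_0,\qquad \sum_{j=1}^n u_j^2=1-k_0s_0^2.
\end{equation}
Any other nonnegative vector $u'$ with the same sum and sum of squares gives a point of $\c F_d$ when substituted for $u$. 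Hence $x$ can only be a global extremizer of $\|x\|_{2\alpha}^{2\alpha}$ if $u$ extremizes $\sum_j u_j^{2\alpha}$ on this constraint set. This is exactly the setting of Theorem~\ref{thm:eqv} with $k=2$ and the vector $a$ taken to be $u$ itself.

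The key steps are as follows.
\begin{enumerate}
\item For $\alpha>1$ we want to maximize, so we take $f(t)=t^{2\alpha}$. Then $g(t)=f'(t^{1/(k-1)})=2\alpha t^{2\alpha-1}$, which is strictly convex since $2\alpha-1>1$.
\item For $\alpha\in(1/2,1)$ we want to minimize $\sum u_j^{2\alpha}$, i.e.\ maximize $\sum f(u_j)$ with $f(t)=-t^{2\alpha}$. Then $g(t)=-2\alpha t^{2\alpha-1}$ is strictly convex since $0<2\alpha-1<1$.
\item In both cases $f$ is continuous at $0$ and $k=2>0$, so the limit condition is waived. Case~2 of Theorem~\ref{thm:eqv} then says the maximum of $\sum f(u_j)$ is attained at $u_1=\dots=u_{n-1}\le u_n$.
\item Proposition~\ref{thm:eqv-unique} shows that every optimizer lies in this family. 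So the extremizer's positive part consists of one value repeated $n-1$ times and a single larger (or equal) value. This means $k_2=1$, with the degenerate case $s_1=s_2$ absorbed by relabelling one copy as the $s_2$ coordinate.
\end{enumerate}

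Some small cases must be handled separately. Theorem~\ref{thm:eqv} requires dimension at least $3$, so if $n\le 2$ we argue directly. If $n=1$ then $k_2=1$ trivially. If $n=2$ the form $(s_1,s_2)$ with $s_1\le s_2$ already has $k_2=1$. If $n=0$ the sum constraint is violated, since $x\ne0$. The zero-coordinate extremizers are covered by Lemma~\ref{lemma:flat-zero}: there $x=(2^{-1/2},-2^{-1/2},0,\dots,0)$, which fits the family with $k_2=1$ (taking $s_1=0$ with multiplicity $d-2$). Note that $u$ may include zeros only in that case, and Theorem~\ref{thm:eqv} allows $x_1\ge 0$, so this is harmless.

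The main obstacle I expect is justifying rigorously that restricting to the positive block loses nothing. Concretely, the optimum of the subproblem must be attained by the global extremizer's own $u$, and the uniqueness statement of Proposition~\ref{thm:eqv-unique} must transfer. I would state explicitly that replacing $u$ by any $u'$ in the same constraint set gives a competitor in $\c F_d$ that differs only in the positive block. If $u$ were not in the equal-variable family, Proposition~\ref{thm:eqv-unique} would supply a $u'$ with strictly better value, contradicting global optimality of $x$. I would also check the sign bookkeeping, namely that the orientation fixed in Lemma~\ref{lemma:3v-fam} (a single negative value) is preserved, so no extra negation is needed.
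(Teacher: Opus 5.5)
Your proposal is correct and follows essentially the same route as the paper. Both arguments fix the negative block $(s_0,k_0)$ and apply Case~2 of the Equal Variables Theorem with $k=2$ and $f(t)=\pm t^{2\alpha}$ to the positive block, using Proposition~\ref{thm:eqv-unique} for uniqueness, then handle positive blocks of size at most two and zero-coordinate extremizers separately; your explicit exchange argument for why the positive block must itself be optimal is, if anything, slightly cleaner than the paper's inf/sup decomposition.
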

\begin{proof}
    From Lemma~\ref{lemma:flat-zero}, it suffices to consider $x$ with no zero coordinate. Then, from Lemma~\ref{lemma:3v-fam}, we have that 
    $$x=(-s_0,\dots,-s_0,s_1,\dots,s_1,s_2,\dots,s_2)$$
    where $0< s_1\leq s_2$, $s_0>0 $, and $s_i$ is repeated $k_i$ times, for $i\in \{0,1,2\}$. We seek to apply Theorem~\ref{thm:eqv} on the positive coordinates to show that $k_2=1$, and we begin by defining the positive block and negative block to be the following vectors:
    \begin{align*}
    x^+&=(s_1,\dots,s_1,s_2,\dots,s_2)\in \R^{d-k_0}\\
    x^-&=(s_0,\dots,s_0)\in \R^{k_0}
    \end{align*}
    Since $S(x)=0$ and no coordinates are zero, $k_0\geq 1$, so fix the negative block by choosing $s_0\in (0,1]$ and $k_0\in [d] $. 
    Then, the zero sum and $L^2$ constraint give the following feasible set over the positive block.
    \begin{align*}
        {\c S}(s_0,k_0)&\equiv \left\{x^+\in \R^{d-k_0}:x^+_i> 0,\quad 
        \sum_{i=1}^{d-k_0}x^+_i=k_0s_0,\quad \sum_{i=1}^{d-k_0}(x^+_i)^2=1-k_0s_0^2\right\}
    \end{align*}
    Then,  
    \begin{align*}
        \inf_{x\in {\c S}}\|x\|_{2\alpha}^{2\alpha}&=\inf_{\substack{
s_0\in(0,1],\,
k_0\in [d]\\
{\c S}(s_0,k_0)\neq\emptyset
}}\left(k_0s_0^{2\alpha}+\inf_{x^+\in {\c S}(s_0,k_0)}\|x^+\|^{2\alpha}_{2\alpha}\right)\quad (1/2<\alpha<1)\\ 
        \sup_{x\in {\c S}}\|x\|_{2\alpha}^{2\alpha}&=\sup_{\substack{
s_0\in(0,1],\,
k_0\in [d]\\
{\c S}(s_0,k_0)\neq\emptyset
}}\left(k_0s_0^{2\alpha}+\sup_{x^+\in {\c S}(s_0,k_0)}\|x^+\|^{2\alpha}_{2\alpha}\right)\quad (\alpha > 1)
    \end{align*}

    First, consider $\alpha \in (1/2,1)$ and the infimum over the positive block for any choice of $s_0,k_0$ such that ${\c S}(s_0,k_0)\neq \emptyset$. Suppose $d-k_0\geq 3$. We seek to apply Theorem~\ref{thm:eqv} for $f(x)=-x^{2\alpha}$, $k=2$, and any $a\in {\c S}(s_0,k_0)$ in order to maximize $S_{d-k_0}=-\|x^+\|_{2\alpha}^{2\alpha}$. We verify that $g''(x)=f^{(3)}(x)=-2\alpha(2\alpha-1)(2\alpha-2)x^{2\alpha-3}>0$ for $\alpha \in (1/2,1)$, meaning $g$ is strictly convex. $f$ is clearly continuous at $0$ and differentiable on $(0,\infty)$. The conditions on $a$ are satisfied by construction of the feasible set ${\c S}(s_0,k_0)$, and the entries of $x^+$ are positive and non-decreasing since $s_2\geq s_1$. Then, by the second statement of Theorem~\ref{thm:eqv}, the maximum of $S_{d-k_0}$, or equivalently $\inf_{x^+\in {\c S}(s_0,k_0)}\|x^+\|_{2\alpha}^{2\alpha}$, is attained only by  $x^+$ of the form:
    $$x^+=(x^+_1,\dots,x_1^+,x_{d-k_0}^+)$$
    where $0< x_{1}^+\leq x_{d-k_0}^+$. If $d-k_0=2$, by permutation, $x^+=(x_1^+,x_{2}^+)$ where $x_{2}^+\geq x_1^+> 0$, and if $d-k_0=1$, $x^+=(x_{d-k_0}^+)$. Thus, regardless of $d-k_0$, $x^+_{i}=x_j^+$ for any $i,j\in [1,d-k_0-1]$, and $x^{+}_i\leq x^+_{d-k_0}$ for $i \in [1,d-k_0]$. 
    
    Next, consider $\alpha > 1$ and the supremum over the positive block for any choice of $s_0,k_0$ such that ${\c S}(s_0,k_0)\neq \emptyset$. Suppose $d-k_0\geq 3$. We again seek to apply Theorem~\ref{thm:eqv}, this time for $f(x)=x^{2\alpha}$, $k=2$, and any $a\in {\c S}(s_0,k_0)$ in order to maximize $S_{d-k_0}=\|x^+\|_{2\alpha}^{2\alpha}$. We verify that $g''(x)=2\alpha(2\alpha-1)(2\alpha-2)x^{2\alpha-3}>0$ for $\alpha > 1$, meaning $g$ is strictly convex, and the rest of the conditions follow in the same manner as the $\alpha \in (1/2,1)$ case. Then, by the second statement of Theorem~\ref{thm:eqv}, the maximum of $S_d$, or equivalently $\sup_{x^+\in {\c S}(s_0,k_0)}\|x^+\|_{2\alpha}^{2\alpha}$, is attained only by $x^+$ of the form:
    $$x^+=(x^+_1,\dots,x_1^+,x_{d-k_0}^+)$$
    where $0< x_{1}^+\leq x_{d-k_0}^+$. Note that this is the same form as the $\alpha \in (1/2,1)$ case, and by the same argument as before, the $d-k_0\in \{1,2\}$ cases are subsumed by this. 
    
    Then, identifying $s_1\equiv x_1^+$ and $s_2\equiv x_{d-k_0}^+$, we see that $k_2\leq 1$. Since this form holds on the positive block for any valid choice of $(s_0,k_0)$, the claim follows, where we use the fact that $s_1=s_2$ is permitted to compress the $k_2=0$ case.
\end{proof}
\subsection{Proof of Theorem~\ref{thm:flat-main}}
Equipped with the much simpler family in Lemma~\ref{lemma:k2=1}, Lemmas~\ref{lemma:a1/2_k0>=2}, \ref{lemma:alpha1/2_k0=1}, \ref{lemma:alpha>1_k2=1_k0>=2}, and \ref{lemma:alpha>1_k2=1_k0=1} reduce the non-trivial regimes of Theorem~\ref{thm:flat-main} to the three one-dimensional optimization problems of Lemmas~\ref{lemma:m2-left-min}, \ref{lemma:m2-right-min}, and \ref{lemma:m1-min}. 

\begin{lemma}\label{lemma:a1/2_k0>=2}
    For $\alpha \in (1/2,1)$ and $d\geq 4$, any global minimizer with $k_2=1$ and $k_0\geq 2$ satisfies
    $$\|x\|_{2\alpha}^{2\alpha}\geq \min\{2^{1-\alpha},d^{-\alpha}((d-1)^{\alpha}+(d-1)^{1-\alpha})\}$$
\end{lemma}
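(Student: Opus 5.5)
We write the candidate as $x=(-s_0,\dots,-s_0,s_1,\dots,s_1,s_2)$, with $k_0\geq 2$ copies of $-s_0$, $k_1=d-k_0-1$ copies of $s_1$, and $0<s_1\leq s_2$. The plan is to exploit the fact that $k_0\geq 2$ gives us extra tangent directions for a second-order condition. We then use the two constraints to collapse everything to a single continuous parameter and one discrete parameter, and compare against the two candidate values.

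\emph{Step 1: second-order restriction on $s_0$.} Since $k_0\geq 2$, the vector $y=e_i-e_j$ supported on two negative coordinates satisfies $S(y)=0$ and $\langle x,y\rangle=0$. The analogue of the Hessian condition in Lemma~\ref{lemma:obtuse-lagrange} for $\c L$ in Lemma~\ref{lemma:flat-lagrange} is that a minimizer needs
$$2\alpha(2\alpha-1)s_0^{2\alpha-2}\geq 2\lambda_2 .$$
We use the stationarity equations $2\alpha t|t|^{2\alpha-2}-2\lambda_2 t=\lambda_1$ at $t=-s_0,s_1,s_2$ to eliminate $\lambda_1,\lambda_2$. Recall the shape of $g(t)=2\alpha t^{2\alpha-1}-2\lambda_2 t$ from the proof of Lemma~\ref{lemma:3v-fam}: for $\alpha\in(1/2,1)$ it has a single maximum at $t^*$, and $s_1\leq s_2$ solve $g=\lambda_1$ on opposite sides of $t^*$ (or coincide). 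The Hessian inequality says $s_0\leq t^*$. If $k_1\geq 2$, the same argument with $e_i-e_j$ inside the $s_1$ block gives $s_1\leq t^*\leq s_2$. I expect these inequalities to rule out the regime where $s_0$ is large relative to the positive values. In particular, they should force the positive block to be ``one big value plus several small ones.''

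\emph{Step 2: reduction to a scalar problem.} We fix $k_0$. The constraints $k_0s_0=k_1s_1+s_2$ and $k_0s_0^2+k_1s_1^2+s_2^2=1$ leave a one-parameter curve, which we parametrize by $u=s_1/s_2\in(0,1]$. On this curve, $\|x\|_{2\alpha}^{2\alpha}=k_0s_0^{2\alpha}+k_1s_1^{2\alpha}+s_2^{2\alpha}$ is an explicit function $\Phi_{k_0}(u)$. We then lower-bound it in two regimes, split by the sign information from Step 1.
\begin{itemize}
\item Near $u\to 0$, i.e.\ $s_1$ small, we use sub-additivity of $t\mapsto t^{2\alpha}$ as in Lemma~\ref{lemma:alpha<=1/2}, grouping the small entries. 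This should push the value above $2^{1-\alpha}$.
\item Near $u=1$ the vector approaches the ``one versus rest'' shape. Here we compare with $d^{-\alpha}((d-1)^\alpha+(d-1)^{1-\alpha})$, which is attained exactly when all non-distinguished coordinates have equal magnitude.
\end{itemize}
To handle the discrete parameter $k_0\in[2,d-1]$, we plan to show that at fixed $u$ the bound is monotone in $k_0$. This should follow from concavity of $t\mapsto t^{\alpha}$ in the squared variables, after rewriting $k_0s_0^{2\alpha}=k_0^{1-\alpha}(k_0s_0^2)^\alpha$. The worst case then sits at an endpoint, $k_0=2$ or $k_0=d-1$ (the latter forces $k_1=0$).

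\emph{Main obstacle.} The hard part is the genuinely three-valued interior of the curve, where $0<u<1$ and both $k_0,k_1\geq 2$. There neither endpoint comparison applies directly. My first attempt would be to show that $\Phi_{k_0}$ is quasi-concave in $u$ on the Hessian-admissible range from Step 1, so that its minimum occurs at an endpoint of that range. I would check this by differentiating with respect to $u$ and using the stationarity relation to simplify the sign of $\Phi_{k_0}'$. If quasi-concavity fails, the fallback is to split the $u$-range at the point where the two candidate bounds cross and verify each piece as a separate one-variable inequality in $u$, with $d$ and $k_0$ as parameters. These inequalities are where the explicit calculus will have to be done.
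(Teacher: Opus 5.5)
Your proposal has a genuine gap: Step~2 is a plan rather than an argument. The case you flag as the main obstacle (the three-valued interior) is left to a conjectured quasi-concavity or an unspecified fallback. The monotonicity in $k_0$ at fixed $s_1/s_2$ is only asserted. The one concrete tool you invoke also fails in this range. Fact~\ref{fact:xp-subadd} needs exponent at most $1$, but $2\alpha\in(1,2)$, so $t\mapsto t^{2\alpha}$ is super-additive and grouping the small entries bounds $\|x\|_{2\alpha}^{2\alpha}$ from above, not below.

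The idea you are missing is the one the paper uses: apply sub-additivity of $t\mapsto t^{\alpha}$ to the \emph{squares} of the positive block. That is, $k_1s_1^{2\alpha}+s_2^{2\alpha}\ge (k_1s_1^2+s_2^2)^{\alpha}=(1-k_0s_0^2)^{\alpha}$ (Fact~\ref{fact:lp-monotonicity} with $2\alpha<2$). This discards $s_1/s_2$ entirely, so no interior ever has to be analyzed. With your own substitution $u=k_0s_0^2$:
\begin{itemize}
\item the objective is at least $k_0^{1-\alpha}u^\alpha+(1-u)^\alpha$;
\item the zero-sum constraint and Cauchy--Schwarz confine $u$ to $[1/(k_0+1),(d-k_0)/d]$;
\item strict concavity in $u$ sends you to the two endpoints;
\item the two resulting functions of $k_0\in[2,d-1]$ are handled by Lemmas~\ref{lemma:m2-left-min} and~\ref{lemma:m2-right-min}, the latter being where $d\ge 4$ enters.
\end{itemize}
This bound does not use minimality at all.

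Alternatively, your Step~1 already closes the case if you push it one line further, and this would be shorter than the paper's route. Multiplying the stationarity equations by $x_j$ and summing gives $\lambda_2=\alpha\|x\|_{2\alpha}^{2\alpha}>0$. So $g$ rises from $0$ to its maximum at $t^*$ and then decreases, and it is positive exactly on $(0,z)$, where its zero satisfies $z>t^*$. Negative coordinates satisfy $g(s_0)=-\lambda_1$ and positive ones satisfy $g(s)=\lambda_1$.
\begin{itemize}
\item Since $k_0\ge 2$, your Hessian test gives $s_0\le t^*$, hence $g(s_0)>0$ and $\lambda_1<0$.
\item Every positive coordinate is then the unique root of $g=\lambda_1<0$, which lies beyond $z>t^*$.
\item The same Hessian test along $e_i-e_j$ in the positive block then allows only one positive coordinate.
\end{itemize}
Hence $x=(-s_0,\dots,-s_0,(d-1)s_0)$ and $\|x\|_{2\alpha}^{2\alpha}=d^{-\alpha}((d-1)^\alpha+(d-1)^{1-\alpha})$. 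You get the equality case for free, and you do not need $d\ge 4$.

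In particular, the configuration $s_1<s_2$ forces $\lambda_1>0$, hence $s_0>z>t^*$, contradicting your own inequality $s_0\le t^*$. Your ``main obstacle'' is therefore empty.

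To make this rigorous you must state the second-order necessary condition on $\mathcal{F}_d$. The paper's Lemma~\ref{lemma:flat-lagrange} records only first-order conditions. The second-order condition holds because the objective is smooth when no coordinate vanishes and the constraint gradients $(1,\dots,1)$ and $2x$ are independent. You must also dispose of vectors with a zero coordinate via Lemma~\ref{lemma:flat-zero}.
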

\begin{proof}
    Since the form in Lemma~\ref{lemma:flat-zero} achieves $\|x\|_{2\alpha}^{2\alpha}=2^{1-\alpha}$, it suffices to consider $x$ with no zero coordinate. From Lemma~\ref{lemma:k2=1}, 
    \begin{align*}
        \|x\|_{2\alpha}^{2\alpha}\geq \inf \{k_0s_0^{2\alpha}+(d-k_0-1)s_1^{2\alpha}+s_2^{2\alpha}:k_0s_0=(d-k_0-1)s_1+s_2,\, k_0s_0^2+(d-k_0-1)s_1^2+s_2^2=1\}
    \end{align*}
    where $0< s_1\leq s_2$, $0< s_0$, and $2\leq k_0\leq d-1$. By Lemma~\ref{fact:lp-monotonicity} for $p=2\alpha<2=q$, 
    \begin{align*}
        \left((d-k_0-1)s_1^{2\alpha}+s_2^{2\alpha}\right)^{1/2\alpha}\geq \left((d-k_0-1)s_1^2+s_2^2\right)^{1/2}=(1-k_0s_0^2)^{1/2}
    \end{align*}
    Then, we can lower bound the objective as
    $$k_0s_0^{2\alpha}+(d-k_0-1)s_1^{2\alpha}+s_2^{2\alpha}\geq k_0s_0^{2\alpha}+(1-k_0s_0^2)^{\alpha}$$
    This motivates the substitution $u=k_0s_0^2> 0$ such that the right hand side is $k_0^{1-\alpha}u^{\alpha}+(1-u)^{\alpha}$. We now determine the feasible set of $u$. From the two constraints,
    \begin{align*}
        1-u=1-k_0s_0^2=(d-k_0-1)s_1^2+s_2^2\leq ((d-k_0-1)s_1+s_2)^2=(k_0s_0)^2=k_0u\Rightarrow u\geq \frac{1}{k_0+1}
    \end{align*}
    Further, by Cauchy-Schwarz,
    \begin{align*}
        k_0u&=((d-k_0-1)s_1+s_2)^2\leq (d-k_0)((d-k_0-1)s_1^2+s_2^2)=(d-k_0)(1-u)\Rightarrow u\leq \frac{d-k_0}{d}
    \end{align*}
    Thus, 
    \begin{align*}
         \|x\|_{2\alpha}^{2\alpha}&\geq \inf \{k_0s_0^{2\alpha}+(d-k_0-1)s_1^{2\alpha}+s_2^{2\alpha}:k_0s_0=(d-k_0-1)s_1+s_2,\, k_0s_0^2+(d-k_0-1)s_1^2+s_2^2=1\}\\
         &\geq \inf \left\{f(u;k_0):\frac{1}{k_0+1}\leq u\leq \frac{d-k_0}{d},\,2\leq k_0\leq d-1\right\}
    \end{align*}
    where $f(u;k_0)=k_0^{1-\alpha}u^\alpha+(1-u)^{\alpha}$. Now, 
    \begin{align*}
        f''(u;k_0)&= k_0^{1-\alpha}\alpha(\alpha-1)u^{\alpha-2}+\alpha(\alpha-1)(1-u)^{\alpha-2}<0
    \end{align*}
    on $u\in (0,1)$, since $\alpha < 1$. Thus, $f(u;k_0)$ is strictly concave on $(0,1)$, meaning
    \begin{align*}   
    \|x\|_{2\alpha}^{2\alpha}&\geq \inf \left\{f(u;k_0):\frac{1}{k_0+1}\leq u\leq \frac{d-k_0}{d},\,2\leq k_0\leq d-1\right\}\\&=\min_{2\leq k_0\leq d-1}\min \left \{f\left(\frac{1}{k_0+1};k_0\right),f\left(\frac{d-k_0}{d};k_0\right)\right\}
    \end{align*}
    We now directly calculate the minimum of each branch. For $t\in [2,d-1]$, 
    \begin{align*}
        f_l(t)&\equiv f\left(\frac{1}{t+1};t\right)=\frac{t^{\alpha}+t^{1-\alpha}}{(t+1)^{\alpha}}\\
        f_r(t)&\equiv f\left(\frac{d-t}{d};t\right)=d^{-\alpha}(t^{1-\alpha}(d-t)^\alpha+t^\alpha)
    \end{align*}
    Since $d\geq 3$ and $k_0\geq 2$, by Lemmas~\ref{lemma:m2-left-min} and \ref{lemma:m2-right-min}, we are done.
\end{proof}

\begin{lemma}\label{lemma:alpha1/2_k0=1}
    For $\alpha \in (1/2,1)$ and $d\geq 3$, any global minimizer with $k_2=1$ and $k_0=1$ satisfies
    $$\|x\|_{2\alpha}^{2\alpha}\geq \min\{2^{1-\alpha},d^{-\alpha}((d-1)^{\alpha}+(d-1)^{1-\alpha})\}$$
\end{lemma}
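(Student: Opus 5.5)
The family here is rigid enough to be a one-parameter problem. By Lemma~\ref{lemma:k2=1} with $k_0=1$, and setting aside the zero-coordinate case (which by Lemma~\ref{lemma:flat-zero} gives exactly $2^{1-\alpha}$), the minimizer has the form
$$x=(-s_0,s_1,\dots,s_1,s_2),$$
with $s_1$ repeated $d-2$ times and $0<s_1\le s_2$, $s_0>0$. The zero-sum constraint forces $s_0=(d-2)s_1+s_2$, and the objective is
$$s_0^{2\alpha}+(d-2)s_1^{2\alpha}+s_2^{2\alpha}.$$

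I would parametrize by the ratio $t\equiv s_1/s_2\in(0,1]$. Using homogeneity and the normalization $\|x\|_2=1$, the objective becomes
\begin{equation}
F_d(t)=\frac{((d-2)t+1)^{2\alpha}+(d-2)t^{2\alpha}+1}{\left(((d-2)t+1)^2+(d-2)t^2+1\right)^{\alpha}},\qquad t\in[0,1].
\end{equation}
I extend to the closed interval by continuity. The endpoint values are exactly the two competing constants:
\begin{itemize}
\item at $t=0$ we have $s_1=0$ and $s_0=s_2$, which is the pair family, and $F_d(0)=2/2^{\alpha}=2^{1-\alpha}$;
\item at $t=1$ all positive coordinates are equal, and $F_d(1)=\dfrac{(d-1)^{2\alpha}+d-1}{(d(d-1))^{\alpha}}=d^{-\alpha}\left((d-1)^{\alpha}+(d-1)^{1-\alpha}\right)$.
\end{itemize}
The claim therefore reduces to the scalar statement $\min_{t\in[0,1]}F_d(t)=\min\{F_d(0),F_d(1)\}$ for $\alpha\in(1/2,1)$ and $d\ge3$. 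This is the role I expect Lemma~\ref{lemma:m1-min} to play, and I would invoke it directly.

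If that lemma must be proved here, this is the main obstacle, because, unlike in Lemma~\ref{lemma:a1/2_k0>=2}, no concavity in a single variable is available for free. I would first try a Rolle-type argument to show that $F_d$ has at most one interior critical point and that $F_d$ increases near $t=0$. Then any interior critical point is a maximum, and the minimum sits at an endpoint.
\begin{itemize}
\item \emph{Behaviour near $t=0$.} The numerator has the term $(d-2)t^{2\alpha}$ with $2\alpha<2$. This dominates the $O(t)$ terms only when $2\alpha<1$, which is not our range, so increase near $0$ must come from the linear terms instead. They give a derivative proportional to $2\alpha(d-2)-2\alpha(d-2)=0$ at leading order, after which the $t^{2\alpha}$ term makes $F_d$ increase. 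This needs a careful expansion.
\item \emph{Counting critical points.} Write $N,D$ for the numerator and the base of the denominator, so $F_d=N/D^{\alpha}$. Setting $F_d'=0$ is equivalent to $N'/N=\alpha D'/D$. After substituting $u=t^{2\alpha-1}$-type variables, I would show that the difference of the two sides changes sign at most once, for example by checking monotonicity of the ratio $N'D/(ND')$.
\end{itemize}

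A fallback is to mimic Lemma~\ref{lemma:a1/2_k0>=2}. I would parametrize by $u=s_0^2\in[1/2,(d-1)/d]$ and try to lower-bound the positive-block term by a function of $u$ that is concave in $u$ and agrees with the true value at both ends. Concavity would then push the minimum to the endpoints $u=1/2$ and $u=(d-1)/d$. These correspond exactly to $t=0$ and $t=1$, giving the bound $\min\{2^{1-\alpha},d^{-\alpha}((d-1)^\alpha+(d-1)^{1-\alpha})\}$.
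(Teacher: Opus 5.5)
Your proposal is correct and is essentially the paper's argument. The paper parametrizes by the reciprocal ratio $s_2/s_1\geq 1$, so your $F_d(t)$ is exactly the paper's one-variable function evaluated at $1/t$, and it then simply invokes Lemma~\ref{lemma:m1-min} as you propose. Your compactification to $t\in[0,1]$ is a slightly cleaner way of saying that the $2^{1-\alpha}$ branch is attained only in the limit $s_1\to 0$. The sketched proof of that lemma and the concavity fallback are unnecessary; the fallback would in any case need a sharper minorant than the $\ell^p$ bound of Lemma~\ref{lemma:a1/2_k0>=2}, which at $k_0=1$ only gives $d^{-\alpha}((d-1)^{\alpha}+1)$ at the right endpoint.
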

\begin{proof}
    As in Lemma~\ref{lemma:a1/2_k0>=2}, it suffices to consider $x$ with no zero coordinate, meaning we are considering the form $(-s_0,s_1,\dots,s_1,s_2)$ such that $0< s_1\leq s_2$ and $s_0 > 0$. Writing $t\equiv s_2/s_1\geq 1$, the constraints become:
    \begin{align*}
       0&=-s_0+(d-2)s_1+s_2\Rightarrow s_0=(d-2+t)s_1\\
       1&=s_0^2 +(d-2)s_1^2+s_2^2=(d-2+t)^2s_1^2+(d-2)s_1^2+t^2s_1^2\Rightarrow s_1^2=\frac{1}{(d-2+t)^2 +d-2+t^2 }
    \end{align*}
    Then, the objective becomes
    \begin{align*}
        s_0^{2\alpha}+(d-2)s_1^{2\alpha}+s_2^{2\alpha}&=((d-2+t)^{2\alpha}+(d-2)+t^{2\alpha})s_1^{2\alpha}=\frac{(d-2+t)^{2\alpha}+(d-2)+t^{2\alpha}}{((d-2+t)^{2} +d-2+t^2 )^\alpha}\equiv f(t)
    \end{align*}
    Thus, $\|x\|_{2\alpha}^{2\alpha}\geq \min_{t\geq 1}f(t)$. By Lemma~\ref{lemma:m1-min}, this minimum is exactly the claim lower bound. 
\end{proof}

\begin{lemma}\label{lemma:alpha>1_k2=1_k0>=2}
For $\alpha>1$ and $d\geq 4$, any global maximizer with $k_2=1$ and $k_0\geq 2$ satisfies 
$$\|x\|_{2\alpha}^{2\alpha}\leq \max\{2^{1-\alpha},d^{-\alpha}((d-1)^{\alpha}+(d-1)^{1-\alpha})\}$$
\end{lemma}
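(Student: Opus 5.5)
We mirror the proof of Lemma~\ref{lemma:a1/2_k0>=2}, with the inequalities reversed. Since the form in Lemma~\ref{lemma:flat-zero} achieves $2^{1-\alpha}$, we may assume $x$ has no zero coordinate. By Lemma~\ref{lemma:k2=1}, $x=(-s_0,\dots,-s_0,s_1,\dots,s_1,s_2)$ with $k_0\geq 2$ copies of $-s_0$, $d-k_0-1$ copies of $s_1$, and $0<s_1\leq s_2$. The constraints are
\[
k_0s_0=(d-k_0-1)s_1+s_2,\qquad k_0s_0^2+(d-k_0-1)s_1^2+s_2^2=1.
\]

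For $\alpha>1$ we have $2\alpha>2$, so Lemma~\ref{fact:lp-monotonicity} (monotonicity of $\ell^p$ norms, now with $q=2\alpha>p=2$) gives
\[
(d-k_0-1)s_1^{2\alpha}+s_2^{2\alpha}\leq \bigl((d-k_0-1)s_1^2+s_2^2\bigr)^{\alpha}=(1-k_0s_0^2)^{\alpha}.
\]
Setting $u=k_0s_0^2$, the objective is therefore bounded above by $f(u;k_0)=k_0^{1-\alpha}u^{\alpha}+(1-u)^{\alpha}$. The feasible range $\frac{1}{k_0+1}\leq u\leq \frac{d-k_0}{d}$ is derived exactly as before: the lower bound from $\|\cdot\|_2\leq\|\cdot\|_1$ on the positive block, the upper bound from Cauchy--Schwarz. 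Neither derivation depends on $\alpha$.

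Now $f''(u;k_0)=\alpha(\alpha-1)\bigl(k_0^{1-\alpha}u^{\alpha-2}+(1-u)^{\alpha-2}\bigr)>0$ on $(0,1)$, so $f$ is strictly convex in $u$ and its supremum over the interval is attained at an endpoint. This reduces the bound to
\[
\|x\|_{2\alpha}^{2\alpha}\leq \max_{2\leq k_0\leq d-1}\max\{f_l(k_0),f_r(k_0)\},
\]
where
\[
f_l(t)=\frac{t^\alpha+t^{1-\alpha}}{(t+1)^\alpha},\qquad f_r(t)=d^{-\alpha}\bigl(t^{1-\alpha}(d-t)^\alpha+t^\alpha\bigr)
\]
are the same two branch functions as in Lemma~\ref{lemma:a1/2_k0>=2}.

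The main obstacle is the remaining one-variable problem: showing that for $\alpha>1$ and integer $t\in[2,d-1]$, both $f_l(t)$ and $f_r(t)$ are at most $\max\{2^{1-\alpha},d^{-\alpha}((d-1)^\alpha+(d-1)^{1-\alpha})\}$. This is the $\alpha>1$ analogue of Lemmas~\ref{lemma:m2-left-min} and \ref{lemma:m2-right-min}, presumably stated as companion lemmas there, and I would invoke them.

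To prove them directly, I would handle the two branches as follows.

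\textbf{Left branch.} I would show that $f_l$ is decreasing in $t$ for $\alpha>1$. After a logarithmic derivative, the sign reduces to comparing $t^{2\alpha-1}$ against a simple affine expression. Then $f_l(t)\leq f_l(2)=(2^\alpha+2^{1-\alpha})3^{-\alpha}$, which is $d^{-\alpha}((d-1)^\alpha+(d-1)^{1-\alpha})$ evaluated at $d=3$. It then remains to compare this with $2^{1-\alpha}$; I expect a direct check to give $f_l(2)\leq 2^{1-\alpha}$ for $\alpha\geq1$ (at $\alpha=2$ both sides equal $1/2$).

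\textbf{Right branch.} The endpoint $t=d-1$ gives $f_r(d-1)=d^{-\alpha}((d-1)^\alpha+(d-1)^{1-\alpha})$ exactly. I would argue that $f_r$ is quasi-convex on $[2,d-1]$ by studying the sign of $f_r'$, whose zeros satisfy a single transcendental equation with at most one root. This reduces the branch to comparing $f_r(2)$ and $f_r(d-1)$. For $f_r(2)$, I would bound it by $f_l(2)$ or $2^{1-\alpha}$ using $d\geq4$, for instance by showing $f_r(2)$ is monotone in $d$.
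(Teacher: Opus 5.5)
Your reduction to the two branch functions (the $\ell^p$-monotonicity step, the substitution $u=k_0s_0^2$, the range $\frac{1}{k_0+1}\le u\le\frac{d-k_0}{d}$, and convexity in $u$) is exactly the paper's. The paper then simply cites Lemmas~\ref{lemma:m2-left-min} and \ref{lemma:m2-right-min}, which already contain the $\alpha>1$ statements, so invoking them completes the proof.

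\textbf{Left branch.} The direct argument you sketch here is wrong. For $\alpha>1$, $f_l$ is not decreasing. One has $f_l'(t)=t^{-\alpha}(t+1)^{-\alpha-1}g(t)$ with $g(t)=\alpha t^{2\alpha-1}-(2\alpha-1)t-(\alpha-1)$. Since $g(1)=2-2\alpha<0$, $g'(t)=(2\alpha-1)(\alpha t^{2\alpha-2}-1)>0$ for $t\ge 1$, and $g\to+\infty$ (the power $t^{2\alpha-1}$ is superlinear), $f_l$ decreases and then increases. Concretely, at $\alpha=2$ you get $f_l(2)=\tfrac12<f_l(3)=\tfrac{7}{12}$, and $t=3$ lies in $[2,d-1]$ for every $d\ge 4$. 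Your follow-up check $f_l(2)\le 2^{1-\alpha}$ also fails for every $\alpha>2$; for example, at $\alpha=3$ one has $f_l(2)=\tfrac{11}{36}>\tfrac14$.

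The plan could not have worked for a structural reason. Writing $R(d)\equiv d^{-\alpha}((d-1)^\alpha+(d-1)^{1-\alpha})$, you have $f_l(t)=R(t+1)$ identically. In particular $f_l(d-1)=R(d)$ is the value of the genuine extremizer with $k_0=d-1$. So ``$f_l$ decreasing with $f_l(2)\le 2^{1-\alpha}$'' would force $R(d)\le 2^{1-\alpha}$ for all $d$, which is false (e.g.\ $R(4)=\tfrac{7}{12}>\tfrac12$ at $\alpha=2$).

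The correct argument uses the decreasing-then-increasing shape. Because of it, $f_l$ has no interior local maximum on $[1,d-1]$, so
\[
\max_{[2,d-1]}f_l\le\max\{f_l(1),f_l(d-1)\}=\max\{2^{1-\alpha},R(d)\}.
\]
This is why the paper states Lemma~\ref{lemma:m2-left-min} on $[1,d-1]$: the extra endpoint $f_l(1)=2^{1-\alpha}$ is what supplies the first term of the bound.

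\textbf{Right branch.} Your outline is salvageable, with two corrections. First, the paper simply computes $f_r''>0$ for $\alpha>1$, giving convexity directly rather than quasi-convexity via root counting. Second, your suggestion that $f_r(2)$ is monotone in $d$ is false: for $\alpha>1$ its $d$-derivative has the sign of $d-6$. The paper instead rewrites $f_r(2)\le 2^{1-\alpha}$ as $d^\alpha-(d-2)^\alpha\ge 2^{2\alpha-1}$. The left side is increasing in $d$ for $\alpha>1$, so it suffices to check $d=4$, where the inequality reduces to $2^\alpha\ge 2$.
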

\begin{proof}
We follow an analogous argument to Lemma~\ref{lemma:a1/2_k0>=2} and consider $x$ with no zero coordinate. By Lemma~\ref{fact:lp-monotonicity} for $p=2<2\alpha=q$, 
\begin{align*}
    ((d-k_0-1)s_1^{2\alpha}+s_2^{2\alpha})^{1/(2\alpha)}\leq ((d-k_0-1)s_1^2+s_2^2)^{1/2}=(1-k_0s_0^2)^{1/2}
\end{align*}
Then, by the same substitution $u=k_0s_0^2$ and an analogous argument, 
\begin{align*}
    \|x\|_{2\alpha}^{2\alpha}\leq \max_{2\leq k_0\leq d-1}\max \left\{f\left(\frac{1}{k_0+1};k_0\right),f\left(\frac{d-k_0}{d};k_0\right)\right\}
\end{align*}
where $f(u;k_0)=k_0^{1-\alpha}u^\alpha+(1-u)^\alpha$ as before. Then, by Lemma~\ref{lemma:m2-left-min} and Lemma~\ref{lemma:m2-right-min}, which is valid for $d\geq 4$ and $k_0\geq 2$, we have that these maximums give exactly the claimed upper bound. 
\end{proof}
\begin{lemma}\label{lemma:alpha>1_k2=1_k0=1}
   For $\alpha > 1$ and $d\geq 3$, any global maximizer with $k_2=1$ and $k_0=1$ satisfies 
$$\|x\|_{2\alpha}^{2\alpha}\leq \max\{2^{1-\alpha},d^{-\alpha}((d-1)^{\alpha}+(d-1)^{1-\alpha})\}$$
\end{lemma}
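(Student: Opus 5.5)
We mirror the proof of Lemma~\ref{lemma:alpha1/2_k0=1}, with the infimum replaced by a supremum. By Lemma~\ref{lemma:flat-zero}, any maximizer with a zero coordinate is the pair $(2^{-1/2},-2^{-1/2},0,\dots,0)$, whose value is $2^{1-\alpha}$. So it suffices to treat $x$ with no zero coordinate. By Lemma~\ref{lemma:k2=1} with $k_0=k_2=1$, such an $x$ has the form $(-s_0,s_1,\dots,s_1,s_2)$, with $s_1$ repeated $d-2$ times, $0<s_1\leq s_2$ and $s_0>0$.

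Set $t\equiv s_2/s_1\geq 1$. The zero-sum constraint gives $s_0=(d-2+t)s_1$. The unit-norm constraint then gives
$$s_1^2=\frac{1}{(d-2+t)^2+d-2+t^2}.$$
Substituting, the objective becomes the single-variable function
$$f(t)=\frac{(d-2+t)^{2\alpha}+(d-2)+t^{2\alpha}}{\left((d-2+t)^2+d-2+t^2\right)^{\alpha}},$$
which is the same function as in Lemma~\ref{lemma:alpha1/2_k0=1}, now with $\alpha>1$. Hence $\|x\|_{2\alpha}^{2\alpha}\leq \sup_{t\geq 1}f(t)$.

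It remains to show $\sup_{t\ge1}f(t)=\max\{2^{1-\alpha},d^{-\alpha}((d-1)^\alpha+(d-1)^{1-\alpha})\}$. We take this from the one-dimensional Lemma~\ref{lemma:m1-min}, which the paper announces as covering this family; we expect it to be stated for both regimes of $\alpha$, with min and max exchanged. The candidate values come from the following points.
\begin{itemize}
\item At $t=1$, $x$ is the flat-type vector $(-(d-1)s_1,s_1,\dots,s_1)$. Then $f(1)=\frac{(d-1)^{2\alpha}+(d-1)}{(d(d-1))^{\alpha}}=d^{-\alpha}((d-1)^\alpha+(d-1)^{1-\alpha})$.
\item As $t\to\infty$, $x$ tends to $(-2^{-1/2},0,\dots,0,2^{-1/2})$, so $f(t)\to 2^{1-\alpha}$.
\end{itemize}
The supremum must therefore be one of these two endpoint values, or else be attained at an interior critical point.

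The main obstacle is ruling out an interior critical point with a value above both endpoints, if Lemma~\ref{lemma:m1-min} does not already cover $\alpha>1$. In that case the plan is as follows.
\begin{itemize}
\item Differentiate $\ln f$ and clear denominators. The equation $f'(t)=0$ becomes a relation among $A=d-2+t$ and $t$ involving the powers $A^{2\alpha-1}$, $t^{2\alpha-1}$, $A^{2\alpha}$ and $t^{2\alpha}$.
\item Show that this relation, viewed as a function of $t$, changes sign at most once on $(1,\infty)$. Any single sign change must be from $-$ to $+$, so that $f$ first decreases and then increases, which means any interior critical point is a minimum. The tool is a convexity or Descartes-type argument on sums of power functions with exponent greater than $1$.
\end{itemize}
If this holds, $f$ is maximized at an endpoint, and the bound follows.
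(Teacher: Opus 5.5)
Your proposal is correct and follows the paper's proof essentially verbatim. You discard maximizers with a zero coordinate via Lemma~\ref{lemma:flat-zero}, substitute $t=s_2/s_1\geq 1$ to get the same one-variable function $f(t)$, and invoke Lemma~\ref{lemma:m1-min}, which is indeed stated for $\alpha>1$ (with $\max$ in place of $\min$ and the constant exceptional case $(d,\alpha)=(3,2)$), so your fallback plan for ruling out an interior maximum is not needed.
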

\begin{proof}
By an analogous argument to Lemma~\ref{lemma:alpha1/2_k0=1}, we can consider $x$ with no zero coordinate and substitute $t=s_2/s_1\geq 1$, yielding:
$$\|x\|_{2\alpha}^{2\alpha}\leq \max_{t\geq 1}\frac{(d-2+t)^{2\alpha}+(d-2)+t^{2\alpha}}{((d-2+t)^2+d-2+t^2)^\alpha}$$
By Lemma~\ref{lemma:m1-min}, this maximum is exactly the claimed upper bound. 
\end{proof}

\begin{lemma}
Theorem~\ref{thm:flat-main} holds for any $\alpha \in (1/2,1)\cup (1,\infty)$ and $d\geq 3$. 
\end{lemma}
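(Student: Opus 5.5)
The plan is to assemble the lemmas already established and then reconcile the resulting bound $\min\{\cdot,\cdot\}$ (resp. $\max\{\cdot,\cdot\}$) with the case definition of $M(d,\alpha)$.

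\textbf{Step 1: the bound.} Fix $d\geq 3$ and $\alpha\in(1/2,1)$. Since $\c F_d$ is compact and $\|x\|_{2\alpha}^{2\alpha}$ is continuous, a global minimizer $x$ exists, and we classify it.
\begin{itemize}
\item If $x$ has a zero coordinate, Lemma~\ref{lemma:flat-zero} gives $x=(2^{-1/2},-2^{-1/2},0,\dots,0)$ up to symmetry, with value $2^{1-\alpha}$.
\item Otherwise, Lemma~\ref{lemma:k2=1} puts $x$ in the form of Lemma~\ref{lemma:3v-fam} with $k_2=1$. We then split on $k_0$.
\begin{itemize}
\item If $k_0=1$, Lemma~\ref{lemma:alpha1/2_k0=1} applies for every $d\geq 3$.
\item If $k_0\geq 2$ and $d\geq 4$, Lemma~\ref{lemma:a1/2_k0>=2} applies.
\item If $k_0\geq 2$ and $d=3$, then $k_0=2$ and there is a single positive coordinate, so $k_1=0$. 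This is the same two-value shape as $k_0=1$ after negation, and is covered by Lemma~\ref{lemma:alpha1/2_k0=1} applied to $-x$.
\end{itemize}
\end{itemize}
In every case $\min_{\c F_d}\|x\|_{2\alpha}^{2\alpha}\geq \min\{2^{1-\alpha},\,d^{-\alpha}((d-1)^\alpha+(d-1)^{1-\alpha})\}$. Both quantities are attained, by the two explicit families, so equality holds.

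The case $\alpha>1$ is identical with max in place of min, using Lemmas~\ref{lemma:alpha>1_k2=1_k0>=2} and \ref{lemma:alpha>1_k2=1_k0=1}.

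\textbf{Step 2: identifying which branch wins.} This is the main point. Define
$$\phi(d)=d^{-\alpha}\big((d-1)^\alpha+(d-1)^{1-\alpha}\big)-2^{1-\alpha}.$$
We need to show:
\begin{itemize}
\item for $\alpha\in(1/2,1)$, $\phi(d)\geq 0$ exactly when $d\leq d(\alpha)$;
\item for $\alpha>1$, $\phi(d)\leq 0$ exactly when $d\leq d(\alpha)$.
\end{itemize}
This requires a sign-change analysis of $\phi$ as a function of real $d\geq 2$, with $\phi(2)=0$.

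As $d\to\infty$, $\phi(d)\to 1-2^{1-\alpha}$, which is negative for $\alpha<1$ and positive for $\alpha>1$. So $\phi$ eventually has the required sign past the largest root $d(\alpha)$.

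It remains to show $\phi$ has the opposite sign on $(2,d(\alpha))$, that is, $\phi$ has no roots in $(2,\infty)$ other than $d(\alpha)$ (counted appropriately). I would try to prove that $\phi$ changes monotonicity at most once: the substitution $u=d-1$ should turn $\phi'$, after clearing positive factors, into a difference of power functions with a single sign change. I expect this to be the main obstacle. If it fails, a fallback is to extract the branch ordering directly from Lemmas~\ref{lemma:m2-left-min}, \ref{lemma:m2-right-min}, \ref{lemma:m1-min}, which presumably already compare the two branches.

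\textbf{Step 3: uniqueness of extremizers.} Lemma~\ref{lemma:flat-zero}, Proposition~\ref{thm:eqv-unique}, and the strict concavity/convexity used in the reductions force any optimizer into one of the two families. The exception is $(d,\alpha)=(3,2)$, where the standard identity gives $\|x\|_4^4=\tfrac12$ for every $x\in\c F_3$, so every point is optimal. That case is stated separately.
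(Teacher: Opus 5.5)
Your proposal is correct and is essentially the paper's proof. It uses the same split into zero-coordinate extremizers and the $k_2=1$ family with $k_0=1$ or $k_0\geq 2$. Your negation of $(-s_0,-s_0,s_2)$ into the $k_0=1$, $t=1$ case is a valid substitute for the paper's direct $d=3$ computation. Your Step~2 is the paper's appendix Fact, and it is no real obstacle: after clearing positive factors, $\phi'$ has the sign of $\alpha u^{2\alpha-1}-(2\alpha-1)u+(1-\alpha)$, which is the function $g$ already shown monotone on $u\geq 1$ in Lemma~\ref{lemma:m2-left-min}.

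Your Step~3 is stated too loosely, though. What actually does the work is the equality case of Fact~\ref{fact:lp-monotonicity}, which forces a single positive coordinate when $k_0\geq 2$. When $k_0=1$, it is the equality clause of Lemma~\ref{lemma:m1-min}, with $t\to\infty$ discarded because $s_1>0$. Strict concavity as such is not what does it.
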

\begin{proof}
    For any $\alpha \in (1/2,1)\cup(1,\infty)$, by Lemma~\ref{lemma:k2=1}, any global extremum with no zero coordinate has the form:
    $$x=(-s_0,\dots,-s_0,s_1,\cdots,s_1,s_2)$$
    where $s_0\geq 0$ and $0\leq s_1\leq s_2$, where $s_i$ occurs $k_i$ times for $i\in \{0,1\}$. 
    
    For $d\geq 4$, by Lemma~\ref{lemma:alpha>1_k2=1_k0>=2} and \ref{lemma:alpha>1_k2=1_k0=1}, we are done for $\alpha > 1$, and by Lemmas~\ref{lemma:a1/2_k0>=2} and \ref{lemma:alpha1/2_k0=1}, we are done for $\alpha \in (1/2, 1)$ as well. Now, for $d=3$, since $k_2=1$, $k_0\in \{1,2\}$. If $k_0=1$, then Lemmas~\ref{lemma:alpha1/2_k0=1} and \ref{lemma:alpha>1_k2=1_k0=1} together finish the lower and upper bounds. If $k_0=2$, then $x=(-s_0,-s_0,s_2)$, where $s_2=2s_0$ from the zero sum constraint. Then, 
    \begin{align*}
        1&=\|x\|^2_2=6s_0^2 \\
        \|x\|_{2\alpha}^{2\alpha}&=(2+2^{2\alpha})s_0^{2\alpha}=(2+2^{2\alpha})6^{-\alpha}=3^{-\alpha}((3-1)^\alpha+(3-1)^{1-\alpha})
    \end{align*}
    which satisfies the claimed bound as well as the equality case for $d=3$ and $k_0=2$. 

    We now consider the equality case more fully. Since Lemma~\ref{lemma:flat-zero} handles any global extremum with a zero coordinate, consider a global extremum, $x$, with no zero coordinate that saturates the bound. For $d\geq 4$ and $k_0\geq 2$, the $\ell^p $-monotonicity from Lemma~\ref{fact:lp-monotonicity} used on the positive block in Lemmas~\ref{lemma:a1/2_k0>=2} and \ref{lemma:alpha>1_k2=1_k0>=2} must then be an equality:
    \begin{equation}
        (d-k_0-1)s_1^{2\alpha}+s_2^{2\alpha}= ((d-k_0-1)s_1^2+s_2^2)^\alpha
    \end{equation}
    which only occurs if there is at most one non-zero value in the positive block. Since $x$ has no zero coordinate, we must have the form $x=(-s_0,\dots,-s_0,s_2)$. Since $\|x\|_2=1$ and $S(x)=0$ imply $s_2=(d-1)s_0$ and $d(d-1)s_0^2=1$, the claimed non-zero value equality family in Theorem~\ref{thm:flat-main} follows. For $d\geq 3$ and $k_0=1$, Lemma~\ref{lemma:m1-min} used in both Lemmas~\ref{lemma:alpha1/2_k0=1} and \ref{lemma:alpha>1_k2=1_k0=1} states that equality occurs only for $t=1$ or $t\to\infty$ where $t=s_2/s_1$, unless $(d,\alpha)=(3,2)$ in which case all $t\geq 1$ saturate the bound. Discarding $t\to \infty$ as it corresponds to $s_1\to 0$ though $x$ has no zero coordinate, $t=1$ gives $x=(-s_0,\dots,-s_0,s_2)$ and thus the non-zero desired equality family. The exception $(d,\alpha)=(3,2)$ is derivative of the fact that every $x\in \R^3 $ with $S(x)=0$ and $\|x\|_2=1$ has $\|x\|_4^4=1/2$. We can see this with the following manipulation:
    \begin{equation}
        \begin{aligned}
            x_1x_2+x_2x_3+x_1x_3&=\frac{S(x)^2-\|x\|_2^2}{2}=-1/2 \\
            x_1^2x_2^2+x_2^2x_3^2+x_1^2x_3^2&=(x_1x_2+x_2x_3+x_1x_3)^2-2x_1x_2x_3S(x)=1/4\\
            \|x\|_4^4 &=\|x\|_2^4 -2(x_1^2x_2^2+x_2^2x_3^2+x_1^2x_3^2)=1/2
        \end{aligned}
    \end{equation}
    This proves every case of Theorem~\ref{thm:flat-main}.
\end{proof}

\section{Discussion}
By proving Theorems~\ref{thm:obtuse-59-67} and \ref{thm:flat-main}, we confirm the globally information-optimal measurement for obtuse and flat quantum pyramids, resolving the conjecture of \citep{englert2010well}. We take this as a further demonstration of how powerful the dual program of \citep{holevo2025quantum} can be, and we hope that leveraging this technique for other pure state ensembles will yield both proofs of optimality as well as non-trivial entropy inequalities. An enticing example is the pretty-good measurement \citep{peres1991optimal} for double trines \citep{decker2009symmetric}, but more importantly, analyzing the minimizers of the resulting entropy inequalities may provide insight into the information-optimal measurement itself as there are many families for which this is not known \citep{dall2014accessible,dall2014tight}. A success of this form for a symmetric mixed state ensemble would be spectacular.

 Regarding the tight $\ell^p$ conjecture for zero-sum vectors in Theorem~\ref{thm:flat-main}, we would like to acknowledge the independent proof of \citep{zhang2026proof}. While we also tackle the conjecture by analyzing minimizers and in fact arrive at the same family in Lemma~\ref{lemma:k2=1} as Lemma 10 of \citep{zhang2026proof}, we do so via the equal variables method which considerably expedites the derivation. Our subsequent parameterization and proof of the resulting scalar inequality is also different. In fact, we posit that the equal variables method may enable the study of the following family of optimization problems:
 \begin{equation}
     M_p(d,\alpha)=\begin{cases}
         \min_{x\in {\c F}^p_d}\|x\|_{p\alpha}^{p\alpha} & \alpha < 1\\
         \max_{x\in {\c F}^p_d}\|x\|_{p\alpha}^{p\alpha} & \alpha > 1
     \end{cases}
 \end{equation}
 where ${\c F}_d^p=\{x\in \R^d:\|x\|_p=1\text{ and }\sum_{j=1}^d x_j=0\}$ for $p>0$, with $p=2$ being Theorem~\ref{thm:flat-main}. We remark that $M_p(d,\alpha)$ resembles quantities that arise in the study of information geometry and escort distributions \citep{ohara2010dually} as well as non-linear $p$-log-Sobolev inequalities on the complete graph \citep{polyanskiy2019improved, gu2023non}, or particularly the modified log-Sobolev inequality for $p=1$ \citep{bobkov2006modified}. 
 
 This is all to say that while the quantum pyramids conjecture is resolved, the legacy of the quantum pyramid is far from over.

\section{Acknowledgements}
We would like to thank Prof. Alexander Holevo and Prof. Andrey Utkin for verifying our proof of Theorem~\ref{thm:flat-main} as well as Prof. Sitan Chen for being a wonderful advisor. ChatGPT Pro 5.4 contributed calculations used to prove the technical lemmas of the appendix and also found the equal variables method in the literature, all of which was checked by hand.

\bibliographystyle{alpha}
\bibliography{refs}

\clearpage
\appendix

\section{Technical Details}
\subsection{Key Algebraic Reciprocal Inequality for Section~\ref{sec:obtuse}}\label{x:obtuse}
We now prove the final algebraic inequality which jettisons minimizers from the three distinct value family for the entropy inequality in Theorem~\ref{thm:obtuse-59-67}.
\begin{lemma}\label{lemma:obtuse-reciprocal}
    Let $0<\alpha<x<1<\beta$ satisfy $\alpha e^\alpha=xe^{-x}=\beta e^{-\beta}$. Then, 
    \begin{equation}
        \frac{1}{x(1-x)}<\frac{1}{\alpha(1+\alpha)}+\frac{1}{\beta(\beta-1)}
    \end{equation}
\end{lemma}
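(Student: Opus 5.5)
The plan is to reduce the inequality to a one-variable statement in $x\in(0,1)$ and then bound the two "partner" roots $\alpha$ and $\beta$ separately. In Lambert-$W$ language, with $y\equiv xe^{-x}\in(0,1/e)$, we have $\alpha=W_0(y)$ and $\beta=-W_{-1}(-y)$, while $x=-W_0(-y)$. So $\alpha$ and $\beta$ are smooth functions of $x$ on $(0,1)$. First I split every reciprocal into partial fractions:
\begin{equation}
\frac{1}{x(1-x)}=\frac1x+\frac1{1-x},\quad \frac{1}{\alpha(1+\alpha)}=\frac1\alpha-\frac1{1+\alpha},\quad \frac{1}{\beta(\beta-1)}=\frac1{\beta-1}-\frac1\beta .
\end{equation}
The claim then becomes $A(x)+B(x)>0$, where
\begin{equation}
A(x)\equiv\frac1\alpha-\frac1{1+\alpha}-\frac1x,\qquad B(x)\equiv\frac1{\beta-1}-\frac1\beta-\frac1{1-x}.
\end{equation}
Here $A$ isolates the singularity at $x\to0$ (where $\alpha\sim x$) and $B$ isolates the singularity at $x\to1$ (where $\beta-1\sim 1-x$).

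Second, I parametrize each partner root by a ratio so that the defining relation becomes explicit. For $\beta$, set $u\equiv\beta/x>1$. Then $\beta e^{-\beta}=xe^{-x}$ gives $\ln u=x(u-1)$, so $x=\ln u/(u-1)$ and $\beta=u\ln u/(u-1)$. For $\alpha$, set $v\equiv\alpha/x\in(0,1)$. Then $\ln v=-x(1+v)$, so $x=-\ln v/(1+v)$. This is exactly the $(\sigma,t)$ structure of \eqref{eq:obtuse-sigma-t}. From these I extract elementary two-sided bounds:
\begin{itemize}
\item $xe^{-2x}<\alpha<xe^{-x}$, which follows from $\alpha=xe^{-x-\alpha}$ together with $0<\alpha<x$;
\item near $x=1$, with $\varepsilon=1-x$, expanding $te^{-t}$ about its maximum gives $\beta-1=\varepsilon+\tfrac23\varepsilon^2+O(\varepsilon^3)$, together with a global one-sided inequality of this form.
\end{itemize}
For global control I would also use the implicit derivatives
\begin{equation}
\alpha'=\frac{\alpha(1-x)}{x(1+\alpha)},\qquad \beta'=\frac{\beta(1-x)}{x(1-\beta)},
\end{equation}
to establish monotonicity of $A$ and $B$, or of suitable combinations of them, in $x$.

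Checking the endpoints shows where the difficulty lies. As $x\to1$, $A\to 1/\alpha_0-1/(1+\alpha_0)-1\approx1.8$ with $\alpha_0=W_0(1/e)$, while $B\to-2/3-1=-5/3$. The sum is positive, but only by a margin of about $0.15$. As $x\to0$, $\alpha=x-2x^2+O(x^3)$ gives $A\to1$, while $\beta\to\infty$ gives $B\to-1$. So $A+B\to0$, and the inequality is asymptotically tight there. Hence I would treat $x\in(0,x_1]$ by an asymptotic expansion. In this regime I write $\beta\approx\ln(1/x)+\ln\ln(1/x)$, and the next-order terms are roughly $-1/\beta$ from $B$ against an $O(x)$ correction from $A$. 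Since $1/\beta\sim 1/\ln(1/x)$ decays much more slowly than $x$, the leading correction to $B$ is $+1/\beta-1/\beta+1/\beta^2$ up to sign bookkeeping, and I expect $A+B\approx 1/\beta^2-cx>0$.

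This small-$x$ regime is the main obstacle. Turning the expansion into a rigorous bound valid on an explicit interval $(0,x_1]$ requires effective inequalities for $W_{-1}$. I would try the $u$-parametrization there: $\beta=u\ln u/(u-1)$ with $u\to\infty$ makes $B$ an explicit function of $u$, and then I bound $\alpha$ in terms of $u$ via $xe^{-2x}<\alpha<xe^{-x}$. On the complementary compact interval $[x_1,1)$, I would combine the bounds above with monotonicity from the derivative formulas. Failing a clean analytic argument, I would fall back on a rigorous interval-arithmetic check on a compact middle range, patched to the endpoint expansions.
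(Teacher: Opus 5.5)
Your reformulation, the $u,v$ parametrizations, the implicit derivatives and the endpoint limits are all correct. However, the proposal stops where the difficulty of the lemma lies: it never produces bounds on $\alpha$ and $\beta$ sharp enough to prove the inequality on any interval, and the concrete bounds you list cannot work.

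Because $\frac{1}{\alpha(1+\alpha)}$ and $\frac{1}{\beta(\beta-1)}$ are decreasing in $\alpha>0$ and $\beta>1$, lower-bounding the right-hand side requires \emph{upper} bounds on both. So $\alpha>xe^{-2x}$ is of no use, and $\alpha<xe^{-x}$ is much too weak. Near $x=0$ it only yields $\frac{1}{\alpha(1+\alpha)}>\frac1x+O(x)$. This discards the constant term of $\frac{1}{x(1-x)}=\frac1x+1+x+\cdots$, while the only remaining slack is $\frac{1}{\beta(\beta-1)}\to0$. As $x\to1$ it gives $e^2/(e+1)\approx1.99$ in place of $\frac{1}{\alpha_0(1+\alpha_0)}\approx2.81$, a loss of about $0.8$ against a total margin of about $0.14$. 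The same happens at $x=1/2$, where the margin is about $0.13$.

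The rest of the plan is a list of things to try rather than an argument: unspecified effective bounds for $W_{-1}$, monotonicity of unspecified combinations of $A$ and $B$, and interval arithmetic on a middle range. As written, nothing beyond the two limiting values is proved.

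The missing ingredient is a pair of explicit upper bounds that are tight at the relevant ends.

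For $\beta$, the paper observes that $\beta e^{-\beta}=xe^{-x}$ means $(\beta-x)/(\ln\beta-\ln x)=1$, i.e.\ the logarithmic mean of $x$ and $\beta$ is $1$. The mean inequality $L^3>G^2A$ then gives $x\beta(x+\beta)/2<1$, i.e.\ $\beta<\bar\beta(x)\equiv\frac{-x+\sqrt{x^2+8/x}}{2}$. This agrees with $\beta=1+\varepsilon+\frac23\varepsilon^2+\cdots$ (with $\varepsilon=1-x$) to second order.

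For $\alpha$, the paper proves $\alpha<\bar\alpha(x)\equiv x/(1+2x+x^3/2)$ by showing $x+\bar\alpha(x)-\ln(1+2x+x^3/2)>0$ from the sign of its derivative. This bound matches $\alpha=x-2x^2+4x^3-\frac{28}{3}x^4+\cdots$ through third order.

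Substituting both bounds leaves a single algebraic inequality in $x$ involving $\sqrt{x^3+8}$, which is settled by squaring and polynomial sign checks. This also removes the small-$x$ obstacle you anticipated. There $\bar\beta$ is crude, giving $\frac{1}{\bar\beta(\bar\beta-1)}\approx x/2$ instead of roughly $1/\ln^2(1/x)$. But since $\bar\alpha$ is exact through $x^3$, its error in $\frac{1}{\alpha(1+\alpha)}$ is only $O(x^2)$. So the slack $x/2$ already suffices, and no $W_{-1}$ asymptotics are needed.
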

\begin{proof}
    We proceed by first upper bounding $\alpha$ and $\beta$ in terms of $x$ and then proving the resulting single variable inequality. 

    We begin with $\beta$, observing that $xe^{-x}=\beta e^{-\beta}$ is equivalent to saying that the logarithmic mean of $x$ and $\beta$ is $1$. Recall that $L^3\geq G^2A$ where $L, G, A$ denote the logarithmic, geometric, and arithmetic mean, with equality attained only when the two inputs are equal \citep{leach1983extended}. Then, by the quadratic formula, 
    \begin{equation}\label{lemma:obtuse-alg-beta-ub}
        x\beta \frac{x+\beta}{2}<1\Rightarrow x\beta^2 +x^2\beta -2<0 \Rightarrow \beta < \frac{-x+\sqrt{x^2+8/x}}{2}\equiv B(x)
    \end{equation}
    We now continue to $\alpha$, and claim that:
   \begin{equation}\label{lemma:obtuse-alg-alpha-ub}
       \alpha < \frac{x}{1+2x+x^3/2}\equiv A(x)
   \end{equation} 
    Since $t\mapsto te^t $ is strictly increasing, it suffices to show that $A(x)e^{A(x)}>xe^{-x}$, or equivalently:
    \begin{equation}
        G(x)\equiv x+\frac{x}{1+2x+x^3/2}-\ln (1+2x+x^3/2)>0
    \end{equation}
    Differentiating, 
    \begin{equation}
        G'(x)=1+\frac{1+2x+x^3/2-x(2+3x^2/2)}{(1+2x+x^3/2)^2}-\frac{2+3x^2/2}{1+2x+x^3/2}=\frac{x^2(x-1)(x^3-2x^2+6x-10)}{(x^3+4x+2)^2}
    \end{equation}
    where one can verify the last equality by direct expansion. Now, consider the polynomial $p(x)\equiv x^3-2x^2+6x-10$, whose derivative is $p'(x)=3x^2-4x+6>3-4+6>0$ on $(0,1)$. Since $p(1)=-5$, $p(x)$ is negative on $(0,1)$, meaning $G'(x)>0$ on $(0,1)$. Since $G(0)=0$ as well, $G(x)>0$ on $(0,1)$, proving \eqref{lemma:obtuse-alg-alpha-ub}. Applying the upper bounds in \eqref{lemma:obtuse-alg-beta-ub} and \eqref{lemma:obtuse-alg-alpha-ub}, one can verify by direct expansion that:
    \begin{equation}\label{eq:obtuse-alg-ratx}
    \begin{aligned}
        \frac{1}{\alpha(1+\alpha)}+\frac{1}{\beta(\beta-1)}-\frac{1}{x(1-x)}&>\frac{1}{A(x)(1+A(x))}+\frac{1}{B(x)(B(x)-1)}-\frac{1}{x(1-x)}\\
        &=\frac{x^2(Q(x)-x^{3/2}\sqrt{x^3+8}\,P(x))}{x(x-1)(x^{3/2}-\sqrt{x^3+8})(x^{3/2}+2\sqrt x-\sqrt{x^3+8})(x^3+6x+2)}
    \end{aligned}
    \end{equation}
    where 
    \begin{equation}\label{eq:obtuse-alg-PQ}
        \begin{aligned}
        P(x)&\equiv x^5+7x^3+4x^2+10x+14\\
Q(x)&\equiv x^8+7x^6+8x^5+10x^4+42x^3+8x^2+40x-8
        \end{aligned}
    \end{equation}
    Analyzing the denominator of \eqref{eq:obtuse-alg-ratx}, the first and last terms are positive on $(0,1)$ while the second and third terms are negative. We also claim that the fourth is negative: 
    \begin{equation}
        x^{3/2}+2\sqrt x<\sqrt{x^3+8}\Leftrightarrow x^{3}+4x^2+4x<x^3+8\Leftrightarrow x(x+1)<2
    \end{equation}
    which is clearly true on $(0,1)$. Thus, the denominator of \eqref{eq:obtuse-alg-ratx} is negative, meaning it suffices to show that the numerator is negative, or equivalently:
    \begin{equation}\label{eq:obtuse-alg-PQ-ineq}
        Q(x)<x^{3/2}\sqrt{x^3+8}\, P(x)
    \end{equation}
    Now, since $P(x)>0$ from \eqref{eq:obtuse-alg-PQ}, if $Q(x)<0$, we are done since the numerator is immediately negative. Thus, suppose $Q(x)\geq 0$, meaning proving \eqref{eq:obtuse-alg-PQ-ineq} is equivalent to proving its square:
    \begin{equation}\label{eq:obtuse-alg-PQ-ineq2}
        0<x^3(x^3+8)P(x)^2-Q(x)^2=16(1-x)^2T(x)
    \end{equation}
    where $T(x)=x^6+2x^5+20x^3-24x^2+32x-4$, which can be verified by direct expansion. 
    From \eqref{eq:obtuse-alg-PQ}, $Q'(x)>0$ on $(0,1)$ and $Q(1/7)=-11502504/5764801<0$, meaning $x>1/7$. Then, 
    \begin{equation}
        T'(x)=6x^5+10x^4+60x^2-48x+32=6x^5+10x^4+60(x-2/5)^2+112/5>0
    \end{equation}
    meaning $T(x)>0$ on $(0,1)$, since $T(1/7)=16479/117649>0$ as well. This proves \eqref{eq:obtuse-alg-PQ-ineq2}, and thus the lemma.
\end{proof}
We can equivalently view the condition $\alpha e^\alpha=xe^{-x}=\beta e^{-\beta}$ for $0<\alpha<x<1<\beta$ as setting $\alpha=W_0(xe^{-x})$ and $\beta=-W_{-1}(-xe^{-x})$, where $W_0$ and $W_{-1}$ correspond to the principal real and lower real branch of the Lambert-$W$ function.

\subsection{Scalar Inequalities for Section~\ref{sec:flat}}\label{x:flat}
\begin{fact}\label{fact:xp-subadd}
    For $p\in (0,1]$, $f(x)=x^p$ is subadditive. 
\end{fact}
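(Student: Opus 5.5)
We need $(x+y)^p\le x^p+y^p$ for all $x,y\ge 0$ and $p\in(0,1]$. The plan is to dispose of the degenerate cases first and then reduce the inequality to a single-variable statement by homogeneity.

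If $x=0$ or $y=0$ the inequality is an equality, since $0^p=0$. The case $p=1$ is also an equality. So assume $x,y>0$ and $p\in(0,1)$.

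Divide both sides by $(x+y)^p$ and set $\theta\equiv x/(x+y)\in(0,1)$, so that $1-\theta=y/(x+y)$. The inequality becomes
\begin{equation}
    1\le \theta^p+(1-\theta)^p .
\end{equation}
For $\theta\in(0,1)$ and $p<1$ we have $\theta^p\ge\theta$, because $\theta^{p-1}\ge 1$ when $\theta\le 1$ and $p-1<0$. In the same way $(1-\theta)^p\ge 1-\theta$. Adding these two bounds gives $\theta^p+(1-\theta)^p\ge \theta+(1-\theta)=1$, which is the claim. Both bounds are in fact strict on $(0,1)$, so the inequality is strict whenever $x,y>0$ and $p<1$.

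An alternative route is concavity: $t\mapsto t^p$ is concave on $[0,\infty)$ and vanishes at $0$, so $t\mapsto t^{p-1}$ is nonincreasing. This gives $x(x+y)^{p-1}\le x^p$ and $y(x+y)^{p-1}\le y^p$, and summing the two recovers $(x+y)^p\le x^p+y^p$.

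The argument has no real obstacle. The only points needing care are the zero boundary cases, where the power $\theta^{p-1}$ is undefined, and these were handled separately at the start.

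Lemma~\ref{lemma:alpha<=1/2} applies this subadditivity to a sum of many terms. That version follows by induction on the number of summands, giving $\bigl(\sum_i u_i\bigr)^p\le \sum_i u_i^p$ for $u_i\ge 0$.
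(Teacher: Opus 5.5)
Your proof is correct, and it takes a slightly different, more elementary route than the paper.

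The paper normalizes by one variable, setting $t=y/x$, and proves $\phi(t)\equiv 1+t^p-(1+t)^p\ge 0$ by calculus. It uses $\phi'(t)=p\,(t^{p-1}-(1+t)^{p-1})\ge 0$ for $t>0$ together with $\phi(0)=0$. You instead normalize by the total $x+y$ and use the pointwise bound $\theta^p\ge\theta$ on $(0,1]$. That needs no differentiation or mean-value step, and it gives strictness for free. Your concavity variant is essentially the paper's argument with the integration of $\phi'$ replaced by a direct comparison of $t^{p-1}$ at $x$ and at $x+y$. So both proofs rest on the same fact, that $t\mapsto t^{p-1}$ is nonincreasing.

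The main payoff of your normalization is that it extends verbatim to any number of summands. With $\theta_i=u_i/\sum_j u_j$ one gets $\sum_i\theta_i^p\ge\sum_i\theta_i=1$. This is exactly the many-term form $S(u)^{2\alpha}\le\sum_i u_i^{2\alpha}$ used in Lemma~\ref{lemma:alpha<=1/2}, so the induction you mention is not even needed. The paper's ratio substitution is intrinsically two-variable, and it leaves that extension implicit. You also treat the zero boundary cases explicitly, which the paper's substitution $t=y/x$ passes over with a ``WLOG''.
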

\begin{proof}
    We desire to show that for $x,y\geq 0$, $(x+y)^p\leq x^p+y^p$. Taking $t=y/x\geq 0$, WLOG, it suffices to prove that $(1+t)^p\leq 1+t^p$. Let $f(t)=1+t^p-(1+t)^p$. Then,
    \begin{align*}
       f'(t)=p(t^{p-1} -(1+t)^{p-1})\geq 0
    \end{align*}
    since $t\mapsto t^{p-1}$ is a decreasing function for $p\in (0,1]$. Since $f(0)=0$, $f(t)\geq 0$ for all $t\geq 0$, and we are done.
\end{proof}
\begin{fact}\label{fact:lp-monotonicity}
   For $x \in \R^n $ and $0<p<q\leq \infty$, $\|x\|_q\leq \|x\|_p\leq n^{1/p-1/q}\|x\|_q$. 
\end{fact}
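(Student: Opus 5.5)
The plan is to deduce both inequalities in Fact~\ref{fact:lp-monotonicity} from homogeneity together with two elementary tools: pointwise domination for the left inequality, and Hölder's inequality (equivalently, Jensen's inequality) for the right one. Both sides of each inequality are homogeneous of degree one in $x$, and both vanish when $x=0$. We may therefore assume $x\neq 0$ and rescale. The case $q=\infty$ is handled separately at the end.

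\emph{Left inequality.} We normalize so that $\|x\|_p=1$. Then every coordinate satisfies $|x_j|\le 1$. Since $q>p$, this gives $|x_j|^q\le |x_j|^p$ for each $j$. Summing over $j$ yields $\|x\|_q^q\le \|x\|_p^p=1$, so $\|x\|_q\le 1=\|x\|_p$. For $q=\infty$ the claim is immediate, because $\max_j|x_j|^p\le\sum_j|x_j|^p$.

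\emph{Right inequality.} For $q<\infty$, we apply Hölder's inequality to $\sum_j |x_j|^p\cdot 1$ with conjugate exponents $q/p$ and $(1-p/q)^{-1}$. This gives
\begin{equation}
\sum_{j=1}^n |x_j|^p\le \Big(\sum_{j=1}^n |x_j|^q\Big)^{p/q} n^{1-p/q}.
\end{equation}
Taking $p$-th roots produces exactly the claimed bound $\|x\|_p\le n^{1/p-1/q}\|x\|_q$. An equivalent route is Jensen's inequality for the convex map $t\mapsto t^{q/p}$ applied to the uniform average of $|x_j|^p$. For $q=\infty$, we use $\sum_j|x_j|^p\le n\max_j|x_j|^p$, which gives $\|x\|_p\le n^{1/p}\|x\|_\infty$.

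No step here is a real obstacle. The only place requiring care is the use of this Fact in the equality analysis after Lemma~\ref{lemma:alpha>1_k2=1_k0>=2}. There we need equality in the left inequality to force at most one nonzero coordinate. This follows from the normalized argument: equality forces $|x_j|^q=|x_j|^p$ for every $j$, so each $|x_j|\in\{0,1\}$. Combined with $\|x\|_p=1$, exactly one coordinate is nonzero. I would record this equality case explicitly as part of the proof.
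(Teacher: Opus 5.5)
Your proof is correct and follows the paper's argument: H\"older with conjugate exponents $q/p$ and $q/(q-p)$ for the upper bound, and normalization to $\|x\|_p=1$ with $|x_j|^q\le |x_j|^p$ for the lower bound. Your separate treatment of $q=\infty$ and your equality case (equality in the left inequality forces at most one nonzero coordinate) fill in details the paper leaves implicit, and the equality case is exactly what the paper's final equality analysis in Section~\ref{sec:flat} relies on.
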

\begin{proof}
    By H\"older's inequality, since $q/p>1$ and $q/(q-p)>1$ are valid conjugate exponents,
    \begin{align*}
        \|x\|_p^p=\sum_{i=1}^{n}|x_i|^p\cdot 1\leq \left(\sum_{i=1}^{n}|x_i|^{p\cdot q/p}\right)^{p/q}\left(\sum_{i=1}^{n}1^{q/(q-p)}\right)^{(q-p)/q}=\|x\|_q^{p}n^{1-p/q}
    \end{align*}
    which proves the upper bound after taking $(\cdot)^{1/p}$ on both sides. For the lower bound, let $y=x/\|x\|_p$ such that $\|y\|_p=1$, meaning $|y_i|\leq 1$ for each $i\in [n]$. Then, since $p<q$, 
    \begin{align*}
        \|y\|_q^q&=\sum_{i=1}^{n}|y_i|^q\leq \sum_{i=1}^{n}|y_i|^p=\|y\|_p^p=1
    \end{align*}
    Thus, $\|x\|_q=\|x\|_p\|y\|_q\leq \|x\|_p$, proving the lower bound.
\end{proof}
\begin{fact}
    For $d\geq 2 $, if $\alpha \in (1/2,1)$, 
    $$\min\{2^{1-\alpha},d^{-\alpha}((d-1)^\alpha+(d-1)^{1-\alpha})\}=\begin{cases}
        2^{1-\alpha} & d\leq d(\alpha)\\
        d^{-\alpha}((d-1)^\alpha+(d-1)^{1-\alpha}) & d > d(\alpha)
    \end{cases}$$
    where $d(\alpha)$ is the largest root of $D(d)=d^{-\alpha}((d-1)^\alpha+(d-1)^{1-\alpha})-2^{1-\alpha}$. If $\alpha > 1$, 
    $$\max\{2^{1-\alpha},d^{-\alpha}((d-1)^\alpha+(d-1)^{1-\alpha})\}=\begin{cases}
        2^{1-\alpha} & d\leq d(\alpha)\\
        d^{-\alpha}((d-1)^\alpha+(d-1)^{1-\alpha}) & d > d(\alpha)
    \end{cases}$$
    
\end{fact}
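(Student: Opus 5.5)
The statement is equivalent to a sign pattern of $D(d)=d^{-\alpha}((d-1)^\alpha+(d-1)^{1-\alpha})-2^{1-\alpha}$ on $[2,\infty)$. For $\alpha\in(1/2,1)$ we need $D(d)\ge 0$ for $2\le d\le d(\alpha)$ and $D(d)<0$ for $d>d(\alpha)$. For $\alpha>1$ we need the reverse signs, $D\le 0$ then $D>0$. So the plan is to show that $D$ changes sign exactly once on $(2,\infty)$, plus once at the endpoint $d=2$ where $D(2)=0$.

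\textbf{Endpoint and asymptotic data.}
\begin{itemize}
\item At $d=2$ we have $D(2)=2^{-\alpha}\cdot 2-2^{1-\alpha}=0$.
\item As $d\to\infty$, $d^{-\alpha}(d-1)^\alpha\to 1$ and $d^{-\alpha}(d-1)^{1-\alpha}\sim d^{1-2\alpha}$.
\item For $\alpha\in(1/2,1)$ this gives $D(\infty)=1-2^{1-\alpha}<0$.
\item For $\alpha>1$ it gives $D(\infty)=1-2^{1-\alpha}>0$.
\end{itemize}

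Next I would compute $D'(2)$. It equals $2^{-\alpha}\bigl(\alpha+(1-\alpha)\bigr)-\alpha2^{-\alpha-1}\cdot 2=2^{-\alpha}(1-\alpha)$. This is positive for $\alpha<1$ and negative for $\alpha>1$. Hence $D$ leaves $0$ with the sign opposite to its limit at infinity, so an odd number of sign changes exists on $(2,\infty)$. The largest root $d(\alpha)$ is well defined and $D$ has the required sign for $d>d(\alpha)$.

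\textbf{Uniqueness of the interior root.} I will substitute $u=d-1\ge 1$ and write $D=0$ as
\[
h(u)\equiv \frac{u^\alpha+u^{1-\alpha}}{(u+1)^\alpha}=2^{1-\alpha}.
\]
Taking logs gives $\phi(u)=\ln(u^\alpha+u^{1-\alpha})-\alpha\ln(u+1)$. Its derivative is
\[
\phi'(u)=\frac{\alpha u^{\alpha-1}+(1-\alpha)u^{-\alpha}}{u^\alpha+u^{1-\alpha}}-\frac{\alpha}{u+1}.
\]
Putting this over a common denominator, its sign is that of
\[
N(u)=(u+1)\bigl(\alpha u^{2\alpha-1}+(1-\alpha)\bigr)-\alpha u\bigl(u^{2\alpha-1}+1\bigr)\cdot\tfrac{1}{1}\cdot u^{0},
\]
after multiplying through by $u^{\alpha}$. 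This simplifies to
\[
N(u)=\alpha u^{2\alpha-1}+(1-\alpha)+(1-2\alpha)u.
\]
I will then argue that $N$ has at most one zero on $(1,\infty)$:
\begin{itemize}
\item $N''(u)=\alpha(2\alpha-1)(2\alpha-2)u^{2\alpha-3}$ has a fixed sign for each $\alpha$, so $N$ is strictly concave or strictly convex.
\item $N(1)=2-2\alpha$.
\item For $\alpha\in(1/2,1)$ the linear term dominates and $N\to-\infty$. Together with $N(1)>0$ and concavity, $N$ has exactly one zero.
\item For $\alpha>1$ we get $N(1)<0$, $N$ is convex and $N\to+\infty$, so again there is exactly one zero.
\end{itemize}
Thus $\phi$ is unimodal on $[1,\infty)$: it increases then decreases for $\alpha<1$, and decreases then increases for $\alpha>1$. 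Combined with the endpoint data, $\phi-(1-\alpha)\ln 2$ vanishes at $u=1$, then moves away from zero, then crosses zero exactly once and keeps that sign. This gives exactly one root in $(2,\infty)$, which is $d(\alpha)$, and the claimed case split follows. Equality at $d=d(\alpha)$ is harmless for the min/max.

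\textbf{Main obstacle.} The step I expect to need the most care is the derivative simplification to $N(u)$, since the bookkeeping with the exponents $u^{\alpha-1},u^{-\alpha}$ is error-prone. I would verify it by also checking that $N(1)$ matches the $D'(2)$ sign computed above. If the clean form fails, I would fall back to showing that $\phi'$ has a single sign change via a Descartes-type rule for the generalized polynomial in powers of $u$, which has at most two sign changes in its coefficients.
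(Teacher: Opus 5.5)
Your proof is correct and is essentially the paper's argument in the reciprocal variable. The paper substitutes $t=(d-1)^{-1}\in(0,1]$ and shows that $(1+t)^{-\alpha}(1+t^{2\alpha-1})$ is unimodal with endpoint values $1$ and $2^{1-\alpha}$; your sign function satisfies $N(u)=-u^{2\alpha-1}g(1/u)$ for the paper's $g$, and $N$ is literally the function $g$ in the proof of Lemma~\ref{lemma:m2-left-min}. The remaining differences are cosmetic: you use concavity/convexity of $N$ where the paper uses monotonicity (both work, since $N'(u)=(2\alpha-1)(\alpha u^{2\alpha-2}-1)$ has a fixed sign on $u\ge 1$), and the stray factor $\tfrac{1}{1}\cdot u^{0}$ in your display of $N$ should simply be deleted.
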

\begin{proof}
    Let $L(\alpha)=2^{1-\alpha}$ and $R(\alpha,d)=d^{-\alpha}((d-1)^\alpha+(d-1)^{1-\alpha})$. Consider $d=2$. 
    \begin{align*}
        R(\alpha,2)&=2^{-\alpha}((2-1)^\alpha+(2-1)^{1-\alpha})=2^{1-\alpha}=L(\alpha)
    \end{align*}
    meaning $d=2$ is a root of $D(d)$. We proceed by substituting $t\equiv (d-1)^{-1}\in (0,1]$. Then, 
    \begin{align*}
        R(\alpha,d)&=(1+t^{-1})^{-\alpha}(t^{-\alpha}+t^{\alpha-1})=(1+t)^{-\alpha}(1+t^{2\alpha-1})\equiv f(t)
    \end{align*}
    Differentiating,
    \begin{align*}
        f'(t)&=-\alpha(1+t)^{-\alpha-1}(1+t^{2\alpha-1})+(2\alpha-1)t^{2\alpha-2}(1+t)^{-\alpha}
        =(1+t)^{-\alpha-1}g(t)
    \end{align*}
    where
    \begin{align*}
        g(t)&\equiv (\alpha-1)t^{2\alpha-1}+(2\alpha-1)t^{2\alpha-2}-\alpha\\
        g'(t)&=(\alpha-1)(2\alpha-1)t^{2\alpha-2}+(2\alpha-1)(2\alpha-2)t^{2\alpha-3}=(2\alpha-1)(\alpha-1)\underbrace{(t+2)t^{2\alpha-3}}_{>0}
    \end{align*}
    We now break into cases for $\alpha$. 

    Consider $\alpha \in (1/2,1)$. Then, $g'(t)<0$ on $(0,1]$. Further, $g(1)=2\alpha-2<0$ and
    \begin{align*}
        \lim_{t\to 0^+}g(t)&=\lim_{t\to0^+}\frac{2\alpha-1}{t^{2(1-\alpha)}}=\infty
    \end{align*}
    which all together mean that $g(t)$ has a unique zero at some $t^*\in (0,1)$, at which it changes from positive to negative. Since ${\rm sign}(f'(t))={\rm sign}(g(t))$, it follows that $f(t)$ has a unique maximum at $t=t^*$. Now, evaluating the boundary, for $\alpha \in (1/2,1)$, 
    \begin{align*}
        f(1)&=2^{1-\alpha}>1,\quad \lim_{t\to 0^+}f(t)=1
    \end{align*}
    Together with the fact that $f'(t)$ has a unique zero in $(0,1)$ which yields a maximum of $f(t)$, it follows that there exists a unique $t'\in (0,1)$ such that $f(t')=2^{1-\alpha}$, with $t'\leq t^*$. Further, $f'(t)>0$ on $(0,t^*)$ implies that $f(t)<f(t')=2^{1-\alpha}$ for $t<t'$. Finally, since $f(t')=f(1)=2^{1-\alpha}$ and $f'$ has a unique zero on $(t',1)$ that is a local maxima of $f(t)$, it follows that $f(t)\geq 2^{1-\alpha}$ for $t \in (t',1)$. Thus, for $t\in (0,1]$,
    $$\min\{2^{1-\alpha},f(t)\}=\begin{cases}
        f(t)& t < t'\\
        2^{1-\alpha} & t \geq  t'
    \end{cases}\Rightarrow \min\{2^{1-\alpha},R(\alpha,d)\}=\begin{cases}
        R(\alpha,d) & d > d(\alpha)\\
        2^{1-\alpha} & d \leq d(\alpha)
    \end{cases}$$
    where $d(\alpha)=1+1/t'$ is the unique root of $D(d)$ in $(2,\infty)$. Since $d=2$ is the only other relevant root, $d(\alpha)$ is the largest root. This shows the claim for $\alpha \in (1/2,1)$. 

    Next, consider $\alpha>1$. Then, $g'(t)>0$ on $(0,1]$, $g(1)=2\alpha-2>0$, and 
    \begin{align*}
        \lim_{t\to 0^+}g(t)&=-\alpha<0
    \end{align*}
    which all together mean that $g(t)$ has a unique zero at some $t^*\in (0,1)$, at which it changes from negative to positive. This corresponds to a unique minimum of $f(t)$ at $t=t^*$. Recall that $\lim_{t\to 0^+}f(t)=1$ and $f(1)=2^{1-\alpha}<1 $ for $\alpha > 1$. Together with the fact that $f'(t)$ has a unique zero on $(0,1)$ at which $f(t)$ is a local minimum, it follows that there exists a unique $t'\in (0,1)$ such that $f(t')=2^{1-\alpha}$, with $t'\leq t^*$. Since $f'(t)<0$ on $(0,t^*)$, we have that $f(t)>f(t')=2^{1-\alpha}$ for $t<t'$. Finally, since $f(t')=f(1)=2^{1-\alpha}$ and $f'(t)$ has a unique zero on $(t',1)$ that is a local minima of $f(t)$, it follows that $f(t)\leq 2^{1-\alpha}$ for $t\in (t',1)$. Thus, for $t\in (0,1]$,
     $$\max\{2^{1-\alpha},f(t)\}=\begin{cases}
        f(t)& t < t'\\
        2^{1-\alpha} & t \geq  t'
    \end{cases}\Rightarrow \max\{2^{1-\alpha},R(\alpha,d)\}=\begin{cases}
        R(\alpha,d) & d > d(\alpha)\\
        2^{1-\alpha} & d \leq d(\alpha)
    \end{cases}$$
    which proves the claim for $\alpha > 1$. Thus, we are done.
\end{proof}

\begin{lemma}[Rolle's Theorem]\label{lemma:rolle}
   If $f:[a,b]\to \R$ is continuous on $[a,b]$ and differentiable on $(a,b)$ with $f(a)=f(b)$, then there exists at least one $c\in (a,b)$ with $f'(c)=0$. This also implies that if $f$ has at least $n$ distinct roots $r_1< \cdots<r_n$, then $f'$ has at least $n-1$ distinct roots $r_1'<\cdots<r_{n-1}'$, where $r_i'\in (r_i,r_{i+1})$ for $i\in [n-1]$. 
\end{lemma}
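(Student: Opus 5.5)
The first part of the statement, Rolle's theorem itself, I will not reprove from scratch. I will quote it as the classical consequence of the extreme value theorem and Fermat's interior-extremum principle. Concretely: $f$ is continuous on the compact interval $[a,b]$, so it attains a maximum and a minimum there. If both are attained only at the endpoints, then since $f(a)=f(b)$ the function is constant, and any $c\in(a,b)$ works. Otherwise some extremum is attained at an interior point $c$. Because $f$ is differentiable at $c$, the one-sided difference quotients at $c$ have opposite signs (or vanish), and this forces $f'(c)=0$.

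The real content is the iterated statement about roots. Suppose $f$ has distinct roots $r_1<\cdots<r_n$. Implicitly we need $f$ continuous on $[r_1,r_n]$ and differentiable on $(r_1,r_n)$; this is how the lemma is used for the functions $g$ and $g'$ in Lemmas~\ref{lemma:obtuse-three} and \ref{lemma:3v-fam}. For each $i\in[n-1]$, apply the first part to the restriction of $f$ to $[r_i,r_{i+1}]$. The hypotheses hold because $[r_i,r_{i+1}]\subseteq[r_1,r_n]$ and $f(r_i)=f(r_{i+1})=0$. This produces $r_i'\in(r_i,r_{i+1})$ with $f'(r_i')=0$.

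Next I check that the $r_i'$ are distinct and strictly increasing. This follows because the open intervals $(r_i,r_{i+1})$ are pairwise disjoint and ordered: if $i<j$, then $r_i'<r_{i+1}\le r_j<r_j'$. Hence $f'$ has at least $n-1$ distinct roots with the stated interlacing property.

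There is essentially no obstacle. The only point needing care is to state the regularity assumption on the whole interval $[r_1,r_n]$ so that each subinterval application is legitimate, and to note that the root intervals are disjoint, which is what guarantees distinctness.
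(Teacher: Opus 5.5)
Your proposal is correct and takes the same approach as the paper: the paper also treats Rolle's theorem as a standard fact and gets the root statement by applying it on each $[r_i,r_{i+1}]$. The extreme-value/Fermat sketch, the regularity assumption on $[r_1,r_n]$, and the check that the disjoint intervals $(r_i,r_{i+1})$ make the $r_i'$ distinct and ordered are details the paper leaves implicit.
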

\begin{proof}
   The first statement is a standard fact from calculus. The second statement follows by applying Rolle's theorem on each interval $[r_i,r_{i+1}]$ for each $i\in [n-1]$. 
\end{proof}


\begin{lemma}\label{lemma:m2-right-min}
    Let $f(t)=d^{-\alpha}(t^{1-\alpha}(d-t)^{\alpha}+t^\alpha)$ for $d\geq 4$. Then, for $\alpha \in (1/2,1)$,
    $$\min_{t\in [2,d-1]}f(t)\geq \min\{2^{1-\alpha},d^{-\alpha}((d-1)^\alpha+(d-1)^{1-\alpha})\}$$
    while for $\alpha > 1$, 
    $$\max_{t\in [2,d-1]}f(t)\leq \max \{2^{1-\alpha},d^{-\alpha}((d-1)^\alpha+(d-1)^{1-\alpha})\}$$
\end{lemma}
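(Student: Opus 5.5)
The plan is to show that $f$ is concave on $[2,d-1]$ when $\alpha\in(1/2,1)$ and convex when $\alpha>1$. Then the minimum (resp. maximum) over the interval is attained at an endpoint, and it remains to compare the two endpoint values with $2^{1-\alpha}$ and $R\equiv d^{-\alpha}((d-1)^\alpha+(d-1)^{1-\alpha})$.

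\textbf{Step 1 (curvature).} Write $f(t)=d^{-\alpha}\bigl(\phi(t,d-t)+t^\alpha\bigr)$ with $\phi(u,v)\equiv u^{1-\alpha}v^\alpha$, which is positively homogeneous of degree one.
\begin{itemize}
\item For $\alpha\in(0,1)$, $\phi$ is a weighted geometric mean and hence jointly concave on $(0,\infty)^2$. Since $t\mapsto(t,d-t)$ is affine, $t\mapsto t^{1-\alpha}(d-t)^\alpha$ is concave. Together with the concavity of $t^\alpha$, this makes $f$ concave.
\item For $\alpha>1$, $\phi(u,v)=u\,(v/u)^\alpha$ is the perspective of the convex function $v\mapsto v^\alpha$, so it is jointly convex. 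Composed with the affine map and added to the convex $t^\alpha$, this makes $f$ convex.
\end{itemize}
If one prefers to avoid these facts, one can instead differentiate twice and check signs directly. Either way, the extremum over $[2,d-1]$ is $\min$ (resp. $\max$) of $f(2)$ and $f(d-1)$.

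\textbf{Step 2 (right endpoint).} Direct evaluation gives
\[
f(d-1)=d^{-\alpha}\bigl((d-1)^{1-\alpha}+(d-1)^\alpha\bigr)=R .
\]
This is trivially $\ge\min\{2^{1-\alpha},R\}$, and $\le\max\{2^{1-\alpha},R\}$.

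\textbf{Step 3 (left endpoint).} Here $f(2)=d^{-\alpha}\bigl(2^{1-\alpha}(d-2)^\alpha+2^\alpha\bigr)$, so
\[
\frac{f(2)}{2^{1-\alpha}}=d^{-\alpha}\bigl((d-2)^\alpha+2^{2\alpha-1}\bigr).
\]
Set $h(d)\equiv(d-2)^\alpha+2^{2\alpha-1}-d^\alpha$. Its derivative is $h'(d)=\alpha\bigl((d-2)^{\alpha-1}-d^{\alpha-1}\bigr)$.
\begin{itemize}
\item For $\alpha<1$, $h'>0$. At $d=4$ we have $h(4)=2^\alpha-4^\alpha/2=2^\alpha(1-2^{\alpha-1})>0$. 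Hence $h>0$ for all $d\ge4$, i.e. $f(2)>2^{1-\alpha}$.
\item For $\alpha>1$, $h'<0$ and $h(4)=2^\alpha(1-2^{\alpha-1})<0$. Hence $h<0$ for all $d\ge4$, i.e. $f(2)<2^{1-\alpha}$.
\end{itemize}
Combining Steps 1--3 gives both claimed inequalities.

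The only step requiring care is Step 1: verifying the concavity/convexity cleanly (via the geometric-mean/perspective argument, or via the second derivative $\alpha(\alpha-1)\,d^2\,t^{-\alpha-1}(d-t)^{\alpha-2}$ of the first term) and making sure it holds on the whole interval, including $d=4$ where the interval is $[2,3]$. The endpoint comparison is elementary, which is presumably why the hypothesis $d\ge4$ (giving $h(4)$ the right sign) is imposed.
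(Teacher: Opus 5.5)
Your proposal is correct and matches the paper's proof step for step: concavity (resp.\ convexity) of $f$ on $[2,d-1]$ reduces everything to the endpoints, $f(d-1)$ is exactly the right-hand branch, and $f(2)\gtrless 2^{1-\alpha}$ follows from the monotonicity of $d^\alpha-(d-2)^\alpha$ in $d$ together with the value at $d=4$. The differences are only cosmetic. You justify the curvature via the geometric-mean/perspective argument; your explicit second derivative $\alpha(\alpha-1)d^2t^{-\alpha-1}(d-t)^{\alpha-2}$ is also correct and more compact than the paper's expansion. You settle $d=4$ by the factorization $h(4)=2^\alpha(1-2^{\alpha-1})$, whereas the paper analyzes the auxiliary function $4^\alpha/2-2^\alpha$ in $\alpha$.
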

\begin{proof}
Taking derivatives,
\begin{align*}
    f'(t)&=d^{-\alpha}((1-\alpha)t^{-\alpha}(d-t)^\alpha -\alpha t^{1-\alpha}(d-t)^{\alpha-1} +\alpha t^{\alpha-1} )\\
    f''(t)&=d^{-\alpha}((1-\alpha)(-\alpha t^{-\alpha-1}(d-t)^\alpha-\alpha t^{-\alpha }(d-t)^{\alpha-1})\\
    &\hspace{2.6cm}-\alpha(1-\alpha)t^{-\alpha}(d-t)^{\alpha-1}+\alpha(\alpha-1)t^{1-\alpha}(d-t)^{\alpha-2}+\alpha(\alpha-1) t^{\alpha-2})\\
    &=-\alpha(1-\alpha)d^{-\alpha}\underbrace{(t^{-\alpha-1}(d-t)^\alpha +t^{-\alpha}(d-t)^{\alpha-1}+t^{-\alpha}(d-t)^{\alpha-1}+t^{1-\alpha}(d-t)^{\alpha-2}+t^{\alpha-2})}_{>0}
\end{align*}
on $t\in [2,d-1]$. Thus, $f''(t) < 0$ for $\alpha \in (1/2,1)$ and $f''(t)>0$ for $\alpha >1$. Thus, 
\begin{align*}
    \min_{t\in [2,d-1]}f(t)&=\min\{f(2),f(d-1)\}\quad (\alpha \in (1/2,1))\\
    \max_{t\in [2,d-1]}f(t)&=\max \{f(2),f(d-1)\}\quad (\alpha > 1)
\end{align*}
Now, notice that $f(d-1)=d^{-\alpha}((d-1)^{1-\alpha}+(d-1)^\alpha)$. Thus, it suffices to show that $f(2)\geq 2^{1-\alpha}$ for $\alpha \in (1/2,1)$ and $f(2)\leq 2^{1-\alpha}$ for $\alpha > 1$. 

We begin with the lower bound for $\alpha \in (1/2,1)$. Manipulating,
\begin{align*}
    2^{1-\alpha}&\leq f(2)=d^{-\alpha}(2^{1-\alpha}(d-2)^{\alpha}+2^{\alpha}) \\
    \Leftrightarrow d^\alpha -(d-2)^\alpha &\leq 2^{2\alpha -1}
\end{align*}
Now, consider $h(d)=d^\alpha-(d-2)^\alpha$. Then, 
\begin{align*}
    h'(d)=\alpha(d^{\alpha-1} -(d-2)^{\alpha-1})<0 
\end{align*}
since $t\mapsto t^{\alpha-1}$ is decreasing for $\alpha < 1$. This means that $\max_{d\geq 4}h(d)=h(4)=4^\alpha-2^\alpha $. Thus, it suffices to prove that $4^\alpha-2^\alpha \leq 2^{2\alpha-1}$, or equivalently $g(\alpha)=4^\alpha/2-2^\alpha \leq 0$. Now, for $\alpha \in (1/2,1)$,
\begin{align*}
    g'(\alpha)&=4^\alpha\ln 2 -2^\alpha\ln2=(4^\alpha-2^\alpha)\ln 2 > 0 
\end{align*}
Thus, $\max_{\alpha \in (1/2,1)}g(\alpha)=g(1)=4/2-2=0$, meaning $g(\alpha)\leq 0$ for all $\alpha \in (1/2,1)$, which proves the lower bound case. 

Now, consider the upper bound claim that $f(2)\leq 2^{1-\alpha}$ for $\alpha > 1$. By an analogous argument to before, it suffices to show that $h(d)\geq 2^{2\alpha-1}$. Now, $h'(d)=\alpha (d^{\alpha-1}-(d-2)^{\alpha-1})>0$ since $t\to t^{\alpha-1}$ is increasing for $\alpha > 1$. Then, $\min_{d\geq 4}h(d)=h(4)$. Thus, by a similar manipulation as before, it suffices to show that $g(\alpha)\geq 0$. Now, recall that $g'(\alpha)=(4^\alpha-2^\alpha )\ln 2> 0 $ for $\alpha > 1$ as well. Thus, $\min_{\alpha>1}g(\alpha)=g(1)=0$, proving the upper bound case. 
\end{proof}

\begin{lemma}\label{lemma:m2-left-min}
   Let $d\geq 3$ and $f(t)=\frac{t^{\alpha}+t^{1-\alpha}}{(t+1)^{\alpha}}$. Then, for $\alpha \in (1/2,1)$,
   $$\min_{t\in [1,d-1]}f(t)=\min\{2^{1-\alpha},d^{-\alpha}((d-1)^\alpha+(d-1)^{1-\alpha})\}$$
   while for $\alpha>1$, 
   $$\max_{t\in [1,d-1]}f(t)=\max\{2^{1-\alpha},d^{-\alpha}((d-1)^\alpha+(d-1)^{1-\alpha})\}$$
\end{lemma}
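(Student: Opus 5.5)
The plan is to recognize $f$ as the function already analyzed in the preceding Fact, evaluated along a reparameterization, and then read off the extremum over an interval from its unimodality. Note first that
$$f(t)=\frac{t^{\alpha}+t^{1-\alpha}}{(t+1)^{\alpha}}=R(\alpha,t+1),\qquad R(\alpha,d)\equiv d^{-\alpha}\bigl((d-1)^\alpha+(d-1)^{1-\alpha}\bigr),$$
so the endpoints give $f(1)=R(\alpha,2)=2^{1-\alpha}$ and $f(d-1)=R(\alpha,d)$. Hence the claim is equivalent to saying that the extremum of $f$ over $[1,d-1]$ (minimum if $\alpha<1$, maximum if $\alpha>1$) is attained at an endpoint.

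Next I would substitute $u\equiv 1/t\in[1/(d-1),1]$, exactly as in the proof of the preceding Fact. This gives $f(t)=(1+u)^{-\alpha}(1+u^{2\alpha-1})$, which is that Fact's function of its variable. Since $u\mapsto 1/u$ is a monotone bijection of the relevant intervals, the two problems have the same extremal values and the same endpoints.

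The Fact's derivative computation already gives $\operatorname{sign} f'(u)=\operatorname{sign} g(u)$ with $g$ strictly monotone on $(0,1]$, so $f$ has at most one critical point there.
\begin{itemize}
\item For $\alpha\in(1/2,1)$: $f$ increases and then decreases in $u$, with a unique interior maximum. On any subinterval $[1/(d-1),1]$, the restriction is either monotone or increases then decreases. In either case the minimum is attained at an endpoint, so it equals $\min\{f(1/(d-1)),f(1)\}=\min\{R(\alpha,d),2^{1-\alpha}\}$.
\item For $\alpha>1$: $f$ decreases and then increases in $u$, with a unique interior minimum. The same reasoning places the maximum at an endpoint, giving $\max\{R(\alpha,d),2^{1-\alpha}\}$.
\end{itemize}
The inequality in each direction is then an equality, because both endpoints lie in $[1,d-1]$.

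The only step requiring care is the endpoint argument. One must check that "unique critical point, which is a max" (respectively a min) persists when restricting to a subinterval that may not contain the critical point. This is immediate from the sign pattern of $g$: on the subinterval, $f'$ changes sign at most once, and only from $+$ to $-$ (respectively from $-$ to $+$). The degenerate case $d=2$ is excluded since $d\geq 3$; for $d=3$ the interval is $[1,2]$ and the argument applies verbatim.
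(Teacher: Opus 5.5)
Your argument is correct and is essentially the paper's: both evaluate the endpoints $f(1)=2^{1-\alpha}$ and $f(d-1)=d^{-\alpha}((d-1)^\alpha+(d-1)^{1-\alpha})$, then show that $f'$ changes sign at most once, and only in the direction that rules out an interior minimum (resp.\ maximum). The only difference is that you get this sign pattern by substituting $u=1/t$ and reusing the derivative analysis of the preceding Fact, whereas the paper differentiates directly in $t$ and studies $g(t)=\alpha t^{2\alpha-1}-(2\alpha-1)t+(1-\alpha)$ through $g(1)=2(1-\alpha)$ and the sign of $g'$.
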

\begin{proof}
Evaluating the boundary, 
\begin{align*}
    f(1)&
    =2^{1-\alpha},
    \quad f(d-1)
    =d^{-\alpha}((d-1)^\alpha+(d-1)^{1-\alpha})
\end{align*}
which are exactly the left and right terms in the claimed equalities. Thus, it suffices to show that $f(t)$ does not have a local minimum on $(1,d-1)$ for $\alpha \in (1/2,1)$ and does not have a local maximum on $(1,d-1)$ for $\alpha >1$.
Differentiating,
\begin{align*}
    f'(t)&=-\alpha (t+1)^{-\alpha-1}(t^\alpha+t^{1-\alpha})+(t+1)^{-\alpha}(\alpha t^{\alpha-1}+(1-\alpha)t^{-\alpha})
    =\underbrace{t^{-\alpha}(t+1)^{-\alpha-1}}_{>0}g(t)
\end{align*}
where $g(t)=\alpha t^{2\alpha-1}-(2\alpha-1)t+(1-\alpha)$. We now consider the sign of $g(t)$, which is also the sign of $f'(t)$. 
\begin{align*}
    g(1)&
    =2(1-\alpha)\\
    g'(t)&=\alpha(2\alpha-1)t^{2\alpha-2}-(2\alpha-1)= (2\alpha-1)(\alpha t^{2\alpha-2}-1)
\end{align*}
For $\alpha \in (1/2,1)$, $g(1)> 0$ and $g'(t)<0$ for $t\geq 1$, since $\alpha t^{2\alpha-2}\leq \alpha <1$. Then, $g$ is positive then negative on $t\geq 1$, as is $f'$, meaning $f$ does not have a local minimum. For $\alpha > 1$, $g(1)<0$ and $g'(t)>0$, since $\alpha t^{2\alpha -2}\geq \alpha > 1$. Then, $g$ is negative then positive on $t\geq 1$, as is $f'$, meaning $f$ does not have a local maximum. Thus, we are done. 
\end{proof}

\begin{lemma}\label{lemma:m1-min}
   For $t\geq 1$ and $d\geq 3$, define the following function:
   $$f(t)=\frac{(d-2+t)^{2\alpha}+(d-2)+t^{2\alpha}}{((d-2+t)^{2}+d-2+t^2)^\alpha}$$
   Then, for $\alpha \in (1/2,1)$,
   $$\min_{t\geq 1}f(t)=\min\{2^{1-\alpha},d^{-\alpha}((d-1)^\alpha+(d-1)^{1-\alpha})\}$$
   while for $\alpha > 1$, 
   $$\max_{t\geq 1}f(t)=\max\{2^{1-\alpha},d^{-\alpha}((d-1)^\alpha+(d-1)^{1-\alpha})\}$$
   where equality for the left and right branch occur only for $t=1$ and $t\to \infty$ respectively, unless $(d,\alpha)=(3,2)$ in which case equality occurs for all $t$. 
\end{lemma}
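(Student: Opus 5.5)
The plan is to compare $f$ with its two boundary values and then show that $f$ has no interior extremum of the wrong type. First I evaluate the endpoints. At $t=1$ we have $s_1=s_2$ and $x=(-s_0,s_1,\dots,s_1)$, which gives $f(1)=\frac{(d-1)^{2\alpha}+(d-1)}{(d(d-1))^{\alpha}}=d^{-\alpha}((d-1)^\alpha+(d-1)^{1-\alpha})$. As $t\to\infty$, numerator and denominator behave like $2t^{2\alpha}$ and $(2t^2)^\alpha$, so $f(t)\to 2^{1-\alpha}$. Hence it suffices to show that for $\alpha\in(1/2,1)$ the function $f$ has no interior local minimum on $(1,\infty)$ whose value lies below both endpoint values. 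Symmetrically, for $\alpha>1$ it has no interior local maximum above them. The cleanest target is the stronger statement that $f'$ changes sign at most once on $(1,\infty)$, from $+$ to $-$ when $\alpha<1$ and from $-$ to $+$ when $\alpha>1$; the value $f(\infty)$ then plays the same role as an endpoint.

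To carry this out, I write $a=d-2\geq 1$, $N(t)=(a+t)^{2\alpha}+a+t^{2\alpha}$ and $D(t)=(a+t)^2+a+t^2$. The sign of $f'$ is then the sign of
\begin{equation}
\Phi(t)\equiv D(t)N'(t)-\alpha N(t)D'(t).
\end{equation}
Here $N'=2\alpha((a+t)^{2\alpha-1}+t^{2\alpha-1})$ and $D'=2(a+2t)$. I factor out $2\alpha$. Next I substitute $u=a+t$ and $v=t$ and use the structure of the three-value point: $\Phi=0$ is exactly the Lagrange condition of Lemma~\ref{lemma:flat-lagrange} with the coordinates $-u,1^{(a)},v$ (scaled). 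By the root count in the proof of Lemma~\ref{lemma:3v-fam}, the coordinates $-u$, $1$, $v$ must be roots of one function $2\alpha t|t|^{2\alpha-2}-2\lambda_2t-\lambda_1$, with $\lambda_1,\lambda_2$ determined by any two of them. The alternative is to eliminate $\lambda_1,\lambda_2$ directly. This turns the critical-point equation into a single equation of the form $\psi(v)=\psi(1)$ for a one-variable function built from $h(y)=y^{2\alpha-1}$ via divided differences; the three points $-u,1,v$ lie on a common "line" $\lambda_1+2\lambda_2 y$ against $2\alpha\,\mathrm{sgn}(y)|y|^{2\alpha-1}$. The convexity/concavity analysis of $g$ in Lemma~\ref{lemma:3v-fam} then limits how many $v>1$ can satisfy it for fixed $a$. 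I would aim to show that at most one $t>1$ solves $\Phi(t)=0$. Combined with the sign of $\Phi$ near $t=1^+$ and as $t\to\infty$, this gives the single sign change.

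The main obstacle is this uniqueness of the interior critical point. The degenerate case $(d,\alpha)=(3,2)$, where $f$ is constant, shows that the argument must be tight. For it I would first try to write $\Phi(t)$ directly as
\begin{equation}
\Phi(t)=(\alpha-1)\,\Psi(t),
\end{equation}
with $\Psi$ of a fixed sign up to a factor with a single root, and show monotonicity of a ratio such as $\frac{(a+t)^{2\alpha-1}-t^{2\alpha-1}}{\dots}$ in $t$. For this I would use Rolle's theorem (Lemma~\ref{lemma:rolle}) to bound the number of zeros after repeatedly differentiating and dividing by positive powers, much as in Lemma~\ref{lemma:m2-left-min}. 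Once the sign pattern is established, the minimum (resp. maximum) is attained only at $t=1$ or in the limit $t\to\infty$. The equality characterization is then immediate except when $\Phi\equiv0$, which I would check occurs only for $a=1$ and $\alpha=2$, matching the identity $\|x\|_4^4=1/2$ on $\c F_3$.
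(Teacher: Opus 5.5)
Your overall frame matches the paper's. You compute $f(1)$ and $\lim_{t\to\infty}f(t)$, then show $f'$ has at most one sign change on $(1,\infty)$ with the right orientation. Your identification of critical points of $f$ with the full Lagrange condition is also correct, by symmetry of the block of $a=d-2$ equal coordinates. After eliminating $\lambda_1,\lambda_2$, it says that $(-u,\phi(-u))$, $(1,1)$, $(v,\phi(v))$ are collinear, where $\phi(y)=\operatorname{sgn}(y)|y|^{2\alpha-1}$ and $u=a+v$. This is exactly the paper's equation $h(t;n,\beta)=(1-t)(n+t)^\beta+(n+t+1)t^\beta-(n+2t)=0$ with $n=d-2$ and $\beta=2\alpha-1$.

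But the proposal stops where the lemma's content begins: you never prove that $h$ has at most one zero on $(1,\infty)$. The mechanism you cite does not apply. The root count in Lemma~\ref{lemma:3v-fam} bounds the solutions of $2\alpha y|y|^{2\alpha-2}=2\lambda_2 y+\lambda_1$ for a \emph{fixed} pair $(\lambda_1,\lambda_2)$. Here the line through the three points moves with $v$, and so does the point $-u=-(a+v)$. That count therefore places no restriction on how many $v>1$ produce a collinear triple, since each such triple is already compatible with ``at most three roots''. The $\psi(v)=\psi(1)$ reformulation, the factorization $\Phi=(\alpha-1)\Psi$, and the monotone ratio are announced but never exhibited.

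The endpoint step is also not as simple as combining the sign at $1^+$ with the sign at $\infty$. Since $h(1;n,\beta)=0$ for all $(n,\beta)$, you always have $f'(1)=0$. The sign near $1^+$ is then that of $h'(1;n,\beta)=\beta(n+2)-(n+1)^\beta-1$, which is not uniform. For $\beta>1$ it is negative when $n\ge 3$, but positive for $n=1,\ \beta\in(1,3)$ and for $n=2$ with $\beta$ slightly above $1$. Likewise $h(t)\sim(n+2-\beta n)t^\beta$ at infinity changes sign at $\beta=1+2/n$.

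The paper fills this gap as follows.
\begin{itemize}
\item With $x=1/t$ it writes $h(t)=t^{\beta+1}j(1/t)$, where $j(0)=j(1)=0$.
\item It writes $j''(x)=\beta x^{\beta-1}k(1/x)$ with $k=n(n+t)^{\beta-2}A-B$, where $A,B$ are linear in $t$.
\item It shows $k$ has at most one zero by proving the ratio $\ell=B/(An(n+t)^{\beta-2})$ strictly monotone on the relevant domain. Its log-derivative is $(\beta-1)m(t)/((n+t)AB)$ for an explicit quadratic $m>0$. This uses a negative discriminant for $\beta<1$, and a case split on $n(\beta-1)\lessgtr 2$ and $\beta\lessgtr 2$ for $\beta>1$, $n\ge 2$.
\item Two applications of Rolle's theorem then give at most one zero of $h$.
\item The orientation is fixed by $h\to-\infty$ for $\beta<1$. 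For $\beta>1$ it uses $h'(1)<0$ when $n\ge 3$ or $n=2,\ \beta\ge 2$, and $h\to+\infty$ when $n=2,\ \beta\in(1,2)$.
\item The case $n=1$, $\beta>1$ needs its own argument via $k'=(\beta-1)(p-2)$. This is where $h\equiv 0$ at $\beta=3$ emerges.
\end{itemize}
Your plan to check that the degeneracy occurs only for $(d,\alpha)=(3,2)$ depends on this same missing analysis. Without an argument of this kind, for instance a genuine proof that the collinearity condition is strictly monotone in $v$, the proposal does not establish the lemma.
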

\begin{proof}
Evaluating the boundary,
\begin{align*}
    f(1)&=\frac{(d-1)^{2\alpha}+d-1}{((d-1)^2+d-1)^\alpha}=\frac{(d-1)((d-1)^{2\alpha-1}+1)}{d^{\alpha}(d-1)^{\alpha}}=d^{-\alpha}((d-1)^{\alpha}+(d-1)^{1-\alpha})\\
    \lim_{t\to \infty}f(t)&=\lim_{t\to \infty}\frac{t^{2\alpha}+t^{2\alpha}}{(t^2 + t^2 )^\alpha}=2^{1-\alpha}
\end{align*}
which are exactly the left and right terms in the claimed equalities. Thus, it suffices to show that $f(t)$ does not have a local minimum on $(1,\infty)$ for $\alpha \in (1/2,1)$ and also does not have a local maximum on $(1,\infty)$ for $\alpha > 1$. As we will see shortly, we are able to do this with the single exception of $(d,\alpha)=(3,2)$, for which $f(t)$ is constant. Taking derivatives,
\begin{align*}
    f'(t)&=\frac{((d-2+t)^{2}+d-2+t^2)^{\alpha}2\alpha((d-2+t)^{2\alpha-1}+t^{2\alpha-1})}{((d-2+t)^{2}+d-2+t^2)^{2\alpha}}\\
    &\quad -\frac{((d-2+t)^{2\alpha}+(d-2)+t^{2\alpha})\alpha ((d-2+t)^{2}+d-2+t^2)^{\alpha-1}(2 (d-2+t)+2t)}{((d-2+t)^{2}+d-2+t^2)^{2\alpha}}\\
    &=2\alpha ((d-2+t)^{2}+d-2+t^2)^{-\alpha-1}g(t)
\end{align*}
where 
\begin{equation}\label{eq:m1-min-g-h}
\begin{aligned}
 g(t)&\equiv ((d-2+t)^{2}+(d-2)+t^2)((d-2+t)^{2\alpha-1}+t^{2\alpha-1})\\&
 \hspace{0.3in} -((d-2+t)^{2\alpha}+(d-2)+t^{2\alpha}) ( (d-2+t)+t)   \\
     &=(d-2)h(t;d-2,2\alpha-1)
\end{aligned}
\end{equation}
where $h(t;n,\beta)=(1-t)(n+t)^{\beta}+(n+t+1)t^{\beta}-(n+2t)$ with $n\geq 1$, $t\geq 1$, and $\beta \in (0,1)$ or $\beta>1$, depending on if $\alpha \in (1/2,1)$ or $\alpha>1$ respectively. The manipulation in \eqref{eq:m1-min-g-h} can be verified by direct expansion. 

Since $\alpha>0$ and $d\geq 3$, we clearly see that ${\rm sign}(f')={\rm sign}(h)$. We then proceed to analyze the sign of $h$ on $t\in [1,\infty)$. We first make the change of variables $x=1/t\in (0,1]$ such that $h(t)=t^{\beta+1}j(1/t)$ where
\begin{align*}
    j(x)\equiv 1+(n+1)x-(1-x)(1+nx)^\beta -(2+nx)x^\beta 
\end{align*}
and $j(0)=j(1)=0$. We desire to show that $j(x)$ has at most one zero on $x\in (0,1)$. Differentiating,
\begin{align*}
    j'(x)&=n+1+(1+nx)^\beta -\beta n(1-x)(1+nx)^{\beta-1} -nx^\beta -\beta(2+nx)x^{\beta-1} \\
    j''(x)&=\beta n(1+nx)^{\beta-1}+\beta n (1+nx)^{\beta-1} -\beta(\beta-1)n^2 (1-x)(1+nx)^{\beta-2} -n\beta x^{\beta-1}-n\beta x^{\beta-1}\\
    &\quad -\beta(\beta-1)(2+nx)x^{\beta-2}\\
    &=\beta x^{\beta-1}k(1/x)
\end{align*}
where
\begin{align*}
    k(t)&\equiv 2n(n+t)^{\beta-1}+(\beta-1)n^2 (1-t)(n+t)^{\beta-2}-(\beta-1)(n+2t)-2n\\
    &=n(n+t)^{\beta-2}A(t;n,\beta)-B(t;n,\beta)
\end{align*}
with $A(t;n,\beta)=(2-n(\beta-1))t+n(\beta+1)$ and $B(t;n,\beta)=2(\beta-1)t+n(\beta+1)$, all of which one can verify by direct expansion.
We first show that $j''(x)$ has at most one zero on $x\in (0,1)$ by showing that $k(t)$ has at most one zero on $t\in (1,\infty)$. Now, consider $k(t)=0$, or equivalently,
\begin{align*}
    1=\ell(t)\equiv \frac{B(t;n,\beta)}{A(t;n,\beta)n(n+t)^{\beta-2}}
\end{align*}
where we divide the two terms in $k(t)$. Then, 
\begin{align*}
    \p_t \ln \ell(t)&=\frac{2(\beta-1)}{2(\beta-1)t+n(\beta+1)} \\&\quad -\frac{n(\beta-2)(n+t)^{\beta-3}((2-n(\beta-1))t+n(\beta+1))+n(n+t)^{\beta-2}(2-n(\beta-1))}{n(n+t)^{\beta-2}((2-n(\beta-1))t+n(\beta+1))}\\
    &=\frac{(\beta-1)m(t)}{(n+t)A(t;n,\beta)B(t;n,\beta)}
\end{align*}
where 
\begin{equation}
    m(t)=2(\beta-2)(n(\beta-1)-2)\cdot t^2 +n(\beta+1)((n-2)(\beta-1)+2)\cdot t+n^2(\beta+1)(n+2-\beta)
\end{equation}
which one can verify by direct expansion.

We now split into cases for $\beta$. Consider $\beta\in (0,1)$. We claim that $k(t)$ has at most one zero on $(1,\infty)$. Since $t>1$, $n\geq 1$, and $\beta \in (0,1)$, 
\begin{align*}
   A(t;n,\beta)=\underbrace{(2-n(\beta-1))}_{\geq 0}t+n(\beta+1)>0 
\end{align*}
Then, if $B(t;n,\beta)\leq 0$, $k(t)>0$, meaning there are no zeros. Thus, instead consider 
$$t\in \mathcal{D}_B=\{t:B(t;n,\beta)>0\}$$
Now, notice that the coefficient of $t^2$ in $m(t)$, $2(\beta-2)(n(\beta-1)-2)$, is positive for $\beta\in (0,1)$. Further, one can check that the discriminant of the quadratic $m(t)$ factors:
\begin{align*}
    \Delta&\equiv (n(\beta+1)((n-2)(\beta-1)+2))^2-4\cdot 2(\beta-2)(n(\beta-1)-2)\cdot n^2(\beta+1)(n+2-\beta)\\
    &=n^2(\beta+1)(\beta-3)(\beta(n+2)-5n-4)(\beta(n+2)-n-4)
\end{align*}
Since $n\geq 1$ and $\beta \in (0,1)$, the first two factors are both positive while the rest are negative, meaning $\Delta < 0$. Thus, $m(t)>0$ everywhere. Then, for $t\in \mathcal{D}_B$, $A,B,m(t)>0$ and $\beta-1<0$, meaning $\ell(t)>0$ and $\p_t\ln \ell(t)<0$. Since $\p_t \ln \ell(t)=\ell'(t)/\ell(t)$, it follows that $\ell'(t)<0$, meaning there is at most $1$ value of $t$ where $\ell(t)=1$, or equivalently $k(t)=0$. 

Now, since $k(t)$ has at most one zero on $(1,\infty)$, $j''(x)$ has at most one zero on $(0,1)$. Now, suppose $j(x)$ had two distinct zeros $r_1,r_2\in (0,1)$ with $r_1<r_2$. Then, since $j(0)=j(1)=0$, by Rolle's theorem, $j'(x)$ must have zeros at some $r_1'\in (0,r_1),r_2'\in (r_1,r_2),r_3'\in (r_2,1)$. Applying Rolle's theorem again, $j''(x)$ must have a zero at some $r''_1\in (r_1',r_2')$ and $r''_2\in (r_2',r_3')$, which contradicts the fact that $j''(x)$ has at most one zero in $(0,1)$. Thus, $j(x)$ has at most one zero on $(0,1)$, meaning $h(t;n,\beta)$ has at most one zero on $(1,\infty)$. Then, since $\beta <1$, 
\begin{align*}
    \lim_{t\to \infty}h(t;n,\beta)=\lim_{t\to\infty}((1-t)(n+t)^\beta+(n+t+1)t^\beta -(n+2t))=\lim_{t\to \infty}(-2t+O(t^\beta))=-\infty
\end{align*}
meaning $h$ is either always negative or positive then negative, as are $g(t)$ and $f'(t)$. Thus, $f(t)$ is either always decreasing or increasing and then decreasing, meaning it does not have a local minimum for $t\in (1,\infty)$. This finishes the $\alpha\in (1/2,1)$, or equivalently $\beta \in (0,1)$, case. 

Now, consider $\beta > 1$, with $n\geq 2$ for now. We first desire to show that $k(t)$ has at most one zero on $(1,\infty)$.

Since $t>1, n\geq 2, $ and $\beta >1$, 
\begin{align*}
   B(t;n,\beta)=2(\beta-1)t+n(\beta+1)>0 
\end{align*}
Suppose $A\leq 0$. Then, $k(t)=n(n+t)^{\beta-2}A-B<0$, meaning $k(t)$ has no zero in this regime. Thus, instead consider
$$t \in \mathcal{D}_A=\{t:A(t;n,\beta)>0\}$$
on which we show that $m(t)>0$, by splitting into cases. 

Suppose $\beta\in (1,1+2/n]\subset (1,2] $ such that $n(\beta-1)\leq 2$. Then, since $n\geq 2$ as well,
\begin{align*}
    m(t)=2\underbrace{(\beta-2)}_{\leq 0}\underbrace{(n(\beta-1)-2)}_{\leq 0}\cdot t^2 +n(\beta+1)\underbrace{((n-2)(\beta-1)+2)}_{>0}\cdot t+n^2(\beta+1)\underbrace{(n+2-\beta)}_{>0}>0
\end{align*}
Now, suppose $n(\beta-1)> 2$, or equivalently $\beta >1+2/n$, and further suppose that $\beta \geq 2$. Then, 
\begin{align*}
    \mathcal{D}_A=\{t:A(t;n,\beta)=(2-n(\beta-1))t+n(\beta+1)>0\}=(1,t_A)
\end{align*}
where $t_A=\frac{n(\beta+1)}{n(\beta-1)-2}$. Then,
\begin{align*}
    m'(t)&=\underbrace{4(\beta-2)(n(\beta-1)-2)}_{\geq 0}+\underbrace{n(\beta+1)((n-2)(\beta-1)+2)}_{>0}>0\\
    m(1)&=2(\beta-2)(n(\beta-1)-2)+n(\beta+1)((n-2)(\beta-1)+2)+n^2(\beta+1)(n+2-\beta)\\
    &=(n+1)(\beta(n^2-4)+n^2+8)\\
    &>0
\end{align*}
where one can check that $m(1)$ factors as such. Then, $m(1)>0$ and $m'(t)>0$ for $t\in(1,t_A)$ imply that $m(t)>0$ for $t\in (1,t_A)$. 

Finally, suppose $n(\beta-1)>2$ and $\beta \in (1,2)$, meaning $n\geq 3$ and $\mathcal{D}_A=(1,t_A)$ as before. Then,
\begin{align*}
    m''(t)&=4(\beta-2)(n(\beta-1)-2)<0\\
    m(1)&=(n+1)(\beta(n^2-4)+n^2+8)>0\\
    m(t_A)&=\frac{2(\beta-2)n^2(\beta+1)^2}{n(\beta-1)-2}+\frac{n^2(\beta+1)^2((n-2)(\beta-1)+2)}{n(\beta-1)-2}+n^2(\beta+1)(n+2-\beta)\\
    &=\frac{n^2(\beta+1)(n+1)}{n(\beta-1)-2}(2(\beta-1)+\underbrace{n(\beta-1)-2}_{>0})\\
    &>0
\end{align*}
which one can verify by direct computation.
Then, since $m''(t)<0$ on $(1,t_A)$, $m(t)\geq \min\{m(1),m(t_A)\}>0$ for any $t \in \mathcal{D}_A=(1,t_A)$. Thus, in all cases, $m(t)>0$. Then, for $t\in \mathcal{D_A}$, $A,B,m(t),(\beta-1)>0$, meaning 
$$\p_t \ln \ell(t)=\frac{(\beta-1)m(t)}{((n+t)AB)}>0,\quad \ell(t)=\frac{B}{An(n+t)^{\beta-2}}>0\Rightarrow \ell'(t)=\ell(t)\p_t\ln \ell(t)>0$$ 
which means that $\ell(t)=1\Leftrightarrow k(t)=0$ has at most one solution for $t\in \mathcal{D}_A$. Thus, $k(t)$ has at most one zero on $(1,\infty)$. By the same argument via Rolle's theorem as before, $h(t)$ has at most one zero on $(1,\infty)$. We next claim that $h(t)$ can not be positive then negative. 

Suppose $n\geq 3$. For $t=1$,
\begin{align*}
    h(1;n,\beta)&=(1-t)(n+t)^\beta +(n+t+1)t^\beta-(n+2t)\bigg|_{t=1}=0\\
    h'(1;n,\beta)&=-(n+t)^\beta +\beta(1-t)(n+t)^{\beta-1}+t^\beta + \beta (n+t+1) t^{\beta-1}-2\bigg|_{t=1}\\
    &=\beta(n+2)-(n+1)^\beta-1\\
    h'(1;n,1)&=0\\
    \p_\beta h'(1;n,1)&=n+2-\ln(n+1)(n+1)^\beta \bigg|_{\beta=1}\\&=n(1-\ln(n+1))+(2-\ln(n+1))\\
    &\leq (1-\ln 4)n+(2-\ln 4)\\
    &<0\\
    \p_\beta^2 h'(1;n,\beta)&=-\ln^2(n+1)(n+1)^\beta \\&< 0 
\end{align*}
Then, since $\p_\beta^2 h'(1;n,\beta)<0$, $\p_\beta h'(1;n,1)<0$, and $h'(1;n,1)=0$, It follows that $h'(1;n,\beta)<0$ for all $\beta > 1$. Then, since $h(1;n,\beta)=0$, there exists $t'$ such that $h(t;n,\beta)<0$ for all $t\in (1,t')$. Thus, $h(t)$ starts negative and thus can not start positive then negative. 

Next, suppose $n=2$. Suppose further that $\beta \geq 2$. Then, 
\begin{align*}
    h'(1;2,\beta)&=\beta(n+2)-(n+1)^\beta -1 \bigg|_{n=2}=4\beta-3^\beta -1<0
\end{align*}
since $3^\beta > 4\beta $ for $\beta \geq 2$. Then, since $h(1;n,\beta)=0$, as argued before, $h(t;2,\beta)$ starts negative and thus can not start positive and then negative. 

Finally, suppose $n=2$ and $\beta \in (1,2)$. Then, 
\begin{align*}
    \lim_{t\to \infty}h(t;2,\beta)&=\lim_{t\to \infty}((1-t)(2+t)^\beta +(t+3)t^\beta -(2+2t))\\
    &=\lim_{t\to \infty}(t^\beta ((1-t)(1+2/t)^\beta +t+3))-O(t))\\
    &=\lim_{t\to \infty}(t^\beta ((1-t)(1+2\beta/t+O(t^{-2})) +t+3))-O(t))\\
    &=\lim_{t\to \infty}(t^\beta (4-2\beta )+O(t))\\
    &=\infty
\end{align*}
meaning $h(t;2,\beta)$ ends positive, and thus can not be positive and then negative. 

Thus, $h(t)$ has at most one zero on $(1,\infty)$ and can not be positive and then negative. Thus, $h(t)$ is either always positive or negative and then positive, as is $g(t),f'(t)$, meaning $f(t)$ does not have a local maximum on $(1,\infty)$. This finishes the $\beta > 1$ case for $n\geq 2$. 

The final case is $\beta > 1$, with $n=1$. Then, one can verify by direct computation that:
\begin{align*}
    k(t)&=2n(n+t)^{\beta -1}+(\beta-1)n^2 (1-t)(n+t)^{\beta-2}-(\beta-1)(n+2t)-2n\bigg|_{n=1}\\
    &=(1+t)^{\beta-2}((3-\beta)t+\beta+1)-(2(\beta-1)t+\beta+1)\\
    k'(t)&=(\beta-2)(1+t)^{\beta-3}((3-\beta)t+\beta+1)+(3-\beta)(1+t)^{\beta-2}-2(\beta-1)\\
    &=(\beta-1)(p(t)-2)
\end{align*}
where $p(t)\equiv (1+t)^{\beta-3}((3-\beta)t+\beta-1)$. Then,
\begin{align*}
    p'(t)&=(\beta-3)(1+t)^{\beta-4}((3-\beta)t+\beta-1)+(3-\beta)(1+t)^{\beta-3}\\
    &=-(\beta-3)(\beta-2)(t-1)(1+t)^{\beta-4}
\end{align*}
which one can also verify by direct computation.
We now show that $h(t)$ is either strictly positive or strictly negative on $(1,\infty)$ via casework on $\beta$. 

Suppose $\beta \in (1,2]$. Then, $p'(t)\leq 0$, since $t> 1$, meaning $k''(t)=(\beta-1)p'(t)\leq 0$. Then,
\begin{align*}
    p(1)&=2^{\beta-3}((3-\beta)+\beta-1)=2^{\beta -2 }<2\Rightarrow k'(1)=(\beta-1)(p(1)-2)<0
\end{align*}
Thus, $k'(t)<0$ for $t\geq 1$. Then,
\begin{align*}
    k(1;\beta )&=2^{\beta-2}((3-\beta)+\beta+1)-(2(\beta-1)+\beta+1)=2^\beta-3\beta+1\\
    k(1;1)&=0\\
    \p_\beta k(1;\beta)&=\ln(2)2^{\beta }-3
\end{align*}
Now, $\p_\beta k(1;1)=0$ has solution $\beta^*\equiv \log_2(3/\ln(2))\approx 2.11>2$, so for $\beta \in (1,2]$, $\p_\beta k(1;1)<0$, 
meaning $k(1;\beta)<0$ for all $\beta \in (1,2]$. Then, since $k'(t)<0$ as well, $k(t;\beta)<0$, meaning $j''(x)<0$. Since $j(0)=j(1)=0$ from before, this implies $j(x)>0$ for $x\in (0,1)$, meaning $h(t)>0$ for $t\in (1,\infty)$, as is $g(t),f'(t)$, meaning $f(t)$ can not have a local maximum on $(1,\infty)$. 

Next, suppose $\beta\in (2,3)$. Then, $p'(t)>0$ for $t>1$, $p(1)=2^{\beta-2}<2$ still, and 
$$\lim_{t\to \infty}p(t)=\lim_{t\to \infty}(1+t)^{\beta-3}((3-\beta)t+\beta-1)=\infty$$
Then, $k'(t)=(\beta-1)(p(t)-2)$ has $k''(t)>0$, $k'(1)<0$, and $\lim_{t\to \infty}k'(t)>0$. This means that $k'(t)$ changes sign exactly once, from negative to positive. Now, consider $k(1;\beta)=2^\beta-3\beta +1$, which we looked at in the previous case. Recall that $\beta^*=\log_2(3/\ln(2))$ is the only extrema. 
\begin{align*}
    k(1;3)&=k(1;1)=0\\
    k(1,\beta^*)&=3/\ln(2)-3\log_2(3/\ln(2))+1\approx -1.01<0
\end{align*}
Now, $\p_\beta k(1;\beta)=\ln(2)2^\beta-3$ must be negative for $\beta < \beta^*$ and positive for $\beta > \beta^*$. Then, since $k(1,\beta^*)<0$, since $k(1;3)=k(1;1)=0$, $k(1;\beta)<0$ for all $\beta \in (1,3)$. Then, since $k(1)<0$ and $k(t)$ decreases then increases, $k(t)$ has at most one zero on $(1,\infty)$. Then, by the Rolle's theorem argument from before, $h(t)$ has at most one zero on $(1,\infty)$. Then,
\begin{align*}
    h(1;1,\beta)&=(1-t)(1+t)^\beta +(t+2)t^\beta -(1+2t)\bigg|_{t=1}=0\\
    h'(1;1,\beta)&=-(1+t)^\beta +\beta(1-t)(1+t)^{\beta-1}+t^\beta +\beta (t+2)t^{\beta-1}-2\bigg|_{t=1}\\
    &=-2^\beta +3\beta-1\\
    &>0
\end{align*}
where we use the fact that we previously showed that $k(1;\beta)=2^\beta -3\beta + 1<0$ for $\beta > 1$. Further,
\begin{align*}
    \lim_{t\to \infty}h(t;1,\beta)&=\lim_{t\to \infty}((1-t)t^\beta(1+\beta/t+O(t^{-2}))+t^{\beta+1}+2t^{\beta}-1-2t)\\
    &=\lim_{t\to \infty}(t^{\beta}(-t+1-\beta +O(t^{-1}))+t^{\beta+1}+2t^{\beta}-1-2t)\\
    &=\lim_{t\to \infty}((3-\beta)t^{\beta}+O(t^{\beta-1}))\\
    &=\infty 
\end{align*}
Since $h(1)=0$, $\lim_{t\to \infty}h(t)=\infty$, $h'(1)>0$, and $h(t)$ has at most one zero on $(1,\infty)$, it follows that $h(t)$ is always positive on $(1,\infty)$. 

Next, suppose $\beta=3$. Then,
\begin{align*}
    h(t;1,3)&=(1-t)(1+t)^3 +(t+2)t^3 -(1+2t)\\
    &=1+3t+3t^2+t^3 -t-3t^2-3t^3 -t^4 +t^4 +2t^3 -1-2t\\
    &=0
\end{align*}
meaning $g(t)=f'(t)=0$, which implies that $f(t)$ can not have a local maximum on $(1,\infty)$. 

Finally, suppose $\beta>3$, and again we analyze the zeros of $k(t)$. Then, 
\begin{align*}
    k(t)=0\Leftrightarrow (1+t)^{\beta-2}((3-\beta)t+\beta+1)=2(\beta-1)t+\beta+1
\end{align*}
Observing that the right hand side and $(1+t)^{\beta-2}$ are both positive, $(3-\beta)t+\beta+1>0$ as well for the root, or equivalently $t\in (1,t')$ where $t'=(\beta+1)/(\beta-3)$. Now, for $\beta > 3$ and $t>1$, $p'(t)<0$, and recall $p(1)=2^{\beta-2}>2$. Further,
\begin{align*}
    p(t')&=\left(1+\frac{\beta+1}{\beta-3}\right)^{\beta-3}\left((3-\beta)\frac{\beta+1}{\beta-3}+\beta-1\right)=-2 \left(1+\frac{\beta+1}{\beta-3}\right)^{\beta-3}<0
\end{align*}
Thus, regarding $k'(t)=(\beta-1)(p(t)-2)$, $k'(1)>0$, $k'(t')<0$, and $k''(t)<0$ for $t\in (1,\infty)$. Thus, $k'(t)$ on $(1,t')$ changes sign exactly once, from positive to negative. Now, consider $k(1;\beta)=2^{\beta}-3\beta +1$. As seen before, $\p_\beta k(1;\beta)>0$ for $\beta > \beta^*\approx 2.11$ and $k(1;3)=0$. Thus, $k(1;\beta)>0$. Then, since $k(1)>0$ and $k(t)$ is increasing then decreasing on $(1,t')$, $k(t)$ has at most one zero in $(1,t')$, and $(1,\infty)$ for that matter, as does $h(t)$ by the Rolle's theorem argument. Then, recall that $h(1;1,\beta)=0$, and
$$h'(1;1,\beta)=-2^{\beta}+3\beta -1=-k(1;\beta)<0$$
From the previous case, we also have that $\lim_{t\to \infty}h(t;1,\beta)=\lim_{t\to \infty}((3-\beta)t^{\beta}+O(t^{\beta-1}))=-\infty$. Thus, if $h(1)=0$, $h'(1)<0$, $\lim_{t\to \infty}h(t)=-\infty$, and $h(t)$ has at most one zero on $(1,\infty)$, it follows that $h(t)<0$ on $(1,\infty)$.

To summarize, we have that for $\beta > 1$ and $n=1$, $h(t)$ is either strictly positive or strictly negative on $(1,\infty)$ as long as $\beta \neq 3$. Then, this is also the case for $g(t)$ and $f'(t)$, meaning $f(t)$ is monotone and thus does not admit local maxima in $(1,\infty)$. The only exception is $\beta =3$, for which $f'(t)=0$. This corresponds to the exceptional equality case for $(d,\alpha)=(3,2)$.
\end{proof}

\end{document}